\def \R {{\mathbb R}}
\def \N {{\mathbb N}}
\def \Z {{\mathbb Z}}
\def \buv {{\bar u, \bar v}}
\def \hi {{\hat i}}
\def \hj {{\hat j}}
\def \hk {{\hat k}}
\def \U  {\mathbb U}
\def \ve {\varepsilon}
\def \eqdef {\doteq}
\def \bu {\bar u}
\def \bv {\bar v}
\def \bp {\bar p}
\def \H {\mathcal H}
\def \S {\mathcal S}
\def \ku {\kappa_1}
\def \kd {\kappa_2}
\newcommand {\beqn} {\begin{equation*}}
	\newcommand {\eeqn}	{\end{equation*}}
\newcommand {\beq} {\begin{equation}}
	\newcommand {\eeq}	{\end{equation}}
\def \ve {\varepsilon}
\newcommand{\scp}[1]{{\langle #1\rangle}}
\newcommand{\un}[1]{\underline #1}
\newcommand{\x}[1]{x^{(#1)}}
\newcommand{\G}[1]{\Gamma^{(#1)}}
\newcommand{\uu}[1]{{\bar u}^{(#1)}}
\newcommand{\vv}[1]{{\bar v}^{(#1)}}
\newtheorem{theorem}{Theorem}[section]
\newtheorem{lemma}[theorem]{Lemma}
\newtheorem{definition}[theorem]{Definition}
\newtheorem{proposition}[theorem]{Proposition}
\newtheorem{remark}[theorem]{Remark}
\numberwithin{equation}{section}
\title[Keplerian billiards in three dimensions]{Keplerian billiards in three dimensions: stability of equilibrium orbits and conditions for chaos}
\author{Irene De Blasi}
\address{University of Turin, Department of Mathematics, via Carlo Alberto 10, Turin, Italy}
\thanks{\noindent Work partially supported by INdAM groups G.N.A.M.P.A.\\
	Supported by Italian Research Center on High Performance Computing Big Data and Quantum
	Computing (ICSC), project funded by European Union - NextGenerationEU - and National
	Recovery and Resilience Plan (NRRP) - Mission 4 Component 2 within the activities of Spoke 3
	(Astrophysics and Cosmos Observations)}
\keywords{billiards, refraction,reflection, Kepler problem, symbolic dynamics,
	topological chaos, linear stability}
\subjclass[2020] {
		34C28, 
		70F15, 
		37C83, 
		70F16, 
}
\begin{document}
	
	\maketitle
	\baselineskip=18pt	
	\begin{abstract}
		This work presents some results regarding three-dimensional billiards having a non-constant potential of Keplerian type inside a regular domain $D\subset \R^3$. Two models will be analysed: in the first one, only an inner Keplerian potential is present, and every time the particle encounters the boundary of $D$ is reflected back by keeping constant its tangential component to $\partial D$. The second model is a refractive billiard, where the inner Keplerian potential is coupled with a harmonic outer one; in this case, the interaction with $\partial D$ results in a generalised refraction Snell's law. In both cases, the analysis of a particular type of straight equilibrium trajectories, called \emph{homothetic}, is carried on, and their presence is linked to the topological chaoticity of the dynamics for large inner energies.
	\end{abstract}
	\section{Introduction}\label{sec:intro}	
	Three-dimensional billiards, especially in their classical Birkhoff formulation (a ball moving freely inside a convex and closed domain of $\R^3$), have inspired a vast literature since the end of XX century (see \cite{Tabbook} for an extensive summary on the subject). The main problems that have been studied are the presence of periodic trajectories (see for example \cite{babenko1992periodic, farber2002periodic}) and the existence of caustics (as in \cite{berger1995seules}), along with the analysis of equilibrium trajectories (\cite{wojtkowski1990linearly}) and the possible arising of chaotic behaviour (\cite{papenbrock2000numerical,bunimovich1997nowhere}). Besides classical models, some generalisations have also been considered, as in the case of dual billiards (see \cite{tabachnikov2003three}) or more complex systems, like quantum Sinai models (\cite{primack2000quantum}). Moreover, an interesting analogy between outer billiards and planetary motion has been highlighted by Moser (\cite{moser2001stable,moser1978solar}).  \\
	In general dynamical systems, the passage to higher dimensions is usually associated to the arising of a richer dynamics, sometimes not present otherwise: in the case of billiards, it implies that the first return map associated to the system passes from being two-dimensional to act on a four-dimensional phase space. In principle, this could lead to new interesting phenomena, as for example Arnold diffusion (\cite{arnold2014stability}), usually prevented in two dimensions, for example by the presence of invariant curves, such as KAM tori (\cite{moser1962invariant}). \\
	The aim of this work is to introduce the study of three dimensional billiards when the dynamics inside a domain, called $D$, is determined by a Keplerian attractive potential, generated by a mass internal to $D$ itself. To do this, we will consider two different models, which differ from each others by the interaction of the particle with $\partial D$. \\
	\begin{figure}
		\centering
		\includegraphics[width=.2\linewidth]{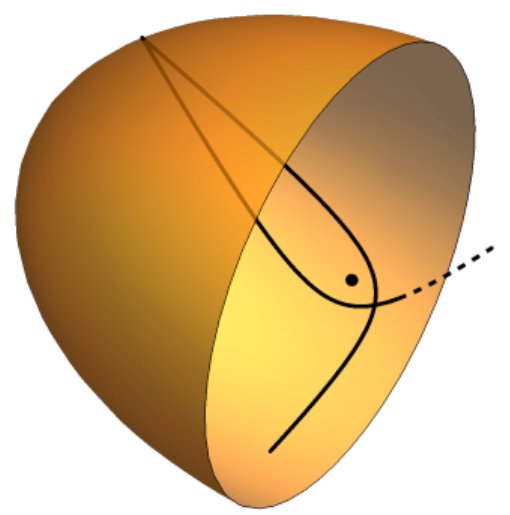}\qquad\qquad\qquad 
		\includegraphics[width=.2\linewidth]{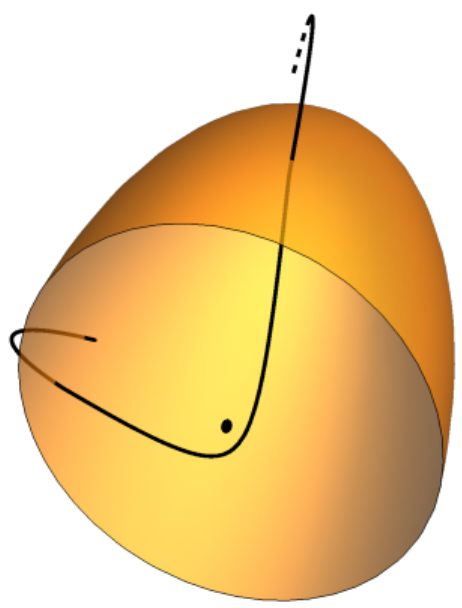}
		\caption{Reflective and refractive billiards in three dimensions; for the reader's convenience, only one half of the boundary $\partial D$ is drawn. Left: reflective case; the billiard trajectory is composed only by inner Keplerian arcs, which are reflected on the boundary. Right: refractive case. The particle enters and exits alternatively from the domain $D$, in a concatenation of outer harmonic and inner Keplerian arcs, connected by a refraction rule which deflects the velocity at every passage trough $\partial D$. }
		\label{fig:modelli_intro}
	\end{figure}
	In the first case, called, in accordance to the known literature, \textit{Kepler billiard}, whenever the test particle encounters the boundary of $D$ it is reflected back by an elastic bounce that keeps constant its velocity's tangent component (see Figure \ref{fig:modelli_intro}, left). The physical interest of such model resides in its possible applications to Celestial Mechanics, as well as in the construction of optical devices. It is for example the case of the motion of a Celestial body which, while attracted by a central mass, is confined inside a region in the space bounded  by a dense outer ring of particles or dust, acting as a wall. A scattering billiard between narrow discs around planetary bodies is proposed by Benet (see \cite{merlo2007strands}), who highlights the strong relation between the model and the Restricted Three Body Problem (\cite{benet1996}), setting up the basis for a study in higher dimensions as well (\cite{benet2005}). Already in 1977, Hitzl and Hénon (\cite{hitzl1977critical}) pointed out the relation between this class of billiards and the Restricted Three Body Problem: they linked the presence of second species periodic solutions,  when the central mass tends to zero, to critical points corresponding to consecutively collisional orbits, formed by Keplerian arcs, in a narrow disc around the central planet. As for the optical applications, it is worth mentioning the behaviour of ultracold neutrons inside the so-called \textit{neutron microscope} (\cite{steyerl1988neutron,masalovich2014remarkable}), which has the property of being reflected against suitable walls, while subjected to gravitational forces. \\
	In the second model presented, the particle is no longer confined inside the domain, and is free to exit undergoing a \textit{refraction} governed by a generalization of the Snell's law. Once outside, the particle is pushed back by a harmonic restoring force, leading to an infinite concatenation of inner Keplerian arcs and outer harmonic ones (see Figure \ref{fig:modelli_intro}, right). We will refer to this second model as the \textit{refraction galactic billiard}, due to its connection with Celestial Mechanics models.  In particular, it is inspired by  \cite{Delis20152448}, where the motion of a particle inside an elliptic galaxy having a central mass is taken into consideration: in this case, the billiard can be interpreted as the \textit{region of influence} of the central body, outside which the gravitational attraction of the whole elliptical distribution of matter prevails. Let us highlight that, apart from the current application to galactic motion, refraction billiards can be useful to model a quite vast class of physical problems, whenever the particle moves under the influence of local forces or different regimes, depending on its position, are present. In particular, an interesting application of refraction rules for random combinatorial billiards can be found in \cite{adams2024toric, defant2024random}.  \\
	In two dimensions, billiard models with non-constant potentials have been the subject of many different papers (\cite{fedorov2001ellipsoidal, pustovoitov2021topological,kobtsev2020elliptic, lerman2021whispering}), investigating for example the possible integrability of the model (\cite{dragovic1997integrable}); more generally, models that also involve an outer dynamics, extending the notion of interaction with the billiard's boundary from a simple reflection, are considered in \cite{gasiorek2021dynamics}. Within the class of gravitational Kepler billiards, it is worth mentioning the works \cite{takeuchi2021conformal, jaud2024geometric}, which prove an interesting result about integrability inside focused conics, and \cite{Bol2017}, where a wider class of billiards with singularities is taken into account; in \cite{takeuchi2023kustaanheimo}, the authors propose a first analysis of the integrability of the three dimensional reflective case by using Kustanheimo-Stiefel regularisation. \\
	In the planar case, some properties of Kepler and galactic refraction billiards have been already depicted in \cite{deblasiterracinirefraction, deblasiterraciniOnSome, deblasiterracinibarutelloChaotic, DebCherBar}: this work is intended as an extension of them, investigating, in particular, the local stability of suitable equilibrium trajectories (see Figure \ref{fig:homothetics}), and, starting from them, the possible presence of chaotic regimes. It is important to highlight how increasing the dimensionality does not lead to a straightforward generalization of what is done on the plane, as the models presented here display a richer dynamics and technical difficulties in the proving schemes arise. First of all, it is worth to ask whether the billiard dynamics is \textit{truly} three-dimensional, since, by conservation of the angular momentum, every arc of our concatenations is planar. As we will see analytically, except in the trivial spherical case both reflection and refraction could change the orbit's plane, then no invariant planes are generically present. Moreover, the interaction of Keplerian and harmonic arcs with a closed surface, instead of a curve, is more complex, and a careful choice of the parameterisation of $\partial D$ is in order. Another important difference regards the treatment of Keplerian collisional trajectories, of paramount importance in the forthcoming analysis. While in two dimensions they can be regularised by following the approach of Levi-Civita (see \cite{maclaurin1742treatise, Levi-Civita}), in higher dimensions another strategy, relying on the construction of a suitable four dimensional space, is necessary; the regularisation we use, due to Kustanheimo and Stiefel, is here presented using Waldvogel's formalism (see \cite{waldvogel2008quaternions}).  We highlight that, although physically not relevant, collision solutions obtained by regularisation methods are of particular importance from an analytical point of view, since this approach allows  to investigate non-collisional solutions close to the collisional ones. This is exactly the type of analysis needed in both Section \ref{sec:stability} and \ref{sec:chaos}, where close-to-collision arcs are considered. \\

	In order to illustrate the principal results of this work, let us give some notation, providing an analytical description of the two models proposed; for this reason, let us take a smooth domain $D\subset \R^3$, of class at least $C^3$.  \\
	In Keplerian billiards, the particle moves inside $\overline D$ along Keplerian geodesics generated by the potential 
		\begin{equation}\label{eq:inner_pot}
			V_I: \overline D\setminus\{0\}\to \R, \quad V_I(z)=\H+\frac{\mu}{|z|}, 
		\end{equation}
		corresponding to the gravitational attraction of a m
		ass with parameter $\mu>0$ situated at the origin. More precisely, any solution of the differential problem 
		\begin{equation}\label{eq:inner_prob_intro}
			\begin{cases}
				z''(t)=\nabla V_I(z(t))\\
				\dfrac12|z'(t)|^2-V_I(z(t))=0
			\end{cases}
		\end{equation}
		with $\H>0$, corresponds to a Keplerian hyperbolic arc with energy equal to $\H$.  Every time our test particle hits the boundary, it is reflected back inside the domain, and continues to move. \\
		In the refractive case the inner Keplerian potential \eqref{eq:inner_pot} is coupled with a harmonic outer one, defined by 
		\begin{equation}\label{eq:outer_pot}
			V_E: \R^3\setminus D\to \R, \quad V_E(z)=\mathcal E -\dfrac{\omega^2}{2}|z|^2,  
		\end{equation}
		with $\mathcal E, \omega>0$. 
		The resulting trajectories are then concatenations of outer elliptic arcs with inner Keplerian ones, joined together on $\partial D$ by a refractive Snell's law which maintains constant the tangent component of the velocity through the transition. \\
	In the present paper we will address, for both models, two main problems. The first one is the existence and stability analysis of a particular class of equilibrium trajectories, called \textit{homothetic} (see again Figure \ref{fig:homothetics}):  these are collision-ejection solutions obtained, via regularization arguments (see  Appendix \ref{sec:app}), along special  directions defined by the so-called \textit{central configurations,} namely, radial directions orthogonal to $\partial D$. More precisely, we can give the following definition. 
		\begin{figure}
		\centering
		\includegraphics[width=0.3\linewidth]{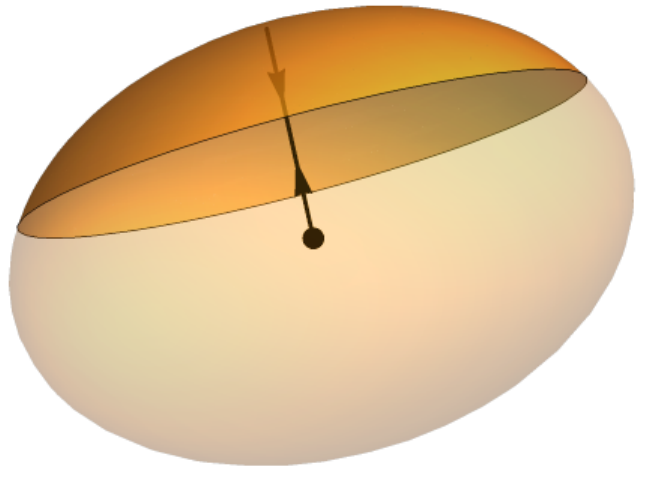}\qquad\qquad\qquad \includegraphics[width=0.3\linewidth]{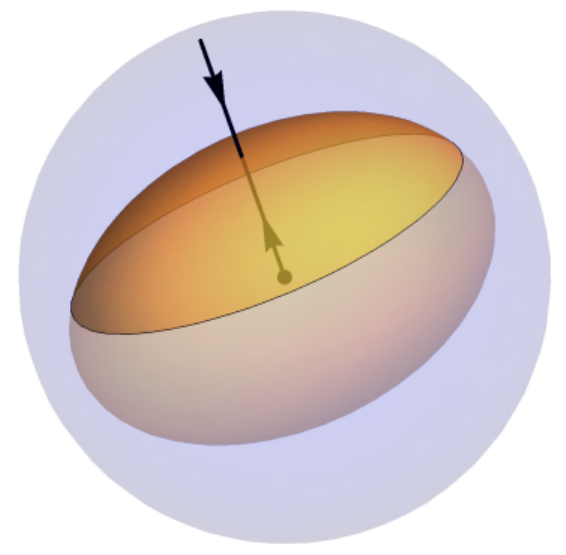}
		\caption{Homothetic collision trajectories in the reflective (left) and refractive (right) case. The particle follows radial arcs in the direction of a central configuration: for the case on the left, the reflection translates into a simple inversion of the trajectory onto itself; for the case on the right, Snell's law does not deflect the velocity vector, since it is orthogonal to the boundary $\partial D$. The blue sphere represents the boundary of the Hill's region associated to the outer potential.   }
		\label{fig:homothetics}
	\end{figure}
	\begin{definition}
		Let us consider $D\subset \R^3$ a bounded and regular domain of $\R^3$, containing the origin, and suppose that it is star-shaped with respect to $0$. A \emph{central configuration} is a point $\bp\in \partial D$ orthogonal to $T_{\bp}\partial D$. 
	\end{definition}
	The existence of central configurations of $D$ is of fundamental importance for the results presented thorough the whole work: the first one regards the stability of homothetic trajectories.  	
	\begin{theorem}\label{thm:stability_intro}
		Let $S\eqdef \partial D$ be the boundary of a three-dimensional billiard, and let $\bp\in S$ a central configuration, $\ku$ and $\kd$ the principal curvatures of $S$ in $\bp$. Then: 
		\begin{enumerate}
			\item in the reflective case, there exists an equilibrium homothetic trajectory of the billiard along the direction of $\bp$, given by a collision-ejection arc, whose stability is given by the quantities 
				\begin{equation}
				\begin{aligned}
					\Delta_1=\frac{32 |\bar p|^4}{\mu^2 \sqrt{VI(\bar p)}}\left(\frac{1}{|\bar p|}-\kappa_1\right)\left[\sqrt{V_I(\bar p)}\left(\frac{1}{|\bar p|}-\kappa_1\right)-\frac{\mu}{2|\bar p|^2\sqrt{V_I(\bar p)}}\right]\\
					\Delta_2=\frac{32 |\bar p|^4}{\mu^2 \sqrt{VI(\bar p)}}\left(\frac{1}{|\bar p|}-\kappa_2\right)\left[\sqrt{V_I(\bar p)}\left(\frac{1}{|\bar p|}-\kappa_2\right)-\frac{\mu}{2|\bar p|^2\sqrt{V_I(\bar p)}}\right];
				\end{aligned}
			\end{equation}
		\item in the refractive case, there is an equilibrium trajectory, composed by the concatenation of an outer homothetic and an inner collision-ejection arc along the direction of $\bar p$, whose stability depends on the quantities, for $i=1,2$, 
		\begin{equation}
			\begin{aligned}
				&\Delta_i=ABCD\\
				& A=\frac{16|\bar p|}{\mathcal E^2\mu^2}\left(\sqrt{V_I(\bp)}\right)\left(\kappa_i-\frac{1}{|\bar p|}\right)\\
				& B= \mathcal E+(\kappa_1|\bar p|-1)\sqrt{V_E(\bar p)}\left(\sqrt{V_E(\bar p )}-\sqrt{V_I(\bar p)}\right)\\
				& C=-\mu\sqrt{V_I(\bar p)}+2|\bar p|\left[\mathcal E+(\kappa_i|\bar p|-1)\sqrt{V_E(\bar p)}\left(\sqrt{V_E(\bar p)}-\sqrt{V_I(\bar p)}\right)\right]\sqrt{V_I(\bar p)}\\
				& D=\mu-2|\bar p|(\kappa_1|\bar p|-1)\left(\sqrt{V_E(\bar p)}-\sqrt{V_I(\bar p)}\right)\sqrt{V_I(\bar{p})}. 
			\end{aligned}
		\end{equation}
		\end{enumerate} 
	In particular, if either $\Delta_1$ or $\Delta_2$ are positive, then the trajectory is linearly unstable, while when both the quantities are negative it is linearly stable. More precisely, in both cases the homothetic trajectory corresponds to a fixed point of a suitable four-dimensional first return map, for which it holds that
	\begin{itemize}
		\item if $\Delta_1\cdot \Delta_2<0$, then we have a saddle-centre fixed point; 
		\item if both $\Delta_1$ and $\Delta_2$ are positive, then the fixed point is of saddle-saddle type; 
		\item if $\Delta_1,\Delta_2<0$, then the fixed point is of centre-centre type. 
	\end{itemize} 
\end{theorem}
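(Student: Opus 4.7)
The plan is to reduce both items to the spectral analysis of a four-dimensional first return map whose fixed point is precisely the homothetic orbit, and to verify that its differential splits as a direct sum of two symplectic $2\times 2$ blocks whose linear stability is encoded by $\Delta_1$ and $\Delta_2$. First, I would fix in a neighbourhood of the central configuration $\bar p$ a local frame of principal directions $e_1,e_2$ on $\partial D$ associated to the principal curvatures $\kappa_1,\kappa_2$, together with the outward normal $\nu=\bar p/|\bar p|$. Using this frame, I would set up a Poincar\'e section transverse to the homothetic arc, with phase coordinates $(\xi_1,\xi_2;\eta_1,\eta_2)$, where $\xi_j$ parameterises $\partial D$ along $e_j$ and $\eta_j$ is the conjugate tangent component of the velocity at the impact (in the reflective case) or of the outgoing velocity after refraction (in the refractive case). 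Since the collision-ejection arc passes through the Keplerian singularity at the origin, the inner flow must be linearised through the Kustaanheimo--Stiefel regularisation recalled in Appendix \ref{sec:app}: in regularised variables the collisional arc becomes a smooth solution of a harmonic oscillator, and one recovers the linearised inner flow in $(\xi_1,\xi_2,\eta_1,\eta_2)$ by composition with the natural projection $\R^4\to\R^3$.

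Next I would exploit the radial symmetry of both $V_I$ and $V_E$ and the orthogonality $\bar p\perp T_{\bar p}\partial D$ to show that the linearisation of the first return map at the homothetic fixed point decouples into two $2\times 2$ symplectic blocks $M_1,M_2$, one per principal direction $e_j$. Each block is the composition of the differential of the regularised Kepler flow between consecutive encounters with $\partial D$, with the differential of the reflection rule --- in the first case --- or with the differentials of the outer harmonic flow and of the generalised Snell's refraction law --- in the second case. The curvature $\kappa_j$ enters through the second-order expansion of the boundary parameterisation $\bar p+\xi_j e_j-\tfrac{\kappa_j}{2}\xi_j^2\nu+o(\xi_j^2)$, which, after differentiating the impact and refraction conditions, produces the recurring factor $1/|\bar p|-\kappa_j$ visible in the statement.

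Having $M_1,M_2$ at hand, linear stability reduces to computing their traces: for a symplectic $2\times 2$ matrix $M_j$ the fixed point is a centre if $|\operatorname{tr} M_j|<2$ and a saddle if $|\operatorname{tr} M_j|>2$. A direct computation in terms of $|\bar p|$, $\mu$, $V_I(\bar p)$, and, in the refractive case, $\mathcal E$ and $V_E(\bar p)$, should show that $\operatorname{tr} M_j-2$ factorises exactly into the product form giving $\Delta_j$. The trichotomy centre-centre / saddle-centre / saddle-saddle then follows by combining the signs of $\Delta_1$ and $\Delta_2$, since the eigenvalues of each symplectic block come in reciprocal pairs and switch from the unit circle to the real axis precisely when $\Delta_j$ changes sign.

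The main obstacle will be the correct bookkeeping at the collision: the KS map projects $\R^4\to \R^3$, and one has to check that the reduction from the four regularised variables back to the two tangent coordinates $(\xi_j,\eta_j)$ is consistent with the $SO(2)$-gauge freedom of the KS construction and with conservation of angular momentum (which vanishes on the homothetic orbit but not on nearby ones). A second delicate point, specific to the refractive case, is the linearisation of Snell's law at $\bar p$: the post-refraction normal component is determined implicitly through the jump of potentials, and must be expanded to first order in the tangent perturbations before it can be composed with the outer harmonic flow. Once these two ingredients are in place, the trace computation is lengthy but essentially mechanical, and the explicit factorisations defining $\Delta_1$ and $\Delta_2$ emerge, yielding the stability classification.
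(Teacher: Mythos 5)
Your proposal follows the same skeleton as the paper: a Poincar\'e section in boundary coordinates aligned with the principal directions plus tangential velocity components, Kustaanheimo--Stiefel regularisation to make sense of the linearisation across the collision, decoupling of the $4\times 4$ Jacobian into two symplectic $2\times 2$ blocks indexed by the principal curvatures, and the trace criterion for centre versus saddle. The one genuine methodological difference lies in how the differential of the return map is produced. You propose to compose directly the linearised (regularised) inner flow, the linearised impact condition, and the linearised reflection or Snell law; the paper instead defines the return map \emph{implicitly} through the Jacobi-length generating functions $\mathcal S_I$ and $\mathcal S_E$, whose first derivatives are exactly the tangential velocity components (Eqs.~\eqref{eq:der1S}) and whose second derivatives at the homothetic point are computed in the KS variables (Lemmas~\ref{lem:der_SI_hom} and~\ref{lem:der_SE_hom}); the Jacobian then drops out of the implicit function theorem. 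The generating-function route buys two things: reparametrisation invariance (so the singular time change of the regularisation costs nothing), and the fact that both the reflection law~\eqref{eq:riflS} and Snell's law~\eqref{eq:refrS} become critical-point conditions $\partial_{2}\mathcal S+\partial_{1}\mathcal S=0$, so the ``delicate first-order expansion of the refraction law'' you flag as an obstacle is handled automatically. Your direct route is conceptually cleaner but forces you to carry the state-transition matrix of the Kepler flow through the quadratic KS map (which is not a linear projection $\R^4\to\R^3$) and its $SO(2)$ gauge freedom by hand --- exactly the bookkeeping you identify as the main difficulty, and which the paper still has to face, in milder form, when computing the variations $\partial_{e_u^{\pm}}\underline{\bar x}$ of the regularised homothetic solution. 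Both routes lead to the same block-diagonal symplectic Jacobian; note only that the paper's $\Delta_j$ is $\operatorname{tr}(M_j)^2-4$ rather than $\operatorname{tr}(M_j)-2$, though your criterion $|\operatorname{tr}(M_j)|<2$ versus $|\operatorname{tr}(M_j)|>2$ is of course equivalent for the stability classification.
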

The above result is achieved by constructing suitable discrete first return maps, which keeps track of every time the reflective or refractive trajectories hit the boundary $S$, and treating the homothetic equilibrium ones as fixed points. In this framework, it is possible to compute the linear stability of such equilibrium points with the standard tools of discrete dynamical systems (see \cite{devaneyhirschsmale}). \\
According to Theorem \ref{thm:stability_intro}, the linear stability of homothetic equilibrium trajectories depends on the local geometry of $\partial D$ in $\bp$, along with the physical parameters defining the potential $V_I$ and, in the refractive case, $V_E$. From a dynamical point of view, this translates into in the presence of possible \emph{bifurcation phenomena} when suitably changing the values $\H, \mu, \mathcal E, \omega$. \\
The second proposed result regards the arising of chaotic behaviour, provided some simple geometric conditions of the boundary are satisfied: in particular, we will link the presence of at least two central configurations for $D$ which are non antipodal (i.e. the origin does not belong to the segment between them) and non degenerate (namely, they are non degenerate critical points for the function $p\in\partial D\mapsto |p|$) to the existence of a topological chaotic subsystem for sufficiently large inner energies $\H$.
	\begin{theorem}\label{thm:intro_chaos}
		Let us suppose that $D$ admits two central configurations which are not antipodal and non degenerate. Then, for sufficiently large inner energies, both the reflective and the refractive billiards are chaotic. \\
		More precisely, if the inner energy is sufficiently large it is possible to construct  first return maps,  restricted to a suitable invariant subset of initial conditions, which are topologically chaotic.  
	\end{theorem}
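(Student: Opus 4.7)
The plan is to produce, for both models and for sufficiently large $\H$, an invariant Cantor-like set on a suitable four-dimensional Poincar\'e section on which the first return map is semi-conjugate to the Bernoulli shift on two symbols, one symbol per central configuration $\bar p_1,\bar p_2$. The two homothetic equilibrium orbits provided by Theorem \ref{thm:stability_intro} will serve as the ``anchors'' of the construction: around each of them I would pick a small topological box $B_i\subset S\times \R^3$ in the crossing section of $\partial D$ (using $\H$ as a parameter), with the homothetic fixed point at its center. Chaos will then follow from the existence of covering relations $B_i\Rightarrow B_j$ for every pair $(i,j)\in\{1,2\}^2$, via the standard topological-horseshoe/Zgliczy\'nski mechanism.

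First I would exploit the high-energy scaling. For $\H\gg 1$, after rescaling time the inner Keplerian flow on an arc of fixed length is a small perturbation of straight-line motion: the scattering angle induced by passing close to the origin is of order $\mu/\H$ times a smooth function of the impact parameter, and the total transit time between two boundary crossings is of order $\H^{-1/2}$. In this regime the first return map on the boundary section is $C^1$-close to the map induced by the rectilinear billiard, with the Keplerian singularity contributing, near each collision orbit, a large and transverse rotation in phase space. A Kustanheimo--Stiefel regularization (as recalled in Appendix \ref{sec:app} via Waldvogel's quaternion formalism) makes the near-collision arcs and their variational equations smooth, which is essential for tracking stable and unstable cones across the near-collision bounce. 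In the refractive case I would prepend the same argument with the outer harmonic arc, which at high $\H$ is close to a straight segment (the particle has enough energy to essentially overshoot the harmonic restoring force on a short outer excursion) and whose linearization can be read off from the refraction Snell's law.

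Next I would prove the four covering relations $B_i\Rightarrow B_j$. This is where the two geometric hypotheses enter. Non-antipodality of $\bar p_1,\bar p_2$ guarantees that the chord from one to the other does not pass through the Keplerian singularity, so a trajectory starting in $B_i$ can connect to $B_j$ with bounded impact parameter; the scattering through the origin, combined with this chord, provides the transition orbit that realizes the symbolic jump $i\mapsto j$. Non-degeneracy of $\bar p_i$ as a critical point of $p\mapsto |p|$ restricted to $\partial D$ (equivalently, both principal curvatures differ from $1/|\bar p_i|$) ensures that the differential of the return map is hyperbolic in both boundary-tangent directions at the corresponding fixed point, producing two well-defined expanding and two contracting directions in $B_i$. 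Combining the two, I would choose the edges of each $B_i$ aligned with the unstable/stable directions and show that the image of $B_i$ under the first return map stretches across $B_j$ in the unstable direction while being contracted in the stable one, for every pair $(i,j)$. The covering relations, once established, imply directly by standard results (e.g., Zgliczy\'nski--Gidea) that on $\Lambda=\bigcap_{n\in\Z} F^{-n}(B_1\cup B_2)$ the return map $F$ is semi-conjugate to the full shift on two symbols, hence topologically chaotic.

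The main obstacle will be the verification of these covering relations uniformly in $\H$. In the four-dimensional setting one must control both the stretching and the contracting cones simultaneously, and the Keplerian bounce, although smooth after regularization, mixes them in a way that depends on the boundary geometry through $\kappa_1,\kappa_2$. I expect the estimates to come from a careful block decomposition: (i) an outer/inner Keplerian propagator close to the straight-line limit, (ii) a near-collision block treated via K--S regularization and the explicit linearization around the homothetic arc already computed in Theorem \ref{thm:stability_intro}, and (iii) a reflection (or refraction) block whose linearization is an explicit orthogonal (respectively symplectic) map in the tangent-normal frame at $\bar p_i$. The non-degeneracy of the $\bar p_i$ is exactly what keeps each of these blocks non-singular in the relevant cones, while non-antipodality keeps the inter-configuration transition block bounded as $\H\to\infty$; for $\H$ large enough the composition therefore behaves as a genuine hyperbolic linear horseshoe, yielding the claim.
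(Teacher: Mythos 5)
Your proposal takes a genuinely different route from the paper — a topological horseshoe via covering relations around the two homothetic fixed points — whereas the paper proceeds variationally: billiard trajectories realising a periodic symbol sequence are obtained as critical points of the total Jacobi length $W(\un\xi)$ of a broken geodesic, located by the Poincar\'e--Miranda theorem inside a product of small boxes $R^{(i)}_\ve$, using the high-energy expansion $d_I(p_0,p_1)=\sqrt{\mathcal H}(|p_0|+|p_1|)+O(\mathcal H^{-1/2})$ so that the sign of $\nabla W$ on $\partial R^{(i)}_\ve$ is governed by $\nabla|\Gamma^{(i)}|$; non-degeneracy of the central configurations supplies both the sign change (existence) and the monotonicity (uniqueness), and general sequences follow by a diagonal compactness argument, yielding a genuine conjugacy (bijective, continuous $P$) rather than a semi-conjugacy. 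Your strategy is legitimate in principle, but as written it contains several concrete gaps. First, the asymptotic picture of the inner arcs is wrong: the arcs used to shadow concatenations of homothetic segments are the ones \emph{not} homotopic to the chord $\overline{p_0p_1}$, and by the Battin-type lemma these converge, as $\mathcal H\to\infty$, to the broken line $\overline{p_00}\cup\overline{0p_1}$ through the singularity — every inner arc is near-collisional with deflection close to $\pi$, not a ``small perturbation of straight-line motion with scattering angle of order $\mu/\mathcal H$.'' Your covering-relation estimates are built on the wrong limiting dynamics. Second, the outer harmonic arc does not depend on $\mathcal H$ at all (its energy is the fixed $\mathcal E$), and it is a near-radial excursion out to the Hill boundary returning to essentially the same point, not a near-straight segment; the correct mechanism is that the outer generating function and its derivatives are merely \emph{bounded} uniformly, so that the $\sqrt{\mathcal H}$-growth of the inner contribution dominates.

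Third, your claim that non-degeneracy of $\bp_i$ by itself makes the return map hyperbolic at the homothetic fixed point contradicts Theorem \ref{thm:stability_intro}: depending on the sign of $\Delta_1,\Delta_2$ the fixed point can be of centre-centre type even when both $\kappa_j\neq1/|\bp_i|$. Hyperbolicity does hold for $\mathcal H$ large (since then $\eta_j^2\gg|\mathcal I_0\eta_j|$ forces $\Delta_{u\backslash v}>0$), but this must be derived from the explicit formulas, not assumed. Fourth, covering relations yield only a \emph{semi}-conjugacy to the shift, whereas the theorem as established in the paper produces a conjugacy, resting on the \emph{uniqueness} of the critical point of $W$ in each box — a point your construction does not address. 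Finally, the actual verification of the four covering relations with simultaneous control of expanding and contracting cones across the near-collision block in a four-dimensional section is precisely where all the work lies, and it is deferred to ``I expect the estimates to come from\dots''; without it the argument is a plan rather than a proof.
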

The main idea underlying the theorem is the construction of a  \textit{topological conjugation} with the \textit{Bernoulli shift} acting on bi-infinite sequences of two symbols. To be more precise, let us consider the set $\mathbb L=\{1,2\}^\Z$, and the so-called Bernoulli shift 
\begin{equation}\label{eq:Bernoulli}
	\sigma: \mathbb L\to \mathbb L, \quad \un s=(s_k)_{k\in \Z}\mapsto \un s'=(s'_k)_{k\in\Z}=(s_{k+1})_{k\in \Z}: 
\end{equation} 
such mapping is the model topological chaotic system (see \cite{Dev_book, hasselblattkatok}), and can be used to prove the chaoticity of more general discrete maps. In particular, given a map $\mathcal F: X\to X$, if we find a continuous bijection $P:X\to \mathbb L$ such that the diagram 
	\begin{equation}
	\begin{tikzcd}
		X \arrow{r}{{\mathcal F}} \arrow{d}{P} & X \arrow{d}{P} \\
		\mathbb{L} \arrow{r}{\sigma}	& \mathbb{L}
	\end{tikzcd}
\end{equation}
commutes, that is, $\mathcal F$ is \textit{topologically conjugated} to $\sigma$, then $\mathcal F$ is itself chaotic. The proof of Theorem \ref{thm:intro_chaos} will then be based on the search of an invariant set of the first return map, as well as a conjugation between the latter and $\sigma.$ In practice, we will construct a one-to-one relation between bi-infinite sequences on $\mathbb L$ and suitable billiard trajectories, requiring, through a \textit{shadowing} argument (see also \cite{Bol2017}), that they are sufficiently close to juxtaposition of homothetic arcs. \\
Domains that satisfy the hypotheses of Theorem \ref{thm:intro_chaos} are called \emph{admissible}: an example is given by triaxial ellipsoids having at least one semi-axis different from the other two. The chaoticity of an ellipsoid both in the reflective and the refractive case is particularly interesting when compared with the classical Birkhoff case, where it is possible to prove the complete integrability of the system in the ellipsoidal case (see \cite{tabachnikov2002ellipsoids}). \\
A particular case of domain is represented by the centred sphere (i.e. with the centre coinciding with the origin): here, any point on its surface is a degenerate central configuration.  We will see in Section \ref{sec:setting} that in the centred sphere the dynamics reduces to a two-dimensional integrable billiard with circular boundary when the first initial conditions are chosen (see also \cite{deblasiterraciniOnSome}). Moreover, applying Theorem \ref{thm:stability_intro} to any point, one obtains $\Delta_{1\backslash2}=0$ in both refractive and reflective case: this fact is again not surprising, and translates in the presence of infinite equilibrium trajectories, degenerate from the point of view of the linear stability. In the Kepler reflective case, the problem of finding three-dimensional boundaries that, at least at high energies, lead to chaos, is complemented by the work of Takeuchi and Zhao \cite{takeuchi2023kustaanheimo} where the problem of \textit{integrability} is taken into account. In general, this overall investigation looks in the direction of characterising integrable boundaries, in the spirit of the well-known \textit{Birkhoff conjecture} for classical billiards (see \cite{birkhoff2020periodic,poritsky1950billard}).

	\section{Analytical setting}\label{sec:setting}
	Let us start the description of the dynamical models by taking a regular domain $\underline 0\in D\subset \R^3$, namely, an open and bounded subset of $\R^3$, containing the origin, with smooth boundary. Let us suppose that $D$ is star-convex with respect to the origin, define $S\eqdef \partial D$, and suppose that it is a smooth surface in $\R^3$, whose regularity is at least $C^3$. \\
	The surface $S$ represents the boundary of our billiards, with which the trajectories, either in the reflective or in the refractive case, will interact according to the considered model.\\ 
	
	\paragraph{\textbf{Kepler billiards: reflective case}}
	
	Let us consider an inner potential as in \eqref{eq:inner_pot}: the trajectories of zero energy associated to the potential $V_I$ are  solutions of the ODE \eqref{eq:inner_prob_intro} for suitable boundary conditions, and correspond to hyperbolic Keplerian arcs having energy $\mathcal H$ and a focus in the origin. \\
	In Kepler billiards, whenever an inner arc intersects the domain's boundary $\partial D$ it is reflected back, generating a new trajectory starting from the bouncing point and with an initial velocity vector which, while preserving the original tangent component, reverts the sign of the normal one (see Figure \ref{fig:riflessione}). 
	\begin{figure}
		\centering
		\begin{overpic}[width=0.4\linewidth]{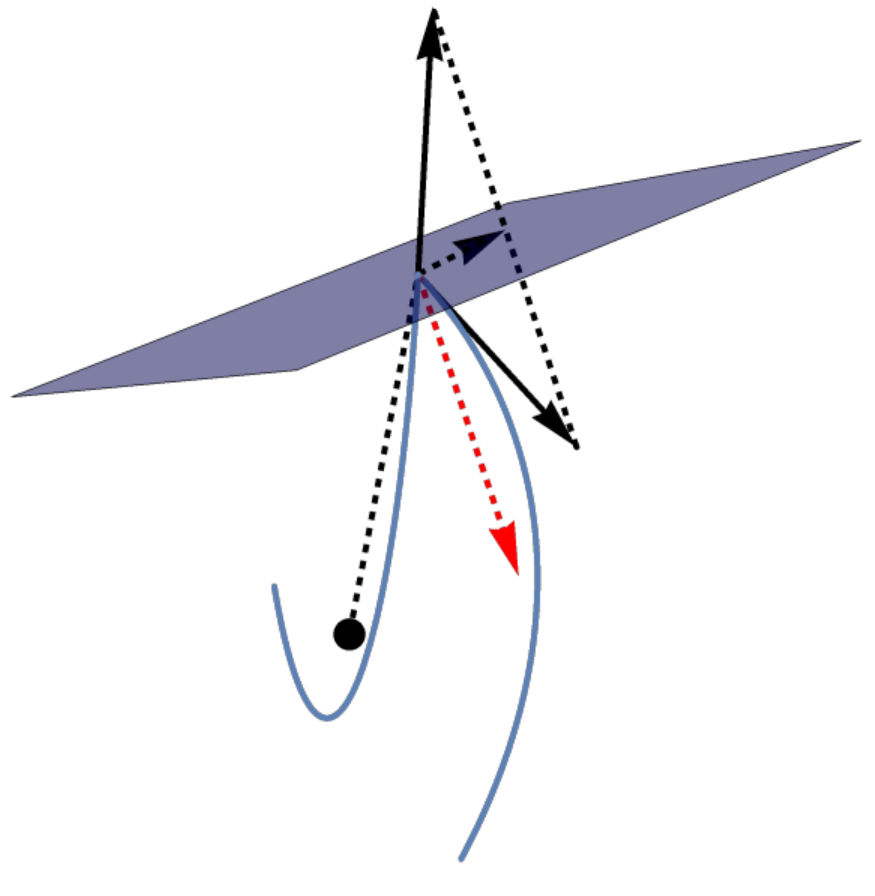}
			\put (20,25) {$z_1(t)$}
			\put (45, 100) {\rotatebox{0}{$w_1$}}
			\put (65,45) {$w_2$}
			\put (42,68) {$p$}
			\put (58,10) {$z_2(t)$}
			\put (90,75) {$T_p\partial D$}
		\end{overpic}
		\caption{Reflection rule in three dimensions; for the sake of clarity, only the tangent plane $T_p\partial D$ is drawn. An incoming arc $z_1$, with final velocity $w_1$, is reflected in a new arc $z_2$, whose initial velocity is given by $w_2$. The two velocities have the same tangent component with respect to $T_p\partial D$ and inverse normal component. The dashed black line represents the radial direction from the origin to $p$, while the red arrow is the inward-pointing normal vector to $\partial D$ in $p$. }
		\label{fig:riflessione}
	\end{figure}
	The billiard trajectories result then in a concatenation of Keplerian arcs, related by a reflection rule, described by means of a discrete area-preserving mapping that keeps track of every time an inner arc intersects the boundary. \\
	To be more precise, let us suppose that a Keplerian arc starts from a point $p_0\in S$ with an inward-pointing initial velocity $w_0$, with $|w_0|=\sqrt{2(\mathcal H+\mu/|p_0|)}$. It represents the unique solution of the Cauchy problem
	\begin{equation}\label{eq:Cauchy_inner}
		\begin{cases}
			 z''(t)=\nabla V_I(z(t))\\
			z(0)=p_0, \  z'(0)=w_0
		\end{cases}
	\end{equation} 
	that intersects the surface $S$ again in a point $p_1\in S$ and with velocity vector $w_1'$ (see  Figure \ref{fig:primo ritorno Kep}), which can be decomposed into its tangent and normal part with respect to $T_{p_1}S$ as $w'_{1,t}+w'_{1,n}$. The next arc in the concatenation will then start from $p_1$ with velocity $w_1=w'_{1,t}-w'_{1,n}$. 	
	\begin{figure}
		\centering
		\begin{overpic}[width=0.4\linewidth]{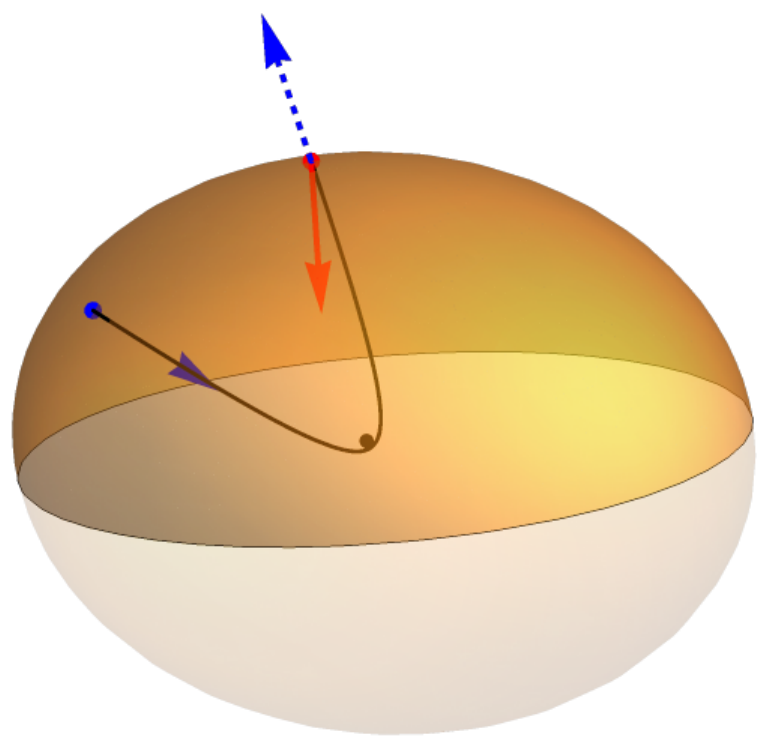}
			\put (10,50) {$p_0$}
			\put (15, 55) {\rotatebox{-20}{$w_0$}}
			\put (25,85) {\rotatebox{0}{$w_1'$}}
			\put (42,80) {$p_1$}
			\put (32,60) {$w_1$}
		\end{overpic}
		\caption{First return map for the reflective case: the trajectory starts in $p_0$ with velocity $w_0$ and meets again $\partial D$ in $p_1$ with velocity $w'_1$. The latter is reflected into $w_1$ and one has $F: (p_0, w_0)\mapsto (p_1, w_1)$. }
		\label{fig:primo ritorno Kep}
	\end{figure}
	It is clear that, unless $w'_1$ (and hence $w_1$) is tangent to $S$, the new velocity vector points again towards the interior of $D$, and a new inner Keplerian hyperbola can start with initial conditions $(p_1, w_1)$. \\
	The map $F:(p_0,w_0)\mapsto (p_1,w_1)$ is called \emph{first return map}, and, whenever well defined, its iterates give a complete portrait of the behaviour of the concatenation of reflected Keplerian arcs starting from $(p_0,w_0)$. 
	We highlight that the construction described above is a generalization in three dimensions of what happens in the two-dimensional case (see for example \cite{deblasiterracinirefraction}), where the tangent plane reduces to a simple direction in $\R^2$. As for the good definition of F, it depends on the geometric interaction of Keplerian arcs at a given energy with $S$: in Section \ref{ssec:stab_Kep}, we will delve deeper into this point. Let us notice a final fact about Kepler billiards, which motivates the necessity to pass carefully from the planar to the spatial case. Being $V_I$ a central force potential, the angular momentum is a conserved quantity along any arc, and, as a consequence, the Keplerian hyperbol\ae\ lie on planes defined by their initial conditions. It is then natural to ask whether the concatenations of a reflective Kepler billiard lie always on invariant planes, reducing the three-dimensional dynamics to a trivial generalization of the two-dimensional one. Actually, while the plane of the trajectories does not change along Keplerian hyperbol\ae, this could happen during the reflection. This is a consequence of two basic facts: 
	\begin{itemize}
		\item given initial conditions $(p_0,w_0)$, with $p_0\in S$ and $w_0\in\R^3$ pointing inside $D$, the solution of \eqref{eq:Cauchy_inner} lies on the plane generated by $w_0$ and $\overrightarrow{0p_0}$; 
		\item when  a velocity $w'$ is reflected into $w$ in a bouncing point $p\in S$, the new vector belongs to the plane generated by $w'$ and the normal unit vector to $S$ in $p$ (see again Figure \ref{fig:riflessione}). 
	\end{itemize}
	As a consequence, whenever in a bouncing point the normal and radial direction do not coincide, the new arc of hyperbola will belong to a different plane with respect to the reflected one and there are no invariant planes for the whole dynamics.  \\

	\paragraph{\textbf{Galactic refraction billiards: refractive case}}
	
	The dynamics induced on a refraction billiard  is more complex than the one depicted in the Kepler case, since here the particle is no longer confined inside $D$, but can exit from the domain, to which it is attracted again by a harmonic restoring central force. Given again $D\subset\R^3$ as an open and bounded domain containing the origin, let us now consider a new discontinuous potential
	\begin{equation}\label{eq:potentials}
		V: \R^3\setminus \{0\}\to \R, \quad V(z)=\begin{cases}
			V_I(z) &\text{if }z\in \overline D\\
			V_E(z)\quad &\text{if }z\notin\overline D  
		\end{cases}
	\end{equation} 
	where $V_E$ is defined in Eq. \eqref{eq:outer_pot}. In practice, we are considering  a Keplerian potential with positive energy inside $D$ coupled with a harmonic-type one  outside the domain; here, the constants $\mathcal E$ and $\mathcal H$  are such that $0<\mathcal E\leq\mathcal H$. In this case, the zero-energy trajectories inside $D$ are still Keplerian hyperbol\ae\  at energy $\mathcal H$, while the outer arcs are harmonic ellipses with energy $\mathcal E$ and centre at the origin. \\
	The complete trajectories in this case are concatenations of outer and inner arcs, connected by means of a refraction rule which, as in the case of the reflection law, keeps constant the tangent component velocity vectors during the transition through $S$. Let us suppose to have an outer arc intersecting $S$ again in the point $p$ with velocity $w_E\in\R^3$, refracted into an inner velocity vector $w_I$ (see Figure \ref{fig:rifrazione}).	 
	\begin{figure}
		\centering
		\begin{overpic}[width=0.5\linewidth]{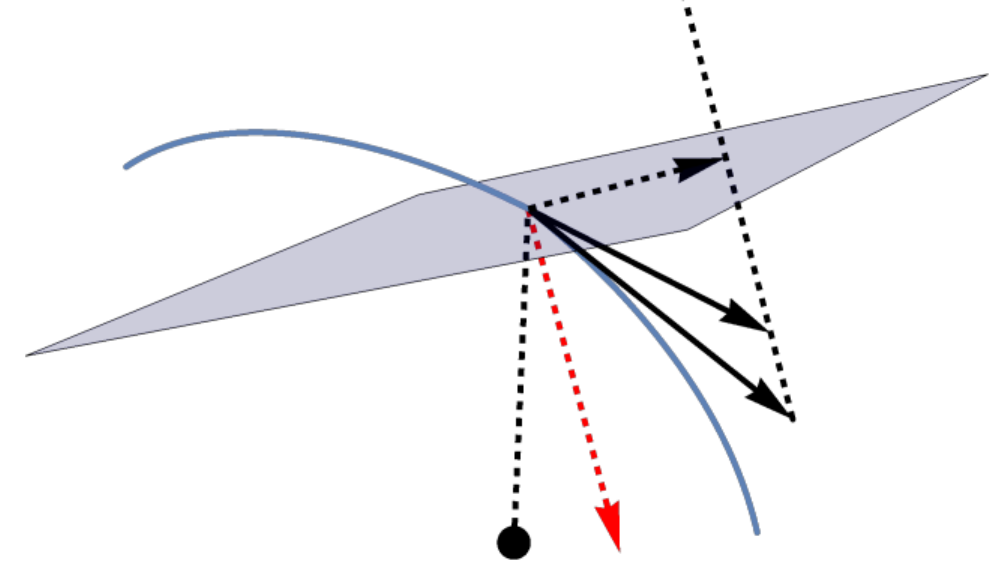}
			\put (10,44) {\rotatebox{20}{$z_E(t)$}}
			\put (48, 35) {\rotatebox{0}{$p$}}
			\put (70,0) {\rotatebox{0}{$z_I(t)$}}
			\put (78,25) {$w_E$}
			\put (80, 15) {$w_I$}
			\put (95, 45) {$T_p\partial D$}
		\end{overpic}
		\caption{Refraction rule; for the sake of clarity, only the tangent plane $T_p\partial D$ is drawn. The incoming outer arc $z_E$ intersects the boundary in $p$ with velocity $w_E$, which is refracted into $w_I$, initial velocity for the following inner arc $z_I$. The two velocity vectors have the same tangent component with respect to $T_p\partial D$.  The red arrow is the inward-pointing normal vector to $\partial D$ in $p$.}
		\label{fig:rifrazione}
	\end{figure}
    By the energy conservation law, one has $|w_E|=\sqrt{2V_E(p)}$ and $|w_I|=\sqrt{2V_I(p)}$, where, by Eq.\eqref{eq:potentials}, $V_E(p)<V_I(p)$: a preservation of the tangent component from outside to inside $D$ translates then into an increasing of the normal one, that is, a deflection of the velocity vector. In particular, if $\alpha_E$ and $\alpha_I$ are respectively the angles of $w_E$ and $w_I$ with respect to the inward-pointing normal vector to $S$ in $p$, one has   
    \begin{equation}\label{eq:snell_angle}
    	\sqrt{V_I(p)}\sin\alpha_I=\sqrt{V_E(p)}\sin\alpha_E.  
    \end{equation}
	Eq. \eqref{eq:snell_angle} is a generalization for curved interfaces and non-constant potentials of the classical Snell's law for light rays, where the refraction indices depend on the two potentials at the transition point. An analogous way to describe Snell's refraction law, which will be useful in Section \ref{ssec:stab_hom_snell}, is the following: if $e_1,e_2$ is a basis of $T_pS$, one has that 
	\begin{equation}\label{eq:cons_tang}
		\scp{w_E, e_1}=\scp{w_I, e_1} \text{ and  }\scp{w_E, e_2}=\scp{w_I, e_2},. 
	\end{equation}		
	where the notation $\scp{\cdot,\cdot}$ denotes the usual scalar product in $\R^3$. 
	Clearly, whenever a velocity is refracted from inside to outside $D$, the normal component of the velocity decreases, and the angle will change again according to the relation \eqref{eq:snell_angle}. \\
	Note that, whenever the particle goes from an outer to the subsequent refracted arc, Eq.\eqref{eq:snell_angle} is solvable for any value of $\alpha_E$, while the contrary is possible only when $|\sin\alpha_I|\leq\sqrt{V_E(p)/V_I(p)}$. Geometrically, this translates into the presence of a \emph{critical angle} with respect to the normal direction, given by $\arcsin\left(V_E(p)/V_I(p)\right)$, under which an inner Keplerian arc must hit $S$ to be refracted. In general, we will say that trajectories of the refraction billiard are well defined when every inner arc can be refracted outside.\\
	As in the case of Kepler reflective billiard, it is possible to define a first return map in the refractive case as well. In this case, it will not consider every intersection of the trajectory with $S$, but only the ones occurring after a complete concatenation of an outer and a subsequent inner arc. \\
	\begin{figure}
	\centering
	\begin{overpic}[width=0.4\linewidth]{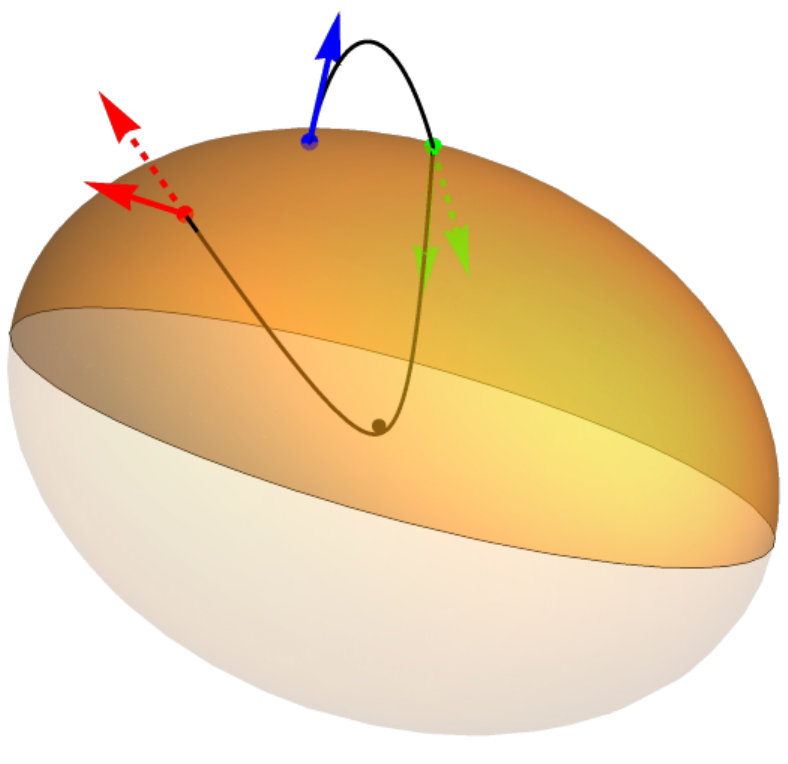}
		\put (37, 73) {\rotatebox{0}{$p_0$}}
		\put (40, 96) {\rotatebox{0}{$w_0$}}
		\put (58,80) {\rotatebox{0}{$\tilde p$}}
		\put (58,57) {$\tilde w'$}
		\put (48, 60) {$\tilde w$}
		\put (20, 64) {$p_1$}
		\put (15, 84) {$w_1'$}
		\put (4, 74) {$w_1$}
	\end{overpic}
	\caption{Refractive first return map: the outer arc starts with initial conditions $(p_0, w_0)$ and intersects $\partial D$ again in $(\tilde p, \tilde w')$. The velocity $\tilde w'$ is refracted into $\tilde w$, initial velocity for an inner arc which meets $\partial D$ again in $(p_1, w_1')$. The vector $w_1'$ is refracted into $w_1$ and one finally finds $G:(p_0, w_0)\mapsto (p_1, w_1)$.  }
	\label{fig:primo_rit_snell}
\end{figure}
	With reference to Figure \ref{fig:primo_rit_snell}, let us start with $(p_0,w_0)$ initial conditions for an outer arc, $p_0\in S$ and $w_0$ pointing outside $D$. The solution of the Cauchy problem associated to $V_E$ and with initial conditions as above is an elliptic arc which will intersect $S$ again, say in a point $\tilde p$ and with velocity $\tilde w'$, which will be refracted, according to Snell's law, into a new vector $\tilde w$. The pair $(\tilde p, \tilde w)$ will then be  the initial conditions of an inner Cauchy problem, whose solution is an unbounded hyperbola intersecting again $S$ in a point $p_1$ with velocity vector $w'_1$. If $w_1'$ is sufficiently transverse to $S$, it will be refracted into a new vector $w_1$, and a new concatenation of outer-inner arc can start from $(p_1, w_1)$. \\
	Whenever well defined, the map $G:(p_0,w_0)\mapsto(p_1,w_1)$ is the first return map associated to the refraction billiard, and, as in the reflective case, its iterates starting from different initial points encode our three-dimensional continuous dynamics into a sequence of discrete hitting points and velocities. Here, the problem of the good definition of the map is more delicate than in the previous case, since, as a critical angle is now present, the conditions on the inner arcs for the trajectory to continue are stricter. \\
	In this case, both the potentials $V_E$ and $V_I$ are centrally symmetric, therefore all the (outer and inner) arcs are planar. As in the reflective billiard, here a refraction always occurs on the plane generated by the original velocity and the normal vector to $S$ in the transition point, so that, whenever radial and normal direction do not coincide, the plane of the concatenation changes after refraction (see again Figure \ref{fig:rifrazione}). \\
	
	As customary in billiards theory, the dynamics in both models depends heavily on the geometry of $D$: it will be the aim of next sections to analyse such relation, starting from the existence of a particular kind of equilibrium trajectories. 
	
	\section{Homothetic equilibrium trajectories and their stability}\label{sec:stability}

	In both reflective and refractive billiards, there exists a particular class of equilibrium trajectories whose existence depends only on the local properties of $S$. These are called \emph{homothetic trajectories}, and are composed by straight segments along particular directions of $\R^3$, defined by the so-called \textit{central configurations}. This section will address the problem of their existence and stability. We will make use of some basic facts regarding outer harmonic and inner Keplerian arcs; for the sake of brevity, some of the technical lemmas used here are proved in Appendix \ref{sec:app}. 
	\begin{definition}\label{def:cc}
	A point $\bar p \in S$ is called \emph{central configuration}\footnote{In the present paper we assume a star-convexity hypothesis on $D$, not present in \cite{deblasiterracinirefraction, deblasiterraciniOnSome, deblasiterracinibarutelloChaotic}. In more general domains, to define a central configuration we also require that the half-line in the direction of $\bp$ intersects $S$ only once.} (in short, c.c.) for the domain $D$ if it is orthogonal to $T_{\bar p}S$. . 
	\end{definition}
	In local coordinates, central configurations correspond to critical points of the distance function restricted to points in $S$: let us consider $\Gamma: U\to S$, $U\subset\R^2$, $(u,v)\mapsto \Gamma(u,v)\in S$ a local parametrisation of $S$ around $\bar p$, such that $\Gamma(\bu, \bv)=\bar p$. Then 
	\begin{equation}\label{eq:perp}
		\bar p \perp T_{\bar p}S\ \Longleftrightarrow \nabla |\Gamma(u,v)|_{(\bu,\bv)}=\underline 0. 
	\end{equation}
	Eq. \eqref{eq:perp} allows us to define a further property for a central configuration. 
	\begin{definition}
		Let $\bar p$ a central configuration, and $\Gamma:U\to S$ a local parametrisation of $S$ around $\bar p$, $\Gamma(\buv)=\bp$. The c.c. is termed \emph{non-degenerate} if $(\buv)$ is a non-degenerate critical point of $|\Gamma(u,v)|$, namely, its Hessian computed in $(\buv)$ admits two nonzero eigenvalues. 
	\end{definition}

\begin{remark}\label{rem:par}
	Let $\Gamma: U\to S$ be a local chart of $S$ around $\bp$. For the sake of convenience and without loss of generality, we will assume that $\Gamma$ satisfies the following conditions, being $\Gamma_u(u, v)$ and $\Gamma_v(u, v)$ the coordinate tangent vectors to $S$: 
	\begin{equation*}
	\begin{aligned}
		&\bullet\  &&\Gamma(\bu, \bv)=\bp\text{ for some }(\bu, \bv)\in U; \\
		&\bullet\ &&\Gamma_u(\bu, \bv)\text{ and }\Gamma_v(\bu, \bv)\text{ are orthogonal unit vectors, and, if }\bp\text{ is not }\\&\  &&\text{umbilical, they are parallel to the principal directions of }S\text{ in }\bp; \\
		&\bullet\ &&\overline N=\Gamma_u(\bu, \bv)\times\Gamma_v(\bu, \bv)\text{ points inside }D. 
	\end{aligned}
\end{equation*}
 We stress that it is always possible to find such a parametrization, possibly rescaling $u$ and $v$ and with a change of basis on $T_{\bp}S$. \\ 
 Being $\bp$ a central configuration, it results $\bp =- |\bp| \overline N$: calling $\ku$ and $\kd$ the principal curvatures of $S$ respectively in the direction of $\Gamma_u(\bu, \bv)$ and $\Gamma_v(\bu, \bv)$,  by direct computation one finds
\begin{equation}\label{eq:par_giusta}
	\scp{\bar p,\Gamma_{uu}(\bar u,\bar v)}=-|\bar p|\kappa_1, \ \scp{\bar p, \Gamma_{uv}(\bar u,\bar v)}=\scp{\bar p, \Gamma_{vu}(\bar u,\bar v)}=0, \ \scp{\bar p, \Gamma_{vv}(\bar u,\bar v)}=- |\bar p|\kappa_2. 
\end{equation}
As a consequence, 
\begin{equation}	\label{eq:der_par}
	\begin{aligned}
		&\partial_u|\Gamma(\buv)|=\partial_v|\Gamma(\buv)|= \partial_{uv}|\Gamma(\buv)|=0, \\
		& \partial_u^2|\Gamma(\buv)|=\frac{1}{|\bar p|}\left(1-|\bar p|\kappa_1\right),\  \partial_v^2|\Gamma(\buv)|=\frac{1}{|\bar p|}\left(1-|\bar p|\kappa_2\right) 
	\end{aligned}
\end{equation}
and then, whenever $\bp$ is non degenerate, $\partial_u^2|\Gamma(\buv)|$ and $\partial_v^2|\Gamma(\buv)|$ are different from zero.
\end{remark}

	Since $V_I$ and $V_E$ are centrally symmetric, both the inner and outer dynamics admit radial solutions (possibly collisional in the inner case, see Theorems \ref{thm:outer problem}, \ref{thm:existence_inner});  whenever hitting the boundary in a c.c., such solutions are either reflected back into themselves (reflective case, see Figure \ref{fig:homothetics}, left) or they are not deflected by Snell's law (refractive case, see Figure \ref{fig:homothetics}, right). As a consequence, one can state the following Proposition. 
	
	\begin{proposition}
		If $\bp\in S$ is a central configuration, both the reflective and refractive billiard admit  equilibrium trajectories along the direction of $\bp$, called \emph{homothetic}. In the reflective case, they are collision-ejection arcs infinitely many times reflected into themselves; in the refractive case, they are concatenations of outer homothetic arcs and inner collisional ones. 
	\end{proposition}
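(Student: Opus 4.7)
The plan is to exploit the central symmetry of both $V_I$ and $V_E$ together with the geometric fact that at a central configuration $\bp$ the radial direction $\bp/|\bp|$ coincides (up to sign) with the unit normal to $T_{\bp}S$. All trajectories I construct live on the line $\R\bp$, and at every passage through $\bp$ the velocity is purely normal to $S$; both the reflection rule and Snell's law \eqref{eq:cons_tang} then act trivially, since their nontrivial content concerns the tangential component, which is zero. This is what will produce a fixed point of the corresponding first return map.

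\textbf{Reflective case.} First I would invoke Theorem \ref{thm:existence_inner} (together with the regularization in Appendix \ref{sec:app}) to produce a collision-ejection arc along the line $\R\bp$: starting at $\bp$ with inward radial velocity $w_0=-\sqrt{2V_I(\bp)}\,\bp/|\bp|$, the solution collides with the origin in finite time and is ejected back along the same line, returning to $\bp$ with outward velocity $w'=\sqrt{2V_I(\bp)}\,\bp/|\bp|$. Since $w'$ is purely normal, the reflection law $(w'_t,w'_n)\mapsto(w'_t,-w'_n)$ sends it to $-w'=w_0$, and the next inner arc launched with data $(\bp,w_0)$ coincides with the previous one. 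Hence $(\bp,w_0)$ is a fixed point of $F$, and the associated orbit is an infinite sequence of collision-ejection arcs reflected into themselves.

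\textbf{Refractive case.} Here I would concatenate one outer radial arc with one inner collision-ejection arc. By Theorem \ref{thm:outer problem}, the outer flow admits a radial solution on $\R\bp$ starting at $\bp$ with outward velocity $w_E^{(0)}=\sqrt{2V_E(\bp)}\,\bp/|\bp|$, reaching the Hill boundary $|z|^2=2\mathcal E/\omega^2$, and returning to $\bp$ with velocity $-w_E^{(0)}$. This arrival velocity is purely normal to $T_{\bp}S$, so Snell's law \eqref{eq:cons_tang} forces the tangential component of the refracted inner velocity to vanish; its magnitude is fixed by energy conservation to $\sqrt{2V_I(\bp)}$, hence the refracted velocity is $\tilde w=-\sqrt{2V_I(\bp)}\,\bp/|\bp|$. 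The reflective argument then produces an inner collision-ejection arc on $\R\bp$ returning to $\bp$ with velocity $\sqrt{2V_I(\bp)}\,\bp/|\bp|$; a final application of Snell's law (again on a purely normal vector) sends this to $\sqrt{2V_E(\bp)}\,\bp/|\bp|=w_E^{(0)}$, closing the concatenation. Thus $(\bp,w_E^{(0)})$ is a fixed point of $G$.

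\textbf{Main difficulty.} The bulk of the argument is geometric bookkeeping once one has the two radial building blocks and the identification of normal with radial at $\bp$. The only genuinely delicate ingredient is the existence of the inner collision-ejection arc, which requires handling the singularity of $V_I$; this is obtained either by the Kustaanheimo--Stiefel regularization recalled in Appendix \ref{sec:app} or, more elementarily here, by reducing to the one-dimensional motion on $\R\bp$, where the problem collapses to a scalar ODE that can be integrated explicitly.
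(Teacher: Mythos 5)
Your argument is correct and follows essentially the same route as the paper: central symmetry of $V_I$ and $V_E$ gives radial (possibly collision-ejection, via the regularization of Appendix \ref{sec:app}) solutions along $\R\bp$, and since at a central configuration the radial direction is normal to $S$, both the reflection law and Snell's law act trivially on the purely normal velocities, closing the trajectory into a fixed point of $F$ (resp.\ $G$). Your write-up is in fact more explicit than the paper's brief justification, and correctly isolates the only delicate point (existence of the inner collision-ejection arc).
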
 
 The definition of collision-ejection inner arcs is given in Appendix \ref{sec:app}, and refers to particular generalised solutions associated to the Kepler problem which, after hitting the origin, are reflected; their existence is proved by a \textit{regularization} technique, see Theorem \ref{thm:inner problem 2}. 
In terms of first return maps, homothetic equilibrium trajectories translate into fixed points, corresponding to the pair $(\bp,-\sqrt{V_I(\bp)}\bp/|\bp|)$ for the reflective case and $(\bp, \sqrt{2 V_E(\bp)}\bp/|\bp|)$ for the refractive one; by analogy with the corresponding trajectories, such points are called \emph{homothetic}. \\
As a consequence of Theorems \ref{thm:outer problem} and \ref{thm:inner problem 1}, along with the differentiable dependence of Cauchy problems on initial conditions,  both the reflective and refractive first return maps are well defined and differentiable in a neighbourhood of a homothetic fixed point: it is then natural to investigate their linear stability. To achieve this, we shall make use of the notion of \emph{Jacobi length}. Let us then take $V\subset S$ as a neighbourhood of $\bp$ such that $V\subset \Gamma(U)$; if it is sufficiently small and the inner energy $\mathcal H$ large enough, for any $p_0, p_1\in U$ there are exactly one inner and one outer arcs connecting them respectively at energies $\mathcal H$ and $\mathcal E$, where, in the inner case, the uniqueness is intended in a suitable homotopy class. Let denote these arcs respectively with $z_I(\cdot; p_0, p_1)$ and $z_E(\cdot; p_0, p_1)$. We can then define the \textit{inner} and \textit{outer distances} between the two points as the Jacobi lengths of the (unique) arcs connecting them, namely, 
\begin{equation}
	\begin{aligned}
	d_E(p_0,p_1)=\mathcal L_E(z_E(\cdot;p_0, p_1))=\int_0^T| z'_E(t; p_0,p_1)|\sqrt{V_E(z(t; p_0,p_1))}dt,\\ 
	d_I(p_0,p_1)=\mathcal L_I(z_I(\cdot;p_0, p_1))=\int_0^T| z'_I(t; p_0,p_1)|\sqrt{V_I(z(t; p_0,p_1))}dt,
	\end{aligned}
\end{equation}
where, with an abuse of notation, the time $T>0$ is such that either $z_E(T; p_0,p_1)=p_1$ or $z_I(T; p_0,p_1)=p_1$. By the uniqueness of the arcs, the above quantities are always well defined and depend only on the endpoints; in the case of a collision-ejection solution one shall make use of a regularization method. The main properties of $d_E$ and $d_I$ are described in Appendix \ref{sec:app}. \\
Let us now express the distances in terms of local coordinates, taking $(u_0,v_0), (u_1,v_1)\in U$ such that $p_i=\Gamma(u_i,v_i)$, $i=1,2$, and  defining the functions 
\begin{equation}\label{eq:gen}
	\begin{aligned}
		\mathcal S_{E\backslash I}: U\times U\to \R, \quad (u_0, v_0, u_1, v_1)\mapsto d_{E\backslash I}\left(\Gamma(u_0, v_0), \Gamma(u_1, v_1)\right). 
	\end{aligned}
\end{equation} 
The above  functions inherit the regularity of the surface $S$, and, dropping for brevity the dependence of $z_{E\backslash I}$ on the points, it is possible to prove (see Appendix \ref{sec:app}) that 
\begin{equation}\label{eq:der1S}
\begin{aligned}
	&\partial_{1,u}\mathcal S_{E\backslash I}(u_0,v_0, u_1, v_1)=-\sqrt{V_{E\backslash I}(\Gamma(u_0,v_0))}\scp{\frac{z'_{E\backslash I}(0)}{|z'_{E\backslash I}(0)|}, \Gamma_u(u_0,v_0)}\\
	&\partial_{1,v}\S_{E\backslash I}(u_0,v_0, u_1, v_1)=-\sqrt{V_{E\backslash I}(\Gamma(u_0,v_0))}\scp{\frac{z'_{E\backslash I}(0)}{|z'_{E\backslash I}(0)|}, \Gamma_v(u_0,v_0)}\\
	&\partial_{2,u}\S_{E\backslash I}(u_0,v_0, u_1, v_1)=\sqrt{V_{E\backslash I}(\Gamma(u_1,v_1))}\scp{\frac{z'_{E\backslash I}(T)}{|z'_{E\backslash I}(T)|}, \Gamma_u(u_1,v_1)}\\
	&\partial_{2,v}\S_{E\backslash I}(u_0,v_0, u_1, v_1)=\sqrt{V_{E\backslash I}(\Gamma(u_1,v_1))}\scp{\frac{z'_{E\backslash I}(T)}{|z'_{E\backslash I}(T)|}, \Gamma_v(u_1,v_1)},
\end{aligned}	
\end{equation} 
where the subscripts $1$ or $2$ refer respectively to the first or second pair $(u,v)$.\\
In analogy to the classical theory of billiards, and dynamical systems in general, we will call the quantities $\mathcal S_{E\backslash I}$ the \textit{generating functions} of our dynamics: they will have a crucial role in the forthcoming analysis. In particular, as a direct consequence of \eqref{eq:der1S}, both the reflection and the refraction laws can be expressed in terms of derivatives of $\S_{E\backslash I}$: 
\begin{itemize}
	\item for the reflective case, let us take $p_0, p_1, p_2\in\Gamma(U)$, $p_i=\Gamma(u_i, v_i)$, $i=0,1,2$, connected by a concatenation of arcs from $p_0$ to $p_2$ passing through $p_1$: the reflection holds in $p_1$ if and only if
	\begin{equation}\label{eq:riflS}
		\begin{cases}
		\partial_{2,u}\S_I(u_0,v_0, u_1, v_1)+\partial_{1,u}\S_I(u_1,v_1, u_2, v_2)=0, \\ \partial_{2,v}\S_I(u_0,v_0, u_1, v_1)+\partial_{1,v}\S_I(u_1,v_1, u_2, v_2)=0. 
		\end{cases}
	\end{equation}
	\item In  the refractive case, let us suppose that $p_0$ and $p_1$ are connected by an outer arc, while $p_1$ and $p_2$ by an inner one. Then, Snell's law in $p_1$ is equivalent to 
	\begin{equation}\label{eq:refrS}
		\begin{cases}
			\partial_{2,u}\S_E(u_0,v_0, u_1, v_1)+\partial_{1,u}\S_I(u_1,v_1, u_2, v_2)=0, \\ \partial_{2,v}\S_E(u_0,v_0, u_1, v_1)+\partial_{1,v}\S_I(u_1,v_1, u_2, v_2)=0. 
		\end{cases}
	\end{equation} 
\end{itemize}
In the next Sections, we will see how to use the above relations to determine the linear stability of homothetic equilibrium trajectories in both reflective and refractive billiards. 

\subsection{Stability of the homothetics in the reflective case} \label{ssec:stab_Kep}
 Let us start by expressing the first return map $F$ in a suitable set of one-dimensional variables: with reference to Figure \ref{fig:primo ritorno Kep}, once we fix $\H$ the initial and final conditions $(p_0, w_0)$, $(p_1, w_1)$ are uniquely determined by the quadruplets $(u_0, v_0, x_0, y_0),(u_1, v_1, x_1, y_1)\in U\times \R^2 $ such that $p_i=\Gamma(u_i, v_i)$, $x_i=\scp{w_i/|w_i|, \Gamma_u(u_i, v_i)}$ and $y_i=\scp{w_i/|w_i|, \Gamma_v(u_i, v_i)}$. In this set of coordinates, the homothetic fixed point  $(\bp,-\sqrt{V_I(\bp)}\bp/|\bp|)$   corresponds to $(\bu, \bv, 0,0)$. We can then construct the function 
\begin{equation}
	\tilde F : (u_0, x_0, v_0, y_0)\mapsto (u_1, x_1, v_1, y_1), 
\end{equation}
expressing the first return map locally around $(\bu,0, \bv, 0)$. We observe that, in view of 
Eqs. \eqref{eq:der1S}, $\tilde F$ is implicitly defined by the equation $\Phi(u_0, x_0, u_1, x_1, v_0, y_0, v_1, y_1)=\underline 0$, where $\Phi$ is a multi-valued function whose components are given by
\begin{equation}\label{eq;Phi}
	\begin{aligned}
		\partial_{1,u}\S_{I}(u_0,v_0, u_1, v_1)+\sqrt{V_{I}(\Gamma(u_0,v_0))}x_0\\
		\partial_{2,u}\S_{I}(u_0,v_0, u_1, v_1)-\sqrt{V_{I}(\Gamma(u_1,v_1))}x_1\\
		\partial_{1,v}\S_{I}(u_0,v_0, u_1, v_1)+\sqrt{V_{I}(\Gamma(u_0,v_0))}y_0\\		
		\partial_{2,v}\S_{I}(u_0,v_0, u_1, v_1)-\sqrt{V_{I}(\Gamma(u_1,v_1))}y_1. 
		\end{aligned}
\end{equation}
Imposing $\bar q=(\bu, 0, \bv, 0)$, it is easy to verify that $\Phi(\bar q,\bar q)=0$: our aim now is to use the implicit function theorem to find $D\tilde F(\bar q)$, and then study its spectral properties.\\
The first step consists in finding the second derivatives of $\S_I$ in $\bar q$: the procedure we use is a generalization in three dimensions what has been done in \cite{deblasiterracinirefraction}, where the interested reader can find more details. \\
Recalling the choice of the local chart $\Gamma$ described in Remark \ref{rem:par} and making use of a suitable regularization reasoning, one finds the following Lemma, whose proof is given in Appendix \ref{sec:app}.
\begin{lemma}\label{lem:der_SI_hom}
	The second derivatives of $\S_I$ in the homothetic point $(\bu,\bv, \bu, \bv)$ are given by \begin{equation}\label{eq:der2S}
		\begin{aligned}
			&\partial_{1,u}^2\S_I(\bu,\bv, \bu, \bv)=\partial_{2,u}^2\S_I(\bu,\bv, \bu, \bv)=\mathcal I_0+\eta_1\\
			&\partial_{1, v}^2\S_I(\bu,\bv, \bu, \bv)=\partial_{2,v}^2\S_I(\bu,\bv, \bu, \bv)=\mathcal I_0+\eta_2\\
			&\partial_{\substack{1,u\\2,u}}\S_I(\bu,\bv, \bu, \bv)=\partial_{\substack{2,u\\ 1,u}}\S_I(\bu,\bv, \bu, \bv)=-\mathcal I_0\\
			&\partial_{\substack{1,v\\2,v}}\S_I(\bu, \bv, \bu, \bv)=\partial_{\substack{2,v\\1,v}}\S_I(\bu, \bv, \bu, \bv)=-\mathcal I_0\\
			&\partial_{\substack{1,u\\1,v}}\S_I(\bu, \bv, \bu, \bv)=\partial_{\substack{1,u\\2,v}}\S_I(\bu, \bv, \bu, \bv)=\partial_{\substack{2,u\\1,v}}\S_I(\bu, \bv, \bu, \bv)=\partial_{\substack{2,u\\2,v}}\S_I(\bu, \bv, \bu, \bv)=0\\
			&\mathcal I_0=-\frac{\mu}{4 |\bar p|^2\sqrt{V_I(\bar p)}}, \ \eta_1=\sqrt{V_I(\bar p)}\left(\frac{1}{|\bar p|}-\kappa_1\right), \ \eta_2=\sqrt{V_I(\bar p)}\left(\frac{1}{|\bar p|}-\kappa_2\right). 
		\end{aligned}
	\end{equation}
\end{lemma}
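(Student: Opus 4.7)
The plan is to differentiate once more the identities in \eqref{eq:der1S} and evaluate at $(\bu, \bv, \bu, \bv)$. At the homothetic configuration the outgoing velocity $z'_I(0)$ is parallel to $-\bp/|\bp|$, while $\Gamma_u(\bu, \bv)$ and $\Gamma_v(\bu, \bv)$ are tangent to $S$ at $\bp$ and therefore orthogonal to $\bp$. Consequently $\scp{z'_I(0)/|z'_I(0)|, \Gamma_u(\bu, \bv)}$ and $\scp{z'_I(0)/|z'_I(0)|, \Gamma_v(\bu, \bv)}$ both vanish at the base point, and every second derivative of $\S_I$ splits cleanly into a purely \emph{geometric} contribution (derivatives of the tangent fields $\Gamma_u,\Gamma_v$) and a purely \emph{dynamical} contribution (variations of the Keplerian velocity with respect to the endpoint parameters), because any term in which $\sqrt{V_I(\Gamma(\cdot,\cdot))}$ is itself differentiated comes multiplied by one of these vanishing scalar products.

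The geometric part is read off directly from Remark \ref{rem:par}. Using \eqref{eq:par_giusta} one computes $\sqrt{V_I(\bp)}\scp{-\bp/|\bp|,\Gamma_{uu}(\bu, \bv)} = \sqrt{V_I(\bp)}\kappa_1$, and combining this with the $1/|\bp|$ factor that arises when re-expanding $\partial_u\bigl(\scp{z'_I/|z'_I|,\Gamma_u}\sqrt{V_I\circ\Gamma}\bigr)$ at coinciding endpoints (a factor which coincides with $\partial_u^2|\Gamma(\bu, \bv)|$ in \eqref{eq:der_par}) gives the total geometric contribution $\eta_1 = \sqrt{V_I(\bp)}(1/|\bp| - \kappa_1)$ to both $\partial_{1,u}^2\S_I$ and $\partial_{2,u}^2\S_I$, and symmetrically $\eta_2$ to $\partial_{1,v}^2\S_I$ and $\partial_{2,v}^2\S_I$. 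All mixed geometric contributions vanish, because $\scp{\bp,\Gamma_{uv}(\bu,\bv)}=0$ by \eqref{eq:par_giusta} and because the two principal directions are orthogonal.

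The dynamical part requires differentiating the outgoing and incoming Keplerian velocities with respect to the boundary parameters along the homothetic arc, and this is the main obstacle of the proof: the inner two-point boundary-value problem at the homothetic configuration is exactly singular, being a collision-ejection, and a direct implicit-function argument on $\S_I$ cannot be set up. I would overcome this by mimicking the strategy used in two dimensions in \cite{deblasiterracinirefraction}, replacing Levi-Civita by the three-dimensional Kustaanheimo-Stiefel regularization of Appendix \ref{sec:app}. Lifting the inner arc to a harmonic-oscillator orbit in $\R^4$ makes the dependence of the outgoing velocity on the (lifted) endpoints smooth through the collision, and the second variation of the regularized Jacobi length produces the universal Keplerian coefficient $\mathcal I_0 = -\mu/(4|\bp|^2\sqrt{V_I(\bp)})$. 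The $SO(3)$-invariance of $V_I$ then reduces the four-dimensional variational calculation to two identical planar sub-problems, one in the plane spanned by $\bp$ and $\Gamma_u(\bu, \bv)$, the other in the plane spanned by $\bp$ and $\Gamma_v(\bu, \bv)$, each of which is a verbatim copy of the planar calculation already performed in op.\ cit. This produces $+\mathcal I_0$ on the diagonal entries $\partial_{1,u}^2, \partial_{2,u}^2, \partial_{1,v}^2, \partial_{2,v}^2$ and $-\mathcal I_0$ on the same-variable crossed entries $\partial_{1,u}\partial_{2,u}$ and $\partial_{1,v}\partial_{2,v}$, while the orthogonality of the two principal directions kills all $u$-$v$ mixed dynamical terms. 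Summing the geometric and dynamical contributions gives the eight identities in \eqref{eq:der2S}.
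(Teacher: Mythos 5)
Your proposal follows essentially the same route as the paper: choose the adapted chart of Remark \ref{rem:par}, exploit the fact that at the homothetic configuration the velocity is radial so the terms involving derivatives of $\sqrt{V_I\circ\Gamma}$ drop out, read the curvature contributions $\eta_1,\eta_2$ off \eqref{eq:par_giusta}--\eqref{eq:der_par}, and handle the collision singularity of the dynamical part via the Kustaanheimo--Stiefel lift, computing $\mathcal I_0$ from the second variation of the regularized Jacobi length (the paper organizes this slightly differently, pulling the whole surface back to the quaternion space via $\Gamma^\pm$ and working with $\S_{KS}=\tfrac12\S_I$ throughout, but the content is the same). The one place where you under-deliver is the quantitative core: the endpoint variations of the regularized collision arc --- the paper's Eqs. \eqref{eq:variazioni}, which are what actually produce $+\mathcal I_0$ on the diagonal and $-\mathcal I_0$ on the crossed $1,u$--$2,u$ and $1,v$--$2,v$ entries --- are asserted to be a ``verbatim copy'' of the planar Levi--Civita computation, whereas the $KS$ pullback rescales the tangent vectors to $\mp(2\sqrt{R})^{-1}\hj$, $\mp(2\sqrt{R})^{-1}\hi$ and transforms the second fundamental form to $(2\sqrt{R})^{-1}\left(\tfrac{1}{2R}-\kappa_i\right)$, adjustments the paper must (and does) carry out explicitly; relatedly, your bookkeeping attributing the $1/|\bp|$ piece of $\eta_1$ to ``re-expanding'' the first-derivative identity is garbled, since that piece actually comes from the velocity-variation term rather than from $\partial_u^2|\Gamma(\bu,\bv)|$ alone.
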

Let us then return to Eqs. \eqref{eq;Phi}, and define
\begin{equation}
	D_0=D_{(u_0,x_0, v_0, y_0)}\Phi(\bar q ,\bar q),\quad D_1=D_{(u_1, x_1, v_1, y_1)}\Phi(\bar q ,\bar q), 
\end{equation} 
	respectively the Jacobian matrices of $\Phi$ with respect to the first and second quadruplet, computed in $(\bar q,\bar q)$. We will prove that $D_1$ is nonsingular, confirming that $\tilde F: (u_0, x_0,v_0, y_0)\mapsto (u_1, x_1, v_1, y_1)$ is well defined locally around $\bar q$, and therefore obtaining  an explicit formula for its Jacobian matrix, given by 
	\begin{equation}
		D\tilde F(\bar q)= -\left(D_1\right)^{-1} \ D_0. 
	\end{equation} 
	By the choice of the parameterisation $\Gamma$, one has that all the above matrices are block diagonal, and by direct computation, one finds
	\begin{equation}
		\det(D_1)=V_I\left(\Gamma(\bu,\bv)\right)\partial_{\substack{1,u\\2,u}}\S_I(\bu,\bv,\bu,\bv)\partial_{\substack{1,v\\2,v}}\S_I(\bu,\bv,\bu,\bv)=
		 V_I\left(\Gamma(\bu,\bv)\right)\mathcal I_0^2\neq0. 
	\end{equation}
	One can then proceed and compute
	\begin{equation}
		\begin{aligned}
			D\tilde F(\bar q) & = \begin{pmatrix}
				C_u & \boldsymbol{0}\\
				\boldsymbol{0} & C_v
			\end{pmatrix}\\
			C_u & =\begin{pmatrix}
				-\frac{\partial_{u_0}^2S_I(\bu,\bv,\bu,\bv)}{\partial_{u_0u_1}S_I(\bu,\bv,\bu,\bv)} & -\frac{\sqrt{V_I\left(\Gamma(\bu,\bv)\right)}}{\partial_{u_0u_1}S_I(\bu,\bv,\bu,\bv)}\\[20pt]
				-\frac{\partial_{u_0}^2S_I(\bu,\bv,\bu,\bv)\partial_{u_1}^2S_I(\bu,\bv,\bu,\bv)}{\partial_{u_0u_1}S_I(\bu,\bv,\bu,\bv)\sqrt{V_I\left(\Gamma(\bu,\bv)\right)}}+\frac{\partial_{u_0u_1}S_I(\bu,\bv,\bu,\bv)}{\sqrt{V_I\left(\Gamma(\bu,\bv)\right)}} & -\frac{\partial_{u_1}^2S_I(\bu,\bv,\bu,\bv)}{\partial_{u_0u_1}S_I(\bu,\bv,\bu,\bv)}\end{pmatrix}\\[20pt]
			C_v & =\begin{pmatrix}
				-\frac{\partial_{v_0}^2S_I(\bu,\bv,\bu,\bv)}{\partial_{v_0v_1}S_I(\bu,\bv,\bu,\bv)} & -\frac{\sqrt{V_I\left(\Gamma(\bu,\bv)\right)}}{\partial_{v_0v_1}S_I(\bu,\bv,\bu,\bv)}\\[20pt]
				-\frac{\partial_{v_0}^2S_I(\bu,\bv,\bu,\bv)\partial_{v_1}^2S_I(\bu,\bv,\bu,\bv)}{\partial_{v_0v_1}S_I(\bu,\bv,\bu,\bv)\sqrt{V_I\left(\Gamma(\bu,\bv)\right)}}+\frac{\partial_{v_0v_1}S_I(\bu,\bv,\bu,\bv)}{\sqrt{V_I\left(\Gamma(\bu,\bv)\right)}} & -\frac{\partial_{v_1}^2S_I(\bu,\bv,\bu,\bv)}{\partial_{v_0v_1}S_I(\bu,\bv,\bu,\bv)}
			\end{pmatrix}. 
		\end{aligned}
\end{equation}\\

It is possible to check that, as expected while dealing with area-preserving discrete maps, $\det(C_u)=\det{(C_v)}=1$, and  the characteristic polynomial associated to $D\tilde F(\bar q)$ can be factorised as $p(\lambda)=p_u(\lambda)p_v(\lambda)$, where $p_{u\backslash v}(\lambda)=\lambda^2-tr(C_{u\backslash v})+1$. The matrix $D\tilde F(\bar q)$ admits then two pairs of eigenvalues $(\lambda_u, \lambda_u^{-1})$, $(\lambda_v, \lambda_v^{-1})$ corresponding respectively to variations of $(u_0, x_0)$ from $(\bu, 0)$ and $(v_0, y_0)$ from $(\bv, 0)$. Such eigenvalues could be either real or complex, depending on the sign of the discriminants
\begin{equation}
	\Delta_u=tr(C_u)^2-4, \quad \Delta_v=tr(C_v)^2-4: 
\end{equation} 
 taking into account Eqs. \eqref{eq:der2S}, one finds 
 \begin{equation}
 	\begin{aligned}
 		&\Delta_u=\frac{4 \eta_1(2 \mathcal I_0+\eta_1)}{\mathcal I_0^2}=\frac{4}{\mathcal I_0^2}\sqrt{V_I(\bar p)}\left(\frac{1}{|\bar p|}-\kappa_1\right)\left[\sqrt{V_I(\bar p)}\left(\frac{1}{|\bar p|}-\kappa_1\right)-\frac{\mu}{2|\bar p|^2\sqrt{V_I(\bar p)}}\right],\\
 		&\Delta_v=\frac{4}{\mathcal I_0^2}\sqrt{V_I(\bar p)}\left(\frac{1}{|\bar p|}-\kappa_2\right)\left[\sqrt{V_I(\bar p)}\left(\frac{1}{|\bar p|}-\kappa_2\right)-\frac{\mu}{2|\bar p|^2\sqrt{V_I(\bar p)}}\right]. 
 	\end{aligned}
 \end{equation} 
The linear stability of $(\bu, 0, \bv, 0)$ (and, as a consequence, of the homothetic trajectory) then depends both on the geometric properties of $S$ in $\bp$, in terms of principal curvatures, and on the physical parameters involved in the problem, namely, the inner energy $\H$ and the mass parameter $\mu$. In particular, it holds that:
\begin{itemize}
	\item whenever $\Delta_u>0, \Delta_v>0$, every eigenvalue is real, and  the homothetic is a saddle in every direction; 
	\item if $\Delta_u\cdot \Delta_v<0$, then we have a saddle-centre configuration; 
	\item if $\Delta_u<0$ and $\Delta_v<0$, then we can conclude that the homothetic trajectory is a centre, thus linearly stable. 
\end{itemize}
\begin{remark}
Let us notice that, if $\Delta_u=0$ or $\Delta_v=0$, we are in a degenerate case, where we can not deduce the linear stability of our equilibrium trajectory. This is for example what happens when $\kappa_i=|\bp|^{-1}$ for either $i=1$ or $i=2$: geometrically, in this case the normal section of $S$ in $\bp$ either in the direction of $\Gamma_u(\bu, \bv)$ or in the direction of $\Gamma_v(\bu, \bv)$ is locally a circle of radius $|\bp|$. \\
If in particular $S$ is a sphere of radius $R$, it results that in every point $\ku=\kd=1/R$: in this case, every radial direction corresponds to a degenerate homothetic fixed point for the first return map, forming a two-dimensional invariant subspace of the four-dimensional phase space. 
\end{remark}

The previous conclusions prove Theorem \ref{thm:stability_intro}, in the reflective case, which is the first fundamental result of this paper.

\subsection{Stability of the homothetics in the refractive case}\label{ssec:stab_hom_snell}

In the refractive case we use the same procedure described in Section \ref{ssec:stab_Kep}, with two main differences: the first is that, here, we have to take into account composite concatenations of outer and inner arcs, then the dimensionality of the problem increases. Secondly, we will have to consider refraction Snell's law, whose variational characterization in terms of generating functions is given in Eq. \eqref{eq:refrS}. \\
Let us recall the first return map $G$ associated to the refractive billiard (see again Figure \ref{fig:primo_rit_snell}): given $(p_0, w_0)$ as the initial conditions of an outward-pointing harmonic arc, it returns $(p_1, w_1)$ final conditions obtained after a concatenation outer-inner arc, including refractions, passing through the point $\tilde p\in S$. As in the reflective case, whenever $p_0, \tilde p, p_1\in V$, we can describe the local dynamics by means of a set of four one-dimensional coordinates, describing points on $S$ and related velocities. In particular, let us consider $(u_0, v_0), (\tilde u, \tilde v), (u_1, v_1)\in U$ and $(x_0, y_0), (x_1, y_1)\in \R^2$ such that 
\begin{equation}
	\begin{aligned}
	&p_0=\Gamma(u_0, v_0), \ \tilde p=\Gamma(\tilde u, \tilde v), \ p_1=\Gamma(u_1, v_1), \\
	&x_i=\scp{\frac{w_i}{|w_i|}, \Gamma_u(u_i, v_i)}, \ y_i=\scp{\frac{w_i}{|w_i|}, \Gamma_v(u_i, v_i)}, \ i=0,1. 
	\end{aligned}
\end{equation}
By virtue of Eqs. \eqref{eq:der1S} and \eqref{eq:refrS}, whenever such coordinates represent a concatenation of refracted arcs, they have to satisfy the relations 
\begin{equation}\label{eq:Psi_comp}
	\begin{aligned}
		&\partial_{1,u}\S_E(u_0,v_0, \tilde u, \tilde v)+\sqrt{V_E(\Gamma(u_0,v_0))}x_0=0\\
		&\partial_{2,u}\S_E(u_0,v_0, \tilde u, \tilde v)+\partial_{1, u}\S_I(\tilde u,\tilde v, u_1, v_1)=0\\
		&\partial_{2,u}\S_I(\tilde u,\tilde v, u_1, v_1)-\sqrt{V_E(\Gamma(u_1,v_1))}x_1=0\\
		&\partial_{1,v}\S_E(u_0,v_0, \tilde u, \tilde v)+\sqrt{V_E(\Gamma(u_0,v_0))}y_0=0\\
		&\partial_{2,v}\S_E(u_0,v_0, \tilde u, \tilde v)+\partial_{1,v}\S_I(\tilde u,\tilde v, u_1, v_1)=0\\
		&\partial_{2,v}\S_I(\tilde u,\tilde v, u_1, v_1)-\sqrt{V_E(\Gamma(u_1,v_1))}y_1=0, 
	\end{aligned}
\end{equation}
where the third and last equations are a direct consequences of Eq. \eqref{eq:cons_tang}.  Eqs. \eqref{eq:Psi_comp} define implicitly the refractive first return map in the new coordinates, and can be summarized in the form $\Psi(u_0, x_0, v_0, y_0, \tilde u, \tilde v, u_1, x_1, v_1, y_1)=\underline 0$, where the components of $\Psi$ are the l.h.s. of \eqref{eq:Psi_comp}. The homothetic equilibrium solution in the refractive case corresponds to the coordinates $u_0=\tilde u=u_1=\bu$, $v_0=\tilde v=v_1=\bv$, $x_0=x_1=y_0=y_1=0$: again, we will use the implicit function theorem to compute the Jacobian matrix of the first return map 
\begin{equation}\label{eq:frmG}
	\tilde G: (u_0, x_0, v_0, y_0)\mapsto (u_1, x_1, v_1, y_1)
\end{equation}  
in the fixed point $\bar q=(\bu, 0, \bv, 0)$. Since derivatives of the components of $\Phi$ are involved, we have to use again the second derivatives of the generating functions $\S_{E\backslash I}$ computed in the homothetic trajectories: while the ones of $\S_I$ have been already given in Lemma \ref{lem:der_SI_hom}, those of $\S_E$ are listed below. 
\begin{lemma}\label{lem:der_SE_hom}
	The second derivatives of $\S_E$ in the homothetic point $(\bu, \bv, \bu, \bv)$ are given by 
	\begin{equation}\label{eq:der2SE}
		\begin{aligned}
			&\partial_{1,u}^2\S_E(\bu, \bv, \bu, \bv)&&=\partial_{2,u}^2 \S_E(\bu,\bv,\bu, \bv)=\mathcal E_0+\epsilon_1\\
			&\partial_{1,v}^2\S_E(\bu, \bv, \bu, \bv)&&=\partial_{2,v}^2 \S_E(\bu,\bv,\bu, \bv)=\mathcal E_0+\epsilon_2\\
			&\partial_{\substack{1,u\\2,u}}\S_E(\bu, \bv, \bu, \bv)&&=\partial_{\substack{2,u\\1,u}} \S_E(\bu,\bv,\bu, \bv)=-\mathcal E_0\\
			&\partial_{\substack{1,v\\2,v}}\S_E(\bu, \bv, \bu, \bv)&&=\partial_{\substack{2,v\\1,v}} \S_E(\bu,\bv,\bu, \bv)=-\mathcal E_0\\
			&\partial_{\substack{1,u\\1,v}}\S_E(\bu, \bv, \bu, \bv)&&=\partial_{\substack{1,v\\1,u}}\S_E(\bu, \bv, \bu, \bv)=\partial_{\substack{1,u\\2,v}}\S_E(\bu, \bv, \bu, \bv)=\partial_{\substack{2,v\\1, u}}\S_E(\bu, \bv, \bu, \bv)\\
			&\quad &&=\partial_{\substack{1,v\\2, u}}\S_E(\bu, \bv, \bu, \bv) =\partial_{\substack{2,u\\1,v}}\S_E(\bu, \bv, \bu, \bv)=\partial_{\substack{2,u\\2,v}}\S_E(\bu, \bv, \bu, \bv)\\&\quad&&=\partial_{\substack{2, v\\2,u}}\S_E(\bu, \bv, \bu, \bv)=0, 
		\end{aligned}
	\end{equation}
where 
\begin{equation}\label{eq:E0eps}
	\mathcal E_0=\frac{\mathcal E}{2|\bar p|\sqrt{V_E(\bar p)}},\ \epsilon_1=\sqrt{V_E(\bar p)}\left(\kappa_1-\frac{1}{|\bar p|}\right), \ \epsilon_2=\sqrt{V_E(\bar p)}\left(
	\kappa_2-\frac{1}{|\bar p|}\right). 
\end{equation}
\end{lemma}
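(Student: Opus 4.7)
The plan is to mirror the strategy used for the inner Lemma \ref{lem:der_SI_hom}, exploiting the fact that the outer homothetic arc is an explicit non-collisional solution of the harmonic oscillator $z''=-\omega^2 z$, so that no regularization is required. Along this arc one has $z_E(t)=\alpha(t)\,\bp/|\bp|$ with $\alpha(t)=|\bp|\cos(\omega t)+\sqrt{2V_E(\bp)}\sin(\omega t)/\omega$; the transit time $T$ is determined by $\alpha(T)=|\bp|$ with $\alpha'(T)<0$, and a direct calculation yields $\cos(\omega T)=1-2V_E(\bp)/\mathcal E$ together with $\sin(\omega T)=\omega|\bp|\sqrt{2V_E(\bp)}/\mathcal E$. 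These closed-form values will feed every subsequent evaluation at the homothetic configuration.

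I would start from the first-derivative identities \eqref{eq:der1S} specialized to $\S_E$ and differentiate each of them once more with respect to the endpoint coordinates. Two kinds of contributions arise. The \emph{geometric} one collects terms in which the parametrization is differentiated: the resulting pairings $\scp{\hat v_0,\Gamma_{uu}(\bu,\bv)}$, $\scp{\hat v_0,\Gamma_{vv}(\bu,\bv)}$, $\scp{\hat v_0,\Gamma_{uv}(\bu,\bv)}$ and their counterparts at the endpoint $p_1$ are evaluated at $\hat v_0=\hat v_1=\bp/|\bp|$ and, by the identities \eqref{eq:par_giusta}, yield precisely the principal-curvature pieces $\sqrt{V_E(\bp)}\kappa_1$ and $\sqrt{V_E(\bp)}\kappa_2$ together with the vanishing of every $u$--$v$ cross term, since $\Gamma$ is aligned with the principal directions. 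The extra terms coming from differentiating $\sqrt{V_E(\Gamma)}$ drop out altogether at the homothetic point, because they are multiplied by $\scp{\hat v_0,\Gamma_u}=\scp{\hat v_0,\Gamma_v}=0$.

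The \emph{dynamical} contribution requires computing how the initial and final unit velocities $\hat v_0,\hat v_1$ depend on the endpoints $p_0,p_1$. Since the Jacobi equation along a harmonic trajectory coincides with the equation of motion itself, the Jacobi fields are of the form $A\cos(\omega t)+(B/\omega)\sin(\omega t)$ and decouple into the three coordinate components. Using the explicit expression $v_0=\omega(p_1-p_0\cos(\omega T))/\sin(\omega T)$ together with the energy constraint $|v_0|^2=2V_E(p_0)$, which determines $T=T(p_0,p_1)$ implicitly, I would differentiate once in $p_0$ and once in $p_1$ at the homothetic configuration $p_0=p_1=\bp$. The $u$--$v$ cross derivatives again vanish by the coordinatewise decoupling of the Jacobi equation; the diagonal evaluations, after substituting the closed-form values of $\sin(\omega T)$ and $\cos(\omega T)$ given above, produce the constant $\mathcal E_0=\mathcal E/(2|\bp|\sqrt{V_E(\bp)})$ for the same-endpoint blocks $\partial_{1,u}^2\S_E$ and $\partial_{2,u}^2\S_E$ (together with a residual $-\sqrt{V_E(\bp)}/|\bp|$ that combines with the geometric curvature contribution to form $\epsilon_i$) and the value $-\mathcal E_0$ for the mixed-boundary blocks.

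The main technical obstacle is the bookkeeping for the implicit time variation: the derivative of $T$ with respect to $p_0$ and $p_1$ must be extracted consistently from the energy constraint and fed back into the Jacobi field formula, in order to reach the exact cancellation $\mathcal E_0-\sqrt{V_E(\bp)}/|\bp|$ needed to build $\epsilon_i=\sqrt{V_E(\bp)}(\kappa_i-1/|\bp|)$ and to produce the correct sign $-\mathcal E_0$ on the mixed-boundary blocks. Once these algebraic manipulations are done, one reads off \eqref{eq:der2SE} directly; the symmetry $\S_E(p_0,p_1)=\S_E(p_1,p_0)$ can be invoked as a consistency check that the second derivatives at the two endpoints coincide, as claimed.
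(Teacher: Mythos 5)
Your proposal is correct and follows essentially the same route as the paper, which for the outer case merely defers to the analogous two-dimensional computation of \cite[Section 5.2]{deblasiterracinirefraction}: differentiate the first-derivative identities \eqref{eq:der1S}, split each second derivative into a geometric term (evaluated via \eqref{eq:par_giusta}, giving the $\kappa_i$ contributions and the vanishing of the $u$--$v$ cross terms) and a dynamical term coming from the variation of the explicit harmonic arc, whose evaluation with $\cos(\omega T)=1-2V_E(\bp)/\mathcal E$ and $\sin(\omega T)=\omega|\bp|\sqrt{2V_E(\bp)}/\mathcal E$ indeed yields $\mathcal E_0-\sqrt{V_E(\bp)}/|\bp|$ on the diagonal blocks and $-\mathcal E_0$ on the mixed ones. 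I verified these evaluations (noting in particular that at the homothetic point the $\partial T$-contribution is purely radial and hence pairs to zero with $\Gamma_u,\Gamma_v$, so the implicit time variation you flag as the main obstacle in fact drops out), and they reproduce \eqref{eq:der2SE} exactly.
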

The proof of the above result is analogous to the one already presented in \cite{deblasiterracinirefraction}, and is resumed in  Appendix \ref{sec:app}.\\
As in the reflective case, we will start by computing 
\begin{equation}
	A_0=D_{(u_0, x_0, v_0, y_0)}\Psi(\bar q, \bu, \bv, \bar q),\  A_1=D_{(\tilde u, \tilde v, u_1, x_1, v_1, y_1)}\Psi(\bar q, \bu, \bv, \bar q): 
\end{equation}
if $\det(A_1)\neq 0$, then, locally around the homothetic point,  the function 
\begin{equation}
	\tilde G_1: (u_0, x_0, v_0, y_0)\mapsto(\tilde u, \tilde v, u_1, x_1, v_1, y_1), 
\end{equation}
whose last four component represent precisely the first return map $\tilde G$ as in Eq. \eqref{eq:frmG}, is well defined. \\
Again, by direct computations and using Eqs. \eqref{eq:Psi_comp}, \eqref{eq:E0eps}, \eqref{eq:der2S}, one finds 
\begin{equation}
	\det(A_1)=\left(\mathcal I_0 \mathcal E_0\sqrt{V_E(\bu, \bv)}\right)^2\neq 0
\end{equation}
and, as a consequence, 
\begin{equation}
	\begin{aligned}
	&D \tilde G(\bar q)= \begin{pmatrix}
		B_u & 0\\ 0 & B_v
	\end{pmatrix}, \qquad
	 B_u=\begin{pmatrix}
	b_{11} & b_{12}\\ b_{21} & b_{22}
\end{pmatrix}\\
& b_{11}=\frac{\epsilon_1(\epsilon_1+\eta_1+\mathcal I_0)+\mathcal E_0(2\epsilon_1+\eta_1+\mathcal{I}_0)}{\mathcal{E}_0\mathcal{I}_0}\\
& b_{12}=\sqrt{V_E(\Gamma(\bu,\bv))}\frac{\mathcal E_0+\epsilon_1+\mathcal{I}_0+\eta_1}{\mathcal{E}_0\mathcal{I}_0}\\
& b_{21}=\frac{2\mathcal E_0 \epsilon_1(\mathcal I_0+\eta_1)+\epsilon_1^2(\mathcal{I}_0+\eta_1)+\mathcal{E}_0\eta_1(2\mathcal{I}_0+\eta_1)+\epsilon_1\eta_1(2\mathcal{I}_0+\eta_1)}{\mathcal{E}_0\mathcal{I}_0\sqrt{V_E(\Gamma(\bu,\bv))}}\\
& b_{22}=\frac{\mathcal{E}_0(\mathcal I_0+\eta_1)+\epsilon_1(\mathcal I_0+\eta_1)+\eta_1(2\mathcal I_0+\eta_1)}{\mathcal{E}_0\mathcal{I}_0}. 
\end{aligned}
\end{equation}
and analogous relations for $B_v,$ replacing $\epsilon_1, \eta_1$ with $\epsilon_2, \eta_2$. Again, it is easy to prove that $\det(D\tilde G(\bar q))=\det(B_u)=\det(B_v)=1,$ and that its four eigenvalues are coupled into two pairs $(\lambda_u,\lambda_u^{-1})$, $(\lambda_v, \lambda_v^{-1})$, which are either real or conjugate complex depending on the sign of 
\begin{equation}
	\begin{aligned}
		\Delta'_u=tr(B_u)^2-4=\frac{(\epsilon_1+\eta_1)}{\mathcal{E}_0^2\mathcal{I}_0^2}\left(2\mathcal E_0+\epsilon_1+\eta_1\right)\left(2\mathcal I_0+\epsilon_1+\eta_1\right)\left(2\mathcal E_0+\epsilon_1+2\mathcal I_0+\eta_1\right)\\
		\Delta'_v=tr(B_v)^2-4=\frac{(\epsilon_2+\eta_2)}{\mathcal{E}_0^2\mathcal{I}_0^2}\left(2\mathcal E_0+\epsilon_2+\eta_2\right)\left(2\mathcal I_0+\epsilon_2+\eta_2\right)\left(2\mathcal E_0+\epsilon_2+2\mathcal I_0+\eta_2\right)
	\end{aligned}. 
\end{equation} 
 According to the same rules already described in Section \ref{ssec:stab_Kep}, the signs of $\Delta'_u$ and $\Delta'_v$ determine if the homothetic point $\bar q$, in the four-dimensional first return map for the refractive dynamics, is of centre-centre, saddle-centre or saddle-saddle type,. Substituting $\mathcal I_0, \mathcal E_0, \epsilon_{1\backslash2}, \eta_{1\backslash2}$ with their formulas in terms of principal curvatures and physical parameters, given in Eqs. \eqref{eq:der2S} and \eqref{eq:E0eps}, one finds Theorem \ref{thm:stability_intro}, in the refractive case. \\
A comparison between the result obtained in the present paper and the one contained in \cite[Theorem 1.1]{deblasiterracinirefraction} shows that here we obtain the same result where we consider variations parallel to the principal directions of $S$ in $\bp$. 
	
	\section{Chaotic behaviour}\label{sec:chaos}

	The main scope of Section \ref{sec:stability} was the analysis of the first return map, in both  reflective and refractive models, \emph{locally} around a particular class of equilibrium trajectories defined by  central configurations. In the present section, central configurations will be the key ingredient to prove a more global property on the dynamics, related to the presence of \textit{chaotic behaviour}, at least for a subset of initial conditions. More precisely, we will link the geometric properties of the domain $D$ to the presence of topological chaos for large inner energies. 
	As already described in Section \ref{sec:intro}, the proving scheme is based on finding  a conjugation between the first return map, expressed in a suitable set of variables, and the Bernoulli shift of two symbols, constructing trajectories which are the composition of arcs that \textit{shadow} either homothetic segments or a juxtaposition of them. What we will obtain is a \emph{symbolic dynamics,} where billiard trajectories of the two models can be encoded by sequences of symbols (see \cite{Dev_book}).

	\subsection{Chaoticity of the reflective Kepler billiard}\label{ssec:chaor_refl}
	Let us start by providing the proof of Theorem \ref{thm:intro_chaos} in the case of reflection billiards, and then suppose that the domain $D$ admits two non antipodal and non degenerate central configurations, called $\bp_1$ and $\bp_2$. As in the previous sections, let us construct two local charts around $\bp_1, \bp_2$, given by $\Gamma^{(i)}: U_i\to S$, $U_i\subset \R^2$ open, $i=1,2$, with the properties described by Remark \ref{rem:par}, where we will denote with $\ku^{(i)}, \kd^{(i)}$ the principal curvatures of $S$ in $\bp_i$. 	\\		
	Given $\varepsilon>0, $ let us define the squares
	\begin{equation}\label{eq:square}
		R^{(i)}_\varepsilon\eqdef\left[\uu{i}-\varepsilon, \uu{i}+\varepsilon\right]\times\left[\vv{i}-\varepsilon, \vv{i}+\varepsilon\right], \quad i=1,2,  
	\end{equation}
	and the corresponding compact images on $S$ given by $V^{(i)}_\ve\eqdef \G{i}\left(R^{(i)}_\ve\right)$. 
	By Theorem \ref{thm:existence_inner}, if $\ve$ is sufficiently small and we fix $\H$ sufficiently large, for every $p_0, p_1\in \cup_{i=1}^2V^{(i)}_\ve$ we have a unique inner arc $z_I(\cdot; p_0, p_1)$, not homotopic to $\overline{p_0p_1}$ in $span(0, p_0,p_1)$, connecting them; in the following, we will always assume $\H>\overline{\H}$, such threshold being defined in Theorem \ref{thm:existence_inner}.  \\
	As a consequence, in the above setting, we can ensure the good definition and regularity of the inner generating function  
	\begin{equation}
		\S_I: \left(\bigcup_{i=1}^2R^{(i)}_\ve\right)^2\to \R. 
	\end{equation} 
	\begin{definition}\label{def:conc}
		Given a bi-infinite sequence of points $\underline p=(p_k)_{k\in\Z}$, $p_k\in \cup_{i=1}^2V^{(i)}_\ve$, let us call $\tilde z(\cdot; \underline p): \R\to \overline{D}$ the unique concatenation of inner arcs connecting  $p_{k}$ to $p_{k+1}$, $k\in\Z$. \\
		Equivalently, given the sequence $\bar \xi = (u_k, v_k)_{k\in\Z}$ such that $(u_k, v_k)\in\cup_{i=1}^2 R^{(i)}_\ve$, $p_k=\G{1\backslash 2}(u_k, v_k)$, the same concatenation can be denoted by $\tilde z(\cdot; \underline \xi)$.  \\
		Possibly translating the time, we will always suppose that $\tilde z(0; \un p)=p_0. $
	\end{definition}
	We notice that, whenever $\underline p$ is a $n$-periodic sequence with periodicity modulus $[\underline p]=(p_1, \dots, p_n)$, the concatenation $\tilde z(\cdot; \underline p)$ is itself a periodic trajectory composed by $n$ Keplerian arcs.
	\begin{definition}
		Given $\un p$ periodic with $\underline [\underline p]=(p_1, \dots,p_n)$, and $\bar \xi$ the corresponding sequence of parameters, $[\un \xi]=(u_1,v_1, \dots, u_n, v_n)$, the \emph{total Jacobi length} of the concatenation $\tilde z(\cdot; \un p)$ is given by 
		\begin{equation}\label{eq:periodic_length}
			W(\un p)=\sum_{i=1}^{n}\mathcal L_I(z_I(\cdot; p_i, p_{i+1}))=\sum_{i=1}^n\S_I(u_i, v_i, u_{i+1}, v_{i+1}),
		\end{equation}
	with the convention that $p_{n+1}=p_1$. Again, with an abuse of notation we will denote the above quantity as well with $W(\un \xi)$. 
	\end{definition}
	Let us now start with the construction of our symbolic dynamics, finding a way to describe the concatenations $\tilde z$ through bi-infinite sequences of two symbols, each of them corresponding to the neighbourhood of one of the two central configurations. To this aim, let us define $\mathbb L\eqdef\{1,2\}^{\mathbb Z}$, endowed with the distance (see \cite{hasselblattkatok}) 
	 \begin{equation}\label{eq:distance}
	 	\forall \un s=(s_k)_{k\in \Z},\  \un s'=(s'_k)_{k\in \Z}\in \mathbb L\quad d(\un s, \un s')\eqdef\sum_{k\in \Z}\frac{\delta(s_k, s'_k)}{4^{|k|}}, \quad \delta(i,j)=\begin{cases}
	 		1  \ &\text{if }i\neq j\\
	 	0  \ &\text{if }i= j
	 	\end{cases}. 
	 \end{equation}
	\begin{definition}\label{def:realises}
		Given $\un p$ as in Definition \ref{def:conc} and the corresponding concatenation $\tilde z(\cdot; \un p)$, we say that it \emph{realises} a sequence $\un s\in \mathbb L$ if and only if $\forall k\in \Z \quad p_k\in V^{(s_k)}_\ve$,	namely, in terms of parameters $\un \xi$, it holds $(u_k, v_k)\in R^{(s_k)}_\ve$. 
	\end{definition}
	The correspondence between trajectories and bi-infinite sequences in $\mathbb L$ is illustrated in Figure \ref{fig:realises}. 
	\begin{figure}
		\centering
		\begin{overpic}[width=0.5\linewidth]{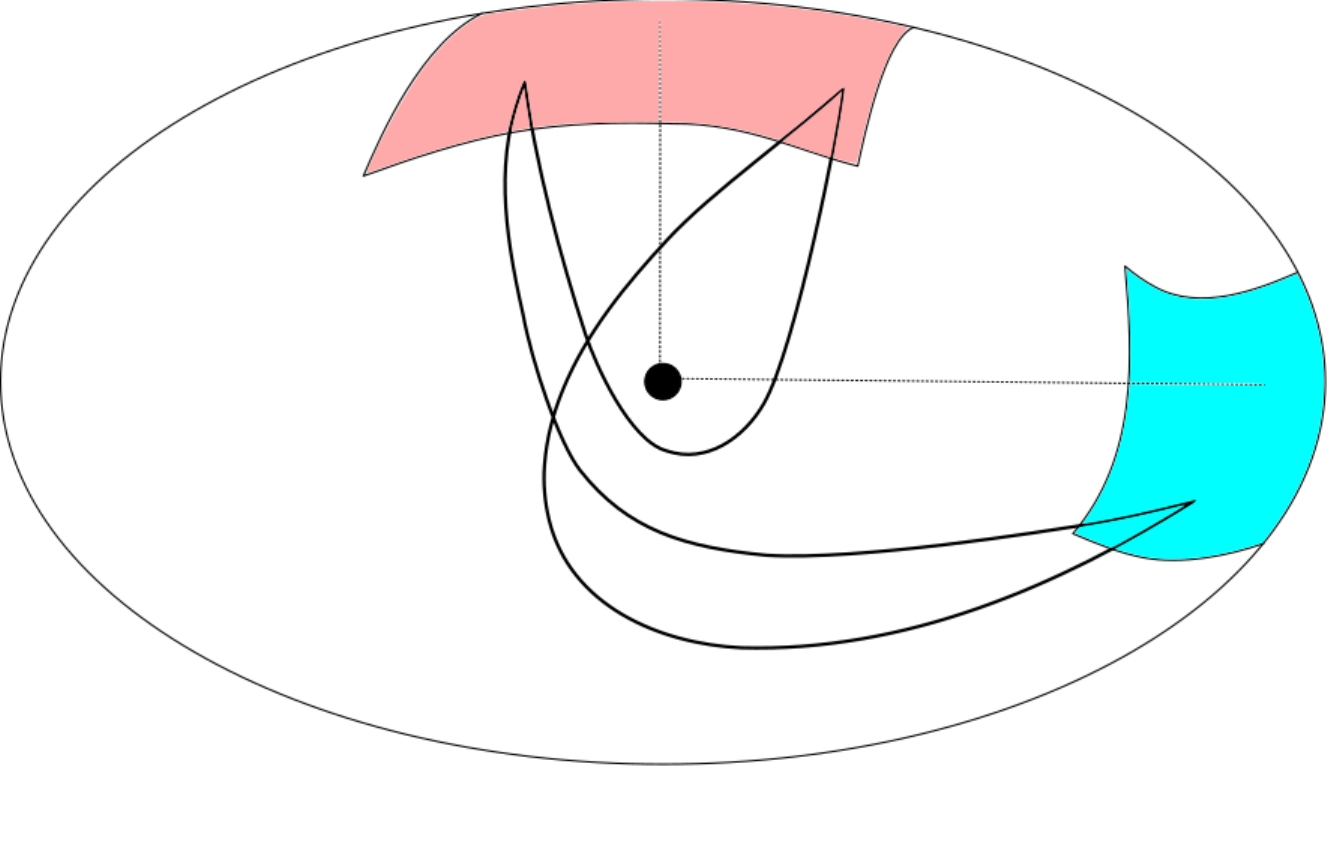}
			\put (37, 65) {\rotatebox{0}{$V^{(1)}_\ve$}}
			\put (40, 96) {\rotatebox{0}{$w_0$}}
			\put (92,37) {\rotatebox{0}{$\bp_2$}}
			\put (61,58) {$p_2$}
			\put (50, 60) {$\bp_1$}
			\put (34, 58) {$p_1$}
			\put (89.5, 26) {$p_3$}
			\put (100, 30) {$V^{(2)}_\ve$}
		\end{overpic}
		\caption{Illustration of a periodic trajectory realising the sequence $\un s\in\mathbb L$ with periodicity modulus $[s]=(1,1,2)$. It is composed by three Keplerian arcs, hitting the boundary respectively in $V^{(1)}_\ve$ $V^{(1)}_\ve$ and $V^{(2)}_\ve$. }
		\label{fig:realises}
	\end{figure}
	Although for any $\un p\in \otimes_{\Z}\left(\cup_{i=1}^2V^{(i)}_\ve\right)$ there exists a unique concatenation connecting them, it is clear that a priori nothing ensures that such concatenation satisfies the reflection law at every bouncing point: the next proposition guarantees that, at least in the periodic case, there is a one-to-tone correspondence between sequences in $\mathbb L$ and billiard trajectories realising them. 
	\begin{proposition}\label{prop:periodic}
		Possibly restricting $\ve$, there exists $\overline{\H}_1$ such that, for every  $\H>\overline{ \H}_1$ and $\un s \in \mathbb L$ periodic, exists a unique billiard trajectory that realises $\un s$.  
	\end{proposition}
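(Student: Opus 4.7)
The strategy is variational: I realize periodic billiard trajectories as critical points of the total Jacobi length acting as a generating function for closed concatenations. For an $n$-periodic $\un s \in \mathbb L$ I consider
\[
W_n(\un \xi) = \sum_{k=1}^{n} \S_I(u_k, v_k, u_{k+1}, v_{k+1}), \qquad (u_k, v_k) \in R^{(s_k)}_\ve,
\]
with cyclic convention $u_{n+1} = u_1$, $v_{n+1} = v_1$. Summing the first-order formulas \eqref{eq:der1S} over $k$ and comparing with the reflection characterization \eqref{eq:riflS}, an interior critical point of $W_n$ is exactly a closed concatenation of $n$ inner Keplerian arcs satisfying the reflection law at each bouncing point $\Gamma^{(s_k)}(u_k, v_k)$, i.e., a periodic billiard trajectory realizing $\un s$. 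The Proposition therefore reduces to proving the existence and uniqueness of such an interior critical point of $W_n$, uniformly in $\un s$.

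For existence I would minimize $W_n$ over the compact product $\prod_{k} R^{(s_k)}_\ve$ and check that the minimum lies in the interior. The key ingredient is the high-energy asymptotics of $\S_I$: by Theorem \ref{thm:existence_inner} and the differentiable dependence of Cauchy problems on initial data, as $\H \to \infty$ the unique inner arc connecting two given points converges to the straight segment between them, and
\[
\frac{\S_I(u_0, v_0, u_1, v_1)}{\sqrt{\H}} \longrightarrow \bigl|\Gamma^{(s_0)}(u_0, v_0) - \Gamma^{(s_1)}(u_1, v_1)\bigr|
\]
uniformly on the compact set. Consequently $W_n/\sqrt{\H}$ converges uniformly to the total Euclidean length of a closed polygonal path with vertices $(p_k)$. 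The non-antipodality of $\bp_1, \bp_2$, together with the fact that each $\bp_j$ is a non-degenerate critical point of $|\Gamma^{(j)}|$, ensures that for $\ve$ small the minimizer of the limiting length functional has each $p_k$ strictly inside $V^{(s_k)}_\ve$ (the optimal reflection point between neighbouring squares converging to the central configuration). Uniform convergence then transfers the interior property to $W_n$ itself once $\H > \overline \H_1$.

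Uniqueness of the interior critical point follows from strict convexity of $W_n$, which I would prove by analyzing its Hessian. The latter is block cyclic tridiagonal, with $2 \times 2$ blocks built from the second derivatives of $\S_I$ at the consecutive pairs along the concatenation. Lemma \ref{lem:der_SI_hom} provides these derivatives on the homothetic diagonal; extending the same regularization-based computation of Appendix \ref{sec:app} to off-diagonal pairs, one shows that the diagonal blocks have eigenvalues of order $\sqrt{V_I(\bp)}\bigl(|\bp|^{-1} - \kappa_i\bigr) \sim \sqrt{\H}$, while the cross-couplings, driven by $\mathcal I_0 = -\mu/(4|\bp|^2 \sqrt{V_I(\bp)})$, decay like $1/\sqrt{\H}$. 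Thus the Hessian is strictly diagonally dominant, hence positive definite, uniformly in $n$ and in $\un s$, for $\H$ sufficiently large.

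The main obstacle is precisely this uniformity: a single threshold $\overline \H_1$ and a single $\ve$ must work for every periodic sequence, of arbitrary period $n$. This forces the asymptotic expansion of $\S_I$ and of its first two derivatives to be established uniformly on the full product $\bigl(V^{(1)}_\ve \cup V^{(2)}_\ve\bigr)^2$, rather than only at homothetic diagonal points, and it is where the non-antipodal and non-degeneracy hypotheses on $\bp_1, \bp_2$ become essential — the former to keep the connecting Keplerian arcs uniformly transverse to $S$ and bounded away from secondary collisions, the latter to prevent the coefficients $\eta_i = \sqrt{V_I(\bp)}(|\bp|^{-1} - \kappa_i)$ controlling the Hessian from degenerating.
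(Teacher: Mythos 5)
Your overall framing --- realize the periodic trajectory as an interior critical point of the total Jacobi length $W$ on $\prod_k R^{(s_k)}_\ve$ --- is the same as the paper's, but two of your key steps rest on an incorrect asymptotic and would fail. The arc $z_I(\cdot;p_0,p_1)$ used to define $\S_I$ is, by construction (Theorem \ref{thm:existence_inner}), the one \emph{not} homotopic to the segment $\overline{p_0p_1}$; by Lemma \ref{lem:battin} it converges, as $\H\to\infty$, to the broken line through the origin, so the correct expansion is $d_I(p_0,p_1)=\sqrt{\H}\left(|p_0|+|p_1|\right)+O(\H^{-1/2})$, not $\sqrt{\H}\,|p_0-p_1|+o(\sqrt\H)$ as you claim. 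This is not a technicality: with the correct leading term, $W$ decouples at leading order into $2\sqrt{\H}\sum_k|\Gamma^{(s_k)}(u_k,v_k)|$, a sum of functions each of a single pair $(u_k,v_k)$ whose critical points are exactly the central configurations --- which is the entire mechanism by which the non-degeneracy hypothesis enters. With your limit, the limiting functional is the length of a closed polygon, whose minimizer degenerates (all vertices coincide) and carries no information about the $\bp_i$. Note also that your Hessian discussion, which correctly quotes $\eta_i=\sqrt{V_I(\bp)}(|\bp|^{-1}-\kappa_i)$ as the dominant diagonal entry, is only consistent with the $|p_0|+|p_1|$ expansion (cf.\ Eq.\ \eqref{eq:der_par}), so your proposal is internally inconsistent on this point.

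The second gap is that both your existence argument (minimization, with the minimum shown to be interior) and your uniqueness argument (strict convexity, positive-definite Hessian) implicitly assume the central configurations are local \emph{minima} of $p\mapsto|p|$ on $S$. The hypothesis is only non-degeneracy: $\bp_i$ may be a saddle or a maximum, in which case $\eta_i<0$ for some $i$, the leading diagonal blocks of $D^2W$ have negative eigenvalues, $W$ is not convex, and the minimum of $W$ over the compact product sits on the boundary of the squares. The paper avoids this by using the Poincar\'e--Miranda theorem for existence --- it only needs the sign change of each $\partial_{u_k}W$, $\partial_{v_k}W$ across the corresponding face of $R_\ve^{(s_k)}$ (Lemma \ref{lem:change_sign}), regardless of which sign occurs on which side --- and obtains uniqueness from the coordinate-wise strict monotonicity of each partial derivative (i.e.\ $\partial^2_{u_k}W$, $\partial^2_{v_k}W$ bounded away from zero), not from positive definiteness. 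Your diagonal-dominance observation is the right quantitative input, but the conclusion it supports is invertibility and monotonicity, not convexity; as written, your existence and uniqueness steps both break whenever a central configuration is not a minimum.
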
 
	\begin{proof}
		Let us fix $\un s \in \mathbb L$ periodic, with $[\un s]=(s_1, \dots, s_n)$ and let us define the set 
		\begin{equation}
			\mathcal U_{\un s}\eqdef \otimes_{k=1}^nR_\ve^{(s_k)}; 
		\end{equation}
	identifying every $\un \xi$ with its periodic extension, we already know that, for every $\un \xi\in \mathcal U_{\un s}$, there exists a unique periodic concatenation  $\tilde z(\cdot; \un \xi)$ which connects $\left(\Gamma^{(s_k)}(u_k, v_k)\right)_{k\in\Z}$, whose length is given by \eqref{eq:periodic_length}. Moreover, by the variational characterization of reflected arcs depicted in \eqref{eq:riflS}, we can notice that, given $\hat{\un \xi}\in \mathcal U_{\un s},$ the corresponding  $\tilde z(\cdot; \hat{\un \xi})$ satisfies the reflection law in every bouncing point if and only if $\nabla W(\hat{\un\xi})=\underline 0$. In practice, the problem of finding a billiard trajectory realising $\un s$ is translated into searching for a critical point of the length functional $W{(\un \xi)}$ in the set $\mathring{\mathcal U}_{\un s}$.  \\
	To find such critical point, we shall make use of the	\emph{Poincaré-Miranda Theorem} (see \cite{Miranda} and \cite[Theorem 2.12]{deblasiterracinibarutelloChaotic}). In particular, if we are able to prove that, for every $k=1, \dots, n$, 
	\begin{equation}\label{eq:change_sign}
		\begin{cases}
			\partial_{u_k}W(\un \xi)_{|u_k=\uu{s_k}-\ve} \cdot \partial_{u_k}W(\un \xi)_{|u_k=\uu{s_k}+\ve} < 0\\
			\partial_{v_k}W(\un \xi)_{|v_k=\vv{s_k}-\ve} \cdot \partial_{v_k}W(\un \xi)_{|v_k=\vv{s_k}+\ve} < 0, 
		\end{cases}
	\end{equation} 
	then there is a critical point of $W_{\un s}$ in $\mathring{\mathcal U}_{\un s}$. To prove \eqref{eq:change_sign}, let us introduce an asymptotic relation holding for fixed-ends Keplerian arcs when $\H\to\infty$ (see \cite{bolotin2000periodic} and \cite[Appendix B]{deblasiterracinibarutelloChaotic}): for every $p_0, p_1\in \R^3\setminus\{0\}$ not antipodal and every $\H>0$
	\begin{equation}
		d_I(p_0, p_1)=\sqrt{\H}\left(|p_0|+|p_1|\right)+\frac{\mu}{\sqrt{\H}}\left(g(p_0, p_1; \H)-\log\left(	\frac{\mu}{2\H}\right)\right), 
	\end{equation} 
	where $g(p_0, p_1; \H)$ is $C^2-$bounded when $\H\to\infty$ uniformly in $p_0, p_1$. \\
	As a consequence, for every $k=1, \dots, n$ we have 
	\begin{equation}\label{eq:asintotica}
			\begin{aligned}
			S(u_k,v_k, u_{k+1}, v_{k+1})=&\sqrt{\H}(|\Gamma^{(s_k)}(u_k, v_k)|+|\Gamma^{(s_{k+1})}(u_{k+1}, v_{k+1})|)\\&+\frac{\mu}{\sqrt{\H}}\left(g\left(\Gamma^{(s_k)}(u_k, v_k),\Gamma^{(s_{k+1})}(u_{k+1}, v_{k+1}); \H\right)-\log\left(\frac{\mu}{2 \H}\right)\right),
		\end{aligned}
	\end{equation}
	\begin{equation}\label{eq:Fkasintotiche}
		\begin{aligned}
			\partial_{u_k}W(\un \xi)=&2\sqrt{\H}\partial_{u_k}|\Gamma^{(s_k)}(u_k,v_k)|
				+\frac{\mu}{\sqrt{\H}}\Big(\partial_{2,u}g\left(\Gamma^{(s_{k-1})}(u_{k-1}, v_{k-1}),\Gamma^{(s_{k})}(u_{k}, v_{k}); \H\right)\\&+\partial_{1,u}g\left(\Gamma^{(s_k)}(u_k, v_k),\Gamma^{(s_{k+1})}(u_{k+1}, v_{k+1}); \H\right)\Big)\\
			\partial_{v_k}W(\underline\xi)=&2\sqrt{\H}\partial_{v_k}|\Gamma^{(s_k)}(u_k,v_k)|
			+\frac{\mu}{\sqrt{\H}}\Big(\partial_{2,v}g\left(\Gamma^{(s_{k-1})}(u_{k-1}, v_{k-1}),\Gamma^{(s_{k})}(u_{k}, v_{k}); \H\right)\\&+\partial_{1,v}g\left(\Gamma^{(s_k)}(u_k, v_k),\Gamma^{(s_{k+1})}(u_{k+1}, v_{k+1}); \H\right)\Big). 
		\end{aligned}
	\end{equation}
	By the boundedness of $g$, if $\H$ is sufficiently large,  the sign of the above quantities is completely determined by the partial derivatives of $|\Gamma^{(s_k)}(u_k,v_k)|$. 
	Recalling that all the pairs $(u_k, v_k)$ are close to the ones defining central configurations $(\uu{i}, \vv{i})$, the following Lemma, which will be proved rigorously afterwards, holds. 
	\begin{lemma}\label{lem:change_sign}
		If $\bp \in S$ is a non degenerate central configuration and the local chart around $\bp$ is parameterised as in Remark \ref{rem:par}, then there exists $\bar\ve>0$ such that, for every $0<\ve<\bar\ve$, $\exists$ $C(\ve)>0$ such that $R_\ve\eqdef[\bu-\ve, \bu+\ve]\times[\bv-\ve,\bv+\ve]\subset U$ and for every $(u_0, v_0)\in R_\ve$ 
		\begin{equation}\label{eq:change_euc}
			\partial_u|\Gamma(\bu-\ve, v_0)|\cdot \partial_u|\Gamma_u(\bu+\ve, v_0)|<-C(\ve), \quad \partial_v|\Gamma_v(u_0, \bv-\ve)|\cdot\partial_v|\Gamma_v(u_0, \bv+\ve)|<-C(\ve). 
		\end{equation}
	\end{lemma}
	Let us then take $\ve, C$ such that the above Lemma holds for both $\bp_1$ and $\bp_2$: taking $\H$ sufficiently large, Eqs. \eqref{eq:change_sign} are true for every $k=1, \dots, n$, and then, by Poincaré-Miranda theorem, there exists $\hat{\un\xi}\in \mathring{\mathcal U}_{\un s}$ such that $\nabla W(\hat{\un \xi})=\un 0$. As a consequence, $\tilde z(\cdot; \hat{\un\xi})$ is a periodic billiard trajectory which realises $\un s$. \\
	As for the uniqueness of the critical point $\hat{\un\xi}$ (and, as a consequence, of the corresponding periodic billiard trajectory), let us notice that, since $\bp_1$ and $\bp_2$ are non degenerate, by Eqs.  \eqref{eq:der_par}, $\partial^2_{u\backslash v}|\G{i}(\uu{i}, \vv{i})|$ are different from 0. As a consequence, taking a sufficiently small $\ve$ and $\H$ sufficiently large, one has that for every $k=1, \dots, n$ the quantities $\partial^2_{u_k}W(\un\xi)$ and $\partial^2_{v_k}W(\un\xi)$ are bounded from zero: this implies the strict monotonicity of $\partial_{u_k}W$, $\partial_{v_k}W$ for every $k=1, \dots, n$, and therefore the uniqueness of  $\hat{\un\xi}$.  \\
	Let us conclude by noticing that, for every $\un s\in \mathbb L$ periodic, the partial derivatives of $W(\un \xi): \mathcal U_{\un s}\to \R$ always have the same form, involving only three consecutive pairs $(u,v)$ of $\un \xi$: this means that the estimates used above are independent on $\un s$, and that the thresholds found for $\ve$ and $\H$ are uniform in the periodic words of $\mathbb L. $ 
	\end{proof}

\begin{proof}[Proof of Lemma \ref{lem:change_sign}]
	To fix the ideas, let us suppose that $(\bu, \bv)$ is a saddle point for $|\Gamma(u, v)|$: recalling Eqs. \eqref{eq:der_par}, 
	\begin{equation}
		A_1\eqdef\left(\frac{1}{|\bar p|}-\kappa_1\right)=\partial_u^2|\Gamma(\bu,\bv)|<0,\ A_2\eqdef\left(\frac{1}{|\bar p|}-\kappa_2\right)\partial_v^2|\Gamma(\bu,\bv)|>0.   
	\end{equation}
By the regularity of $S$, for every $(u_0, v_0)\in U$ there are $(u',v'), (u'',v'')$ in the segment between $(\bu,\bv)$ and $(u_0,v_0)$ such that 
	\begin{equation}
		\begin{aligned}
			\partial_ u|\Gamma(u_0,v_0)|=&\partial_u|\Gamma(\bu,\bv)|+\partial^2_u|\Gamma(\bu,\bv)|(u_0-\bu)+\partial_{uv}|\Gamma(\bu,\bv)|(v_0-\bv)\\
			&+ \frac{\partial^3_{u}|\Gamma(u',v')|}{2}(u_0-\bu)^2+\partial_{uuv}|\Gamma(u',v')|(u_0-\bu)(v_0-\bv)\\
			&+\frac{\partial_{uvv}|\Gamma(u',v')|}{2}(v_0-\bv)^2\\
			\partial_ v|\Gamma(u_0,v_0)|=&\partial_v|\Gamma(\bu,\bv)|+\partial^2_v|\Gamma(\bu,\bv)|(v_0-\bv)+\partial_{uv}|\Gamma(\bu,\bv)|(u_0-\bu)\\
			&+ \frac{\partial^3_{v}|\Gamma(u'',v'')|}{2}(v_0-\bv)^2+\partial_{vvu}|\Gamma(u'',v'')|(u_0-\bu)(v_0-\bv)\\
			&+\frac{\partial_{vuu}|\Gamma(u'',v'')|}{2}(u_0-\bu)^2. 
		\end{aligned}
	\end{equation} 
	Since $\Gamma$ is at least $C^3$, the third derivatives of $|\Gamma(u,v)|$ are uniformly bounded in $(u,v)$: let us call $C_1>0$ the constant bounding their moduli.  \\
	For every $\ve>0$ such that $R_\ve\eqdef[\bu-\ve, \bu+\ve]\times[\bv-\ve, \bv+\ve]\subset U$ and every $(u_0, v_0)\in R_\ve$, one has
	\begin{equation}
		\begin{aligned}
			&\partial_u|\Gamma(\bu-\ve,v_0)|&&=-A_1\ve+ \frac{\partial^3_{u}|\Gamma(u',v')|}{2}\ve^2-\partial_{uuv}|\Gamma(u',v')|\ve(v_0-\bv)\\&\quad&&+\frac{\partial_{uvv}|\Gamma(u',v')|}{2}(v_0-\bv)^2\\
			&\ &&\geq -A_1\ve -2 C_1 \ve^2\\
			&\partial_u|\Gamma(\bu-\ve,v_0)|&&\leq A_1 \ve +2C_1\ve^2, \quad 
			\partial_v|\Gamma(u_0,\bv-\ve)|\leq -A_2 \ve +2C_1\ve^2, \\ 
			&\partial_v|\Gamma(u_0,\bv+\ve)|&&\geq A_1 \ve -2C_1\ve^2, 
		\end{aligned}
	\end{equation}
	and then, taking $\ve$ sufficiently small, one can find Eqs. \eqref{eq:change_euc}.
\end{proof}	
	Starting from the one-to one correspondence between sequences in $\mathbb L$ and periodic billiard trajectories realizing them, it is possible to construct a symbolic dynamics in the case of reflective three dimensional billiards. From this moment on, we will always assume that  $\ve$ and $\H$ satisfy the hypotheses of Proposition \ref{prop:periodic}. \\
	Following what already described in Section \ref{sec:intro}, let us start by defining the projection map $P$. 
	\begin{definition}\label{def:proiezione}
		Let $\un s\in\mathbb L$, and suppose that there exists $\tilde z(\cdot, \un p)$ that satisfies $\un s$, according to Definition \ref{def:realises}. Setting $(p_0,w_0)\eqdef\left(\tilde z(0; \un p), \tilde z'(0; \un p)\right)$, we say that $P(p_0, w_0)=\un s$. 
	\end{definition}
	In practice, the projection map links initial conditions of billiard trajectories to sequences in $\un s$ realised by them. Let us observe that, whenever $\tilde z$ is a billiard trajectory and not a simple concatenation of arcs, it is completely determined by the initial conditions. \\
	From Proposition \ref{prop:periodic}, we already know that if $\un s$ is periodic there is a unique pair of initial conditions $(p_0, w_0)\in V_\ve^{(s_0)}\times \R^3$ which is projected onto $\un s$, then $P$ is well defined in this case. We will now construct a subset $X$ of the initial conditions for the first return map $F: (p_0, w_0)\mapsto (p_1, w_1)$ as defined in Section \ref{sec:intro}, such that:  
	\begin{itemize}
		\item $F$ is infinitely-many times forward and backward well defined on $X$, and $F(X)=X$; 
		\item the projection $P$ is well defined on $X$, that is, all orbits with initial conditions in $S$ intersect $S$ only in $\cup_{i=1}^2V^{(i)}_\ve$; 
		\item  recalling the definition of Bernoulli shift given in \eqref{eq:Bernoulli}, it holds $\sigma \circ P= P\circ F$. 
	\end{itemize}
	If the above conditions are satisfied, it is possible to construct the commutative diagram 
	\begin{equation}\label{eq:diagram}
	\begin{tikzcd}
		X \arrow{r}{{F}} \arrow{d}{P} & X \arrow{d}{P} \\
		\mathbb{L} \arrow{r}{\sigma}	& \mathbb{L}
	\end{tikzcd}
	\end{equation}
	From \cite{Dev_book}, if $P$ is bijective and continuous, then $F$ is topologically chaotic on $X$, and then Theorem \ref{thm:intro_chaos} holds in the reflective case. \\
	Let us start with the construction of $X$, and consider the set of initial conditions 
	\begin{equation}
		\Psi=\left\{(p_0, w_0)\in S\times \R^3 \ |\  p_0\in \bigcup_{i=1}^2V^{(i)}_\ve, \ |w_0|=\sqrt{2 V_I(p_0)}, \ w_0 \text{ points inside }D\right\}. 
	\end{equation}
	 Any initial condition in $\Psi$ leads to a Keplerian inner arc which intersects again $S$. Let us restrict $\Psi$ so that the corresponding arcs encounter only $V_\ve^{(i)}$, for some $i=1,2$, and with velocities transversal to $S$: let us call this set $A\subset \Psi$.  We can ensure that $F: (p_0, w_0)\mapsto (p_1, w_1)$ (see also Figure \ref{fig:primo ritorno Kep}) is well defined in $A$; for the reversibility of the flow associated to the inner problem, we can also observe that the inverse mapping $F^{-1}$ is well defined on $F(A)$.
	Let us now consider the set
	\begin{equation}
		X\eqdef\bigcap_{k\in\Z}F^k(A), 
	\end{equation} 
	that is, the set of initial conditions for which all the backward and forward iterates of $F$ are well defined and the corresponding billiards trajectories intersect $S$ in $\cup_{i=1}^2V^{(i)}_\ve$. \\
	By virtue of Proposition \ref{prop:periodic} $X$ is not empty, and it is invariant under $F$: with an abuse of notation, we will still denote with $F$ the restriction of the first return map to $X$. \\
	Let us now observe that the projection map $P$ is surely well defined on $X$: given $(p_0, w_0)\in X$, let us call $\un p=(p_k)_{k\in Z}$ the sequence of all the bouncing points of the corresponding billiard trajectory; we can then say that 
	\begin{equation}\label{eq:proj}
		P(p_0, v_0)=\un s \in \mathbb L\quad \Longleftrightarrow\quad  \forall k\in \Z \ \quad p_k\in V^{(s_k)}_\ve. 
	\end{equation}
	\begin{proposition}\label{prop:conj_rifl}
		Given $P:X\to \mathbb L$ defined as in \eqref{eq:proj}, it holds that \begin{enumerate}
			\item the diagram in \eqref{eq:diagram} commutes; 
			\item $P$ is bijective; 
			\item $P$ is continuous. 
		\end{enumerate}
	\end{proposition}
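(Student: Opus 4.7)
The plan is to verify the three properties in order, reducing each to results already established in Proposition \ref{prop:periodic} and to standard continuous-dependence properties of the Keplerian flow. Commutativity is immediate: if $(p_0,w_0)\in X$ has associated two-sided bouncing sequence $\un p=(p_k)_{k\in\Z}$, then by the very definition of the first return map $F(p_0,w_0)=(p_1,w_1)$ has the shifted bouncing sequence $(p_{k+1})_{k\in\Z}$. Comparing with \eqref{eq:proj} gives $P\circ F=\sigma\circ P$ directly.

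The substantive part is bijectivity. For surjectivity, given $\un s\in\mathbb L$, I would truncate to the periodic sequence $\un s^{(n)}$ of period $2n+1$ agreeing with $\un s$ on $\{-n,\dots,n\}$. Proposition \ref{prop:periodic} furnishes a unique billiard trajectory realizing $\un s^{(n)}$; let $(p_0^{(n)},w_0^{(n)})\in V_\ve^{(s_0)}\times\R^3$ be its initial conditions at the reference bounce. Since $V_\ve^{(s_0)}$ is compact and $|w_0^{(n)}|=\sqrt{2V_I(p_0^{(n)})}$ is uniformly bounded, a subsequence converges to some $(p_0,w_0)$. Continuous dependence on initial conditions, combined with the transversality of bounces encoded in the definition of $A$, propagates convergence of each of the first $2n+1$ iterated bouncing points to those of the limit trajectory; hence the limit realizes $\un s$ and lies in $X$. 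For injectivity, suppose $P(p_0,w_0)=P(p_0',w_0')=\un s$ and read off the two sequences of bouncing parameters $\un\xi,\un\xi'\in\otimes_{k\in\Z}R^{(s_k)}_\ve$. By \eqref{eq:riflS}, both are critical points of every finite-window length functional $W$; the estimate from the proof of Proposition \ref{prop:periodic} shows that, for $\ve$ small and $\H$ large, the second derivatives $\partial^2_{u_k}W$ and $\partial^2_{v_k}W$ are bounded away from zero uniformly in $\un s$ and in the other coordinates, forcing window-by-window uniqueness and hence $\un\xi=\un\xi'$.

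For continuity of $P$, fix $N\in\N$ and $(p_0,w_0)\in X$ with $P(p_0,w_0)=\un s$. Continuous dependence of the Keplerian flow on initial data, together with transversality of bounces and the openness of the interiors of the $V^{(i)}_\ve$ in $S$, provides $\delta_N>0$ such that any initial datum within $\delta_N$ of $(p_0,w_0)$ has its $2N+1$ central iterated bouncing points in the respective $V_\ve^{(s_k)}$. By the form of the metric \eqref{eq:distance}, this yields $d(P(p_0',w_0'),\un s)\le \tfrac{2}{3}\cdot 4^{-N}$ on that neighbourhood, which suffices for continuity.

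The hard part is the injectivity step: propagating uniqueness from finite periodic windows to bi-infinite, non-periodic sequences depends crucially on the uniform non-degeneracy of the Hessian of the window length functional $W$. This property must hold along every $\un s\in\mathbb L$ (not only along periodic ones), and has to be read off from the asymptotic expansion \eqref{eq:asintotica} in exactly the same manner used inside the proof of Proposition \ref{prop:periodic}, exploiting the fact that the thresholds on $\ve$ and $\H$ produced there were already uniform in the symbolic word.
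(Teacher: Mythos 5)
Your proposal is correct and follows essentially the same route as the paper: commutativity by construction of $F$, surjectivity by periodic truncations $\un s^{(n)}$ plus a compactness/limiting argument, injectivity from the uniform non-degeneracy underlying Proposition \ref{prop:periodic}, and continuity via agreement of the central $2N+1$ symbols together with the tail estimate of the metric \eqref{eq:distance}. The only cosmetic difference is that you extract a convergent subsequence of initial conditions $(p_0^{(n)},w_0^{(n)})$ and propagate by continuous dependence, whereas the paper diagonalizes over the full parameter sequences $\un\xi^{(n)}$ to obtain a limiting concatenation directly; both arguments (and both treatments of injectivity for non-periodic words) are at the same level of rigor.
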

\begin{proof}
	Property (1) is straightforward by construction: if we consider $(p_0, w_0), (p_1, w_1)\in X$ such that $(p_1, w_1)=F(p_0, w_0)$, it is clear that the sequences of bouncing points starting from $p_0$ and $p_1$ are one the shift of the other. \\
		As for the second point, from Proposition \ref{prop:periodic} we can say that whenever $\un s $ is periodic there exists a unique $(p_0, w_0)\in X$ which projects into $\un s$. Let us then take $\un s\in \mathbb L$ a non periodic sequence, 
		and consider, for every $n\geq 0$, $\un s^{(n)}$ periodic with $[\un s^{(n)}]=(s_{-n}, \dots, s_0, \dots, s_n)$; considering the distance defined in \eqref{eq:distance}, one has $\un s^{(n)}\to \un s$ as $n\to\infty$. 
		Again by proposition \ref{prop:periodic} for every $n\geq 0$ there exists a unique ${\un\xi}^{(n)}\in \mathcal U_{\un{s}^{(n)}}$ such that the corresponding concatenation $\tilde z(\cdot; {\un\xi}^{(n)})$ realises $\un s^{(n)}$. With an abuse of notation, let us extend $\un\xi^{(n)}$ by periodicity, and denote  $[{\un\xi}^{(n)}]=(u_{-n}^{(n)},v_{-n}^{(n)}, \dots, u_{0}^{(n)},v_{0}^{(n)}, \dots,  u_{n}^{(n)},v_{n}^{(n)})$. For every fixed $k\in \Z$, it results that $(u_k^{(n)}, v_k^{(n)})\in R_\ve^{(s_k)}$ for $n$ sufficiently large: by compactness, $(u_k^{(n)}, v_k^{(n)})\to (\tilde u_k, \tilde v_k)$ for some $(\tilde u_k, \tilde v_k)\in R^{(s_k)}_\ve$ up to a subsequence. \\
		With a diagonalization procedure, we can state the existence of a subsequence $(a_n)_{n\in \N}\subset \N$ and a sequence $\tilde{\un \xi}\in \otimes_{k\in\Z}R_\ve^{(s_k)}$ such that 
		\begin{equation}
			\forall k\in\Z\quad \left(u_k^{(a_n)}, v_k^{(a_n)}\right)\to (\tilde u_k, \tilde v_k). 
		\end{equation}
	Let us then consider $z(\cdot; \tilde{\un\xi})$  the concatenation of inner arcs which connects the points $\tilde p_k=\Gamma^{(s_k)}(\tilde u_k, \tilde v_k)$: by convergence and the $C^1$ dependence of both Cauchy problems and the reflection law from initial conditions, $z(\cdot; \tilde{\un\xi})$ is a billiard trajectory that realises $\un s$. The uniqueness of $z(\cdot; \tilde{\un\xi})$, and, as a consequence, of the corresponding initial conditions, follows from the uniqueness of the sequences ${\un\xi}^{(n)}$ in $\mathcal U_{\un s^{(n)}}$. \\
	Let us now pass to the continuity of $P$, and start with a preliminary observation. Taking any two points $p_0, p_1\in \cup_{i=1}^2V^{(i)}_\ve$, it holds that the time that $z_I(\cdot; p_0, p_1)$ takes to go from $p_0$ to $p_1$ is bounded uniformly in the endpoints. \\
	Let us now fix $(p_0, w_0)\in X$, and call $\un s$ the sequence in $\mathbb L$ such that $P(p_0, w_0)=\un s$. By the convergence of $d(\cdot, \cdot)$ in $\mathbb L$, for every $\epsilon>0$ there exists $k_0\in \N$ such that, for every $\un s' \in \mathbb L$, 
	\begin{equation}
		\sum_{|k|> k_0}\frac{\delta(s_k, s'k)}{4^{|k|}}<\epsilon; 
	\end{equation}
	if we will prove that there exists $\delta>0$ such that, if $|(p_0, w_0)-(p_1, w_1)|<\delta$, then 
	\begin{equation}\label{eq:sum0}
		\sum_{|k|\leq k_0}\frac{\delta(s_k, s'_k)}{4^{|k|}}=0, 
	\end{equation}
	then the continuity of $P$ over $X$ follows immediately. Eq. \eqref{eq:sum0} corresponds to ask that, called $\un s'=P(p_1, v_1)$, for every $k=-k_0, \dots, 0, \dots, k_0$, we have the equality $s_k=s'_k$, namely, the billiard trajectories with initial conditions $(p_0, w_0)$ and $(p_1,w_1)$  bounce in the same compact sets of $S$ for $2k_0+1$ times. By the boundedness of the times to go from two points with an inner arc, we can find a time interval $[-a, a]$ such that every billiard trajectory with initial conditions in $X$ encounters the boundary $S$ at leats $2k_0+1$ times, and then the thesis follows from continuity of Cauchy problems with respect to initial conditions. 	
\end{proof}
	Proposition \ref{prop:conj_rifl} completes the proof of Theorem \ref{thm:intro_chaos} in the reflective case, defining a conjugation between the Bernoulli shift and the first return map $F$ restricted to the subset of initial conditions $X$. \\
	Let us conclude this section by highlighting a geometrical property of the billiard trajectories starting with initial conditions in $X$: not only they cross $S$ close to the homothetic trajectories defined by $\bp_1, \bp_2$; by continuity, they remain \emph{close} to them for all forward and backward times. For this reason, one can refer to this type of construction as a \emph{shadowing} procedure, since defining $P$ one ends up in building trajectories close to juxtaposition of homothetic arcs (see Figure \ref{fig:realises}). As we will see in the next section, an analogous scheme will be applied in the refractive case. 

	\subsection{Chaoticity of the three dimensional refractive galactic billiard}\label{ssec:chaos_refr}
	
	The aim of this section is to apply the reasoning already illustrated in Section \ref{ssec:chaor_refl} to the refractive model: although in principle the procedure followed is the same as in the reflective case, the presence of the outer dynamics, and the fact that the first return map follows a concatenation of two arcs instead of one at a time, doubles the dimensionality of the problem. Also in this case, we will conjugate the Bernoulli shift of two symbols with the refractive first return map $G$ as defined in Section \ref{sec:intro} and Figure \ref{fig:primo_rit_snell}, restricting it to a suitable subset of initial condition of trajectories which \emph{shadow} either homothetic trajectories or juxtapositions of them; for the sake of brevity, we will not re-write the proofs which are analogous of the ones in Section \ref{ssec:chaor_refl}, limiting ourselves in highlighting the differences between the two models.  \\
	Let us then consider $\bp_1$ and $\bp_2$ as in Section \ref{ssec:chaor_refl}, and suppose again that the local charts around them have the properties listed in Remark \ref{rem:par}. Defining the squares  $R^{(i)}_\ve$ as in \eqref{eq:square}, and the corresponding images on $S$, denoted by $V^{(i)}_\ve$, Theorems \ref{thm:existence_inner} and \ref{thm:outer problem}  ensure that,  whenever $\ve$ is sufficiently small and $\H$ large enough, 
	\begin{equation}
		\begin{aligned}
		&	\forall p_0, p_1\in V^{(i)}_\ve, \ i=1,2,\ \exists|z_E(\cdot; p_0, p_1)\text{ outer arc connecting }p_0\ \text{to } p_1; \\
		&	\forall p_0, p_1\in \bigcup_{i=1}^2V^{(i)}_\ve, \exists|z_I(\cdot; p_0, p_1)\text{ inner arc as in Theorem \ref{thm:existence_inner} connecting }p_0\ \text{to } p_1.  
		\end{aligned}
	\end{equation}
	In terms of local coordinates, this translates into the good definition of the outer and inner generating functions (see Eq. \eqref{eq:gen})
	\begin{equation}
		\S_E: \bigcup_{i=1}^2\left(R^{(i)}_\ve\right)^2\to \R,\quad \S_I: \left(\bigcup_{i=1}^2R^{(i)}_\ve\right)^2\to \R. 
	\end{equation}
	As in the reflective case, we will now define a infinite concatenation of outer and inner arcs crossing the surface $S$ in $V^{(i)}_\ve$, and, in the case of periodic trajectories, the relative Jacobi length. In this case, we will always suppose that the outer arcs connect points close to each others, while inner ones act as transfer trajectories, possibly moving the particle between different regions of $S$. To fix the notation, let us give the following definition. 
	\begin{definition}
		Let us consider a sequence of points $\un p=\left(p_k^{(E)}, p_k^{(I)}\right)_{k\in \Z}\subset\left(V^{(i)}_\ve\times V^{(i)}_\ve\right)^\Z$, and consider the unique concatenation $\tilde z(\cdot; \un p):\R\to\R^3$ composed, for every $k\in \Z$, by the outer arc connecting $p_k^{(E)}$ to $p_k^{(I)}$ and the inner arc connecting $p_k^{(I)}$ to $p^{(E)}_{k+1}$. As in Definition \ref{def:conc}, we will always suppose that $\tilde z(0; \un p)=p_0^{(E)}$.   \\
		In terms of local parameterisation, let us set $\un \xi=(u_k^{(E)}, v_k^{(E)},u_k^{(I)}, v_k^{(I)})_{k\in\Z}$ the sequence of parameters corresponding to $\un p$; identifying $\tilde z(\cdot; \un \xi)\eqdef\tilde z(\cdot;\un p). $ Whenever $\un \xi$ (and hence $\un p$) is periodic with periodicity modulus 
		\begin{equation}[\un\xi]=\left(u_1^{(E)}, v_1^{(E)},u_1^{(I)}, v_1^{(I)} \dots,u_n^{(E)}, v_n^{(E)},u_n^{(I)}, v_n^{(I)}\right),
		\end{equation} the corresponding concatenation is composed by $2n$ arcs and its Jacobi length is given by 
		\begin{equation}
			W(\un \xi)=\sum_{k=1}^n \S_E(u_k^{(E)},v_k^{(E)}, u_k^{(I)},v_k^{(I)})+ \S_I (u_k^{(I)},v_k^{(I)}, u_{k+1}^{(E)},v_{k+1}^{(E)}),  
		\end{equation}
	where again $n+1=1$. 
	By virtue of Eq.\eqref{eq:refrS}, the concatenation $\tilde z(\cdot; \hat{\un \xi})$ is a billiard trajectory (i.e. satisfies Snell's law at every transition point) if and only if $\nabla W(\hat{\un\xi})=\un 0.$
		\end{definition}
	To construct a symbolic dynamics, let us take again the set of bi-infinite sequences $\mathbb L=\{1,2\}^\Z$ endowed with the distance \eqref{eq:distance}: given a concatenation of arcs $\tilde z(\cdot; \un p)$, we will say that $\tilde z$ \emph{realises} a sequence $\un s \in \mathbb L$ if and only if for every $k\in \Z$ it holds $p_k^{(E\backslash I)}\in V^{(s_k)}_\ve$. \\
	We are now ready to state the one-to-one relation between billiard trajectories and periodic sequences of $\mathbb L$ in the refractive case as well. 
	\begin{proposition}\label{prop:periodic_refr}
		Possibly restricting $\ve$, there exists $\overline \H_2>0$ such that, for every $\H>\overline \H_2$ and every $\un s\in\mathbb L$ periodic, there exists a unique billiard trajectory for the refractive case which realises $\un s$. 
	\end{proposition}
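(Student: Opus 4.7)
The proof follows the pattern of Proposition \ref{prop:periodic}, the key modification being that the configuration space is doubled so as to record both outer and inner transition points. Fix $\un s \in \mathbb L$ periodic with periodicity modulus $[\un s] = (s_1, \dots, s_n)$, and set
$$\mathcal U_{\un s} \eqdef \bigotimes_{k=1}^n \bigl(R_\ve^{(s_k)} \times R_\ve^{(s_k)}\bigr),$$
whose $k$-th factor carries the coordinates $(u_k^{(E)}, v_k^{(E)}, u_k^{(I)}, v_k^{(I)})$. Any $\un\xi \in \mathcal U_{\un s}$ (extended by periodicity) determines the unique concatenation $\tilde z(\cdot; \un\xi)$, and by the variational characterisation of Snell's law given in \eqref{eq:refrS} this concatenation is a genuine refractive billiard trajectory if and only if $\nabla W(\un\xi) = \underline 0$. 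So the problem reduces to finding a critical point of $W$ in the interior of $\mathcal U_{\un s}$, and the plan is to invoke the Poincar\'e--Miranda theorem, now in dimension $4n$.

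The sign-change hypothesis splits into $4n$ inequalities, one per scalar coordinate. The crucial observation is that each coordinate enters exactly one $\S_E$-term and one $\S_I$-term of $W$: for instance, $u_k^{(E)}$ appears in $\S_I(u_{k-1}^{(I)}, v_{k-1}^{(I)}, u_k^{(E)}, v_k^{(E)})$ (the inner arc landing at the $k$-th outer point) and in $\S_E(u_k^{(E)}, v_k^{(E)}, u_k^{(I)}, v_k^{(I)})$, and symmetrically for $u_k^{(I)}$ and for the $v$-coordinates. By the asymptotic expansion \eqref{eq:asintotica}--\eqref{eq:Fkasintotiche} the $\S_I$-derivative produces a leading $\sqrt{\H}\,\partial_{u\backslash v}|\Gamma^{(s_k)}|$ summand, while the outer dynamics does not involve $\H$ at all, so the $\S_E$-derivative stays $O(1)$ uniformly on $\mathcal U_{\un s}$. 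Taking $\H$ large enough, the sign at each face $u_k^{(E\backslash I)} = \uu{s_k} \pm \ve$ (and analogously for the $v$-faces) is then dictated by $\partial_{u\backslash v}|\Gamma^{(s_k)}|$ evaluated at that face, and Lemma \ref{lem:change_sign} supplies precisely the required sign reversal. Poincar\'e--Miranda then yields a zero of $\nabla W$ in $\mathring{\mathcal U}_{\un s}$, which is the sought periodic billiard trajectory.

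For uniqueness, the argument mirrors the reflective case: the same asymptotics give $\partial^2_{u_k^{(E\backslash I)}} W, \partial^2_{v_k^{(E\backslash I)}} W$ of the form $\sqrt{\H}\,\partial^2_{u\backslash v}|\Gamma^{(s_k)}| + O(1)$, and by non-degeneracy of $\bp_1, \bp_2$ together with \eqref{eq:der_par} the leading coefficient is bounded away from zero on $R_\ve^{(s_k)}$ once $\ve$ is small enough. For $\H$ large, each partial derivative of $W$ is then strictly monotone in its own scalar variable, which rules out a second critical point. Since the derivatives at coordinate block $k$ depend only on three consecutive blocks $k-1, k, k+1$, the thresholds on $\ve$ and $\H$ can be chosen independently of $\un s$, providing the required uniformity across all periodic words.

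The main technical nuisance, compared with the reflective setting, is the bookkeeping of the doubled dimension together with the verification that the $\S_E$-contributions really are $O(1)$ uniformly in $\H$ and in the coordinate block: this rests on Theorem \ref{thm:outer problem}, which guarantees that the outer Cauchy problems between endpoints in $V_\ve^{(i)}$ depend smoothly on their data on a set bounded away from degeneracies, so that their derivatives admit a uniform bound on the compact set $\mathcal U_{\un s}$. Once this uniformity is secured, every remaining step is structurally identical to Proposition \ref{prop:periodic}, just transplanted into a $4n$-dimensional box rather than a $2n$-dimensional one.
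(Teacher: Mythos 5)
Your proposal is correct and follows essentially the same route as the paper: reduce to finding a critical point of the total Jacobi length $W$ on the $4n$-dimensional box $\mathcal U_{\un s}=\otimes_{k=1}^n\bigl(R_\ve^{(s_k)}\bigr)^2$ via Poincar\'e--Miranda, using that each coordinate enters one $\S_I$-term (contributing the dominant $\sqrt{\H}\,\partial|\Gamma^{(s_k)}|$ part) and one $\S_E$-term (uniformly bounded with all derivatives by regularity of $V_E$ and compactness of $V_\ve^{(i)}$), with uniqueness from strict monotonicity of the partial derivatives coming from non-degeneracy of the central configurations, and uniformity of the thresholds from the locality of the estimates. The only cosmetic discrepancy is the coefficient of the leading term (the paper writes $2\sqrt{\H}$ as in the reflective case, while here each coordinate appears in a single $\S_I$-term so the factor is $\sqrt{\H}$), which affects nothing in the argument.
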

\begin{proof}
	The proof is analogous to the one of Proposition \ref{prop:periodic}, although the presence of an outer dynamics makes the estimates on $W$ more delicate. Again, the idea is to find a unique critical point for $W(\un\xi)$ inside a suitable set of parameters, provided $\ve$ is sufficiently small and $\H$ sufficiently large. For the sake of brevity, with an abuse of notation we will use the same symbols as in the previous section. 	\\First of all, let us notice that, by the regularity of $V_E$ and the compactness of $V^{(i)}_\ve$, $i=1,2$, once the outer energy $\mathcal E$ is fixed, one has that  $\S_E$, along with all its derivatives,  is bounded by a constant $C$ uniformly in the endpoints.  \\
	Let us then fix $\un s\in \mathbb L$ a periodic sequence, $[\un s]=(s_1, \dots, s_n)$, and consider 
	\begin{equation}
			\mathcal U_{\un s}\eqdef \otimes_{k=1}^n\left(R_\ve^{(s_k)}\right)^2: 
	\end{equation}
	given now $W: \mathcal U_{\un s}\to R$, by Miranda theorem we can ensure the existence of a critical point for the length function if we prove
	\begin{equation}\label{eq:ch_sign}
		\begin{cases}
			\partial_{u_k^{(E)}}W(\un \xi)_{|u_k=\uu{s_k}+\ve} \cdot \partial_{u_k^{(E)}}W(\un \xi)_{|u_k=\uu{s_k}+\ve} < 0\\
			\partial_{v_k^{(E)}}W(\un \xi)_{|v_k=\vv{s_k}+\ve} \cdot \partial_{v_k^{(E)}}W(\un \xi)_{|v_k=\vv{s_k}+\ve} < 0\\
			\partial_{u_k^{(I)}}W(\un \xi)_{|u_k=\uu{s_k}+\ve} \cdot \partial_{u_k^{(I)}}W(\un \xi)_{|u_k=\uu{s_k}+\ve} < 0\\
			\partial_{v_k^{(I)}}W(\un \xi)_{|v_k=\vv{s_k}+\ve} \cdot \partial_{v_k^{(I)}}W(\un \xi)_{|v_k=\vv{s_k}+\ve} < 0. 
		\end{cases}
	\end{equation}
	By direct computations and taking into account Eq. \eqref{eq:asintotica}, one has 
	\begin{equation}
		\begin{aligned}
			\partial_{u_k^{(E)}}W(\un \xi)&=\partial_{2,u}\S_I(u_{k-1}^{(I)},v_{k-1}^{(I)}, u_{k}^{(E)},v_{k}^{(E)})+ \partial_{1,u}\S_E(u_k^{(E)},v_k^{(E)}, u_k^{(I)},v_k^{(I)})\\
		\quad & = 2\sqrt{\H}\partial_{u_k}|\Gamma^{(s_k)}(u_k^{(E)},v_k^{(E)})|
		\\&+\frac{\mu}{\sqrt{\H}}\Big(\partial_{2,u}g\left(\Gamma^{(s_{k-1})}(u_{k-1}^{(I)}, v_{k-1}^{(I)}),\Gamma^{(s_{k})}(u_{k}^{(E)}, v_{k}^{(E)})\right)\Big)\\
		\quad & +\partial_{1,u}\S_E(u_k^{(E)},v_k^{(E)}, u_k^{(I)},v_k^{(I)}), 
		\end{aligned}
	\end{equation}
	and analogous formulas for all the other derivatives. By the uniform boundedness of the derivatives of $\S_E$ and the non-degeneracy of  $\bp_i$, one can use the same argument as in Proposition \ref{prop:periodic} to find a lower bound in $\H$, uniform in the sequences $\un s$, such that inequalities \eqref{eq:ch_sign} hold. One can then find $\hat{\un\xi}\in\mathring{\mathcal U}_{\un s}$ critical point for $W(\un\xi)$, whose uniqueness is again guaranteed possibly reducing $\ve$ starting from the non degeneracy of the central configurations. 
\end{proof}
	Proposition \ref{prop:periodic_refr} is again the starting point to construct a conjugation between the refractive first return map $G: (p_0, w_0)\mapsto (p_1, w_1)$, restricted to a suitable invariant set of initial conditions, and Bernoulli shift $\sigma$. The construction of the projection map $P$ from initial conditions of billiard trajectories and orbits is analogous to the one presented in Definition \ref{def:proiezione}: if $\tilde z(\cdot)\equiv\tilde z(\cdot; \bar p)$ realises a word $\un s\in \mathbb L$, then $P\left(\tilde z(0), \tilde z'(0)\right)=\un s$. \\
	Let us then construct a set of initial conditions $Y$ where we can ensure the good definition and invariance under $G$, along with the good definition of $P: Y\to \mathbb L$ and the commutativity of the diagram  
	\begin{equation}\label{eq:diagram_refr}
		\begin{tikzcd}
			Y \arrow{r}{{G}} \arrow{d}{P} & Y \arrow{d}{P} \\
			\mathbb{L} \arrow{r}{\sigma}	& \mathbb{L}
		\end{tikzcd}
	\end{equation}
	We refer to \cite{deblasiterracinibarutelloChaotic} for a more detailed explanation, holding in two dimensions. 
	Let us start by defining, for $i=1,2$, the sets 
	\begin{equation}
		\Xi_i^{(+)}=\left\{(p_0, w_0)\in V^{(i)}_\ve\times \R^3, \ | \ |w_0|=\sqrt{2V_E(p_0)}, \ w_0 \ \text{points outside $D$}\right\}; 
	\end{equation}
	in practice, we will search for initial conditions in $\cup_{i=1}^2\Xi_i^{+}$ which generates billiard trajectories satisfying sequences in $\mathbb L$. Recalling the definition of $G$ given in Section \ref{sec:intro}, and in particular Figure \ref{fig:primo_rit_snell}, let us define $B\subset \cup_{i=1}^2\Xi^+_i$ such such that: 
	\begin{itemize}
		\item $G$ is well defined on $(p_0, w_0)$ and $G(p_0, w_0)=(p_1, w_1)\in \cup_{i=1}^2\Xi^+_i$; 
		\item if $p_0\in V^{(i)}_\ve$, the transition point $\tilde p$ of the concatenation starting from $(p_0, w_0)$ is again in $V_\ve^{(i)}$. 
	\end{itemize}
	As in the reflective case, $G$ is clearly well defined on $B$ and the inverse mapping $G^{-1}$ is well defined on $G(B)$ by reversibility. Define then 
	\begin{equation}
		Y\eqdef \bigcap_{k\in\Z} G^{k}(B)
	\end{equation}
	to obtain a set of initial conditions which is invariant under $G$, where the first return map is infinitely-many  forward and backward well defined, and such that every trajectory starting with conditions in $Y$ realises a sequence in $\mathbb L$. It is then possible to construct the diagram \eqref{eq:diagram_refr}, and, following the same proof as in Proposition \ref{prop:conj_rifl}, proving the same result in the refractive case. 
		\begin{proposition}\label{prop:conj_refr}
		Given $P:Y\to \L$ defined as in \eqref{eq:proj}, it holds that \begin{enumerate}
			\item the diagram in \eqref{eq:diagram} commutes; 
			\item $P$ is bijective; 
			\item $P$ is continuous. 
		\end{enumerate}
	\end{proposition}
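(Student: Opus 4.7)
The plan is to follow exactly the scheme of Proposition \ref{prop:conj_rifl}, replacing the role of single Keplerian arcs with outer–inner concatenations and using Proposition \ref{prop:periodic_refr} in place of Proposition \ref{prop:periodic}. The three claims are treated in sequence.

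For commutativity, I would argue directly from the construction of $Y$ and $G$: if $(p_0,w_0)\in Y$ and $\tilde z(\cdot;\un p)$ is the unique billiard trajectory realising $\un s=P(p_0,w_0)$, then $G(p_0,w_0)$ is by definition the pair $(p_1,w_1)$ obtained after one full outer-inner concatenation. The trajectory starting from $(p_1,w_1)$ is the time-translated restriction of $\tilde z(\cdot;\un p)$; relabelling its outer–inner blocks by $k\mapsto k-1$ produces a concatenation whose hitting points lie in $V^{(s_{k+1})}_\ve$, i.e. it realises $\sigma(\un s)$. This gives $P\circ G=\sigma\circ P$.

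For bijectivity, surjectivity on periodic sequences is Proposition \ref{prop:periodic_refr}. For a non-periodic $\un s\in\mathbb L$, I would approximate it by the truncations $\un s^{(n)}$ with periodicity modulus $(s_{-n},\dots,s_n)$, so that $\un s^{(n)}\to\un s$ in the metric \eqref{eq:distance}. Proposition \ref{prop:periodic_refr} yields, for each $n$, a unique critical point $\un\xi^{(n)}\in \mathring{\mathcal U}_{\un s^{(n)}}$ whose components at index $k$ belong to $R^{(s_k)}_\ve$ for $n$ large. The sets $R^{(i)}_\ve$ being compact, a diagonal extraction produces a subsequence $(a_n)$ and a limit sequence $\tilde{\un\xi}\in \otimes_{k\in\Z}\bigl(R^{(s_k)}_\ve\bigr)^2$ with $(u_k^{(E,a_n)},v_k^{(E,a_n)},u_k^{(I,a_n)},v_k^{(I,a_n)})\to(\tilde u_k^{(E)},\tilde v_k^{(E)},\tilde u_k^{(I)},\tilde v_k^{(I)})$ for every $k$. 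By the $C^1$ dependence of outer and inner Cauchy problems and of the refraction law on initial conditions, Snell's law passes to the limit at every transition point, so $\tilde z(\cdot;\tilde{\un\xi})$ is a billiard trajectory realising $\un s$, and hence its initial conditions project to $\un s$. Uniqueness on all of $Y$ follows: if two trajectories in $Y$ projected to the same $\un s$, any periodic truncation would produce two distinct critical points of $W$ in $\mathring{\mathcal U}_{\un s^{(n)}}$, contradicting the uniqueness of Proposition \ref{prop:periodic_refr}.

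For continuity, the key preliminary observation is that the time needed to traverse a single outer–inner block connecting points in $\bigcup_{i=1}^2 V^{(i)}_\ve$ is bounded uniformly in the endpoints: the outer time is bounded by the period of the harmonic oscillator, and the inner time by the standard estimate for Keplerian arcs connecting two prescribed endpoints at fixed energy $\H$ (Theorem \ref{thm:existence_inner}). Fix $(p_0,w_0)\in Y$ with $P(p_0,w_0)=\un s$ and $\epsilon>0$; choose $k_0$ so that $\sum_{|k|>k_0}4^{-|k|}<\epsilon$. The uniform bound yields a time window $[-a,a]$ guaranteeing that any billiard trajectory starting in $Y$ encounters $S$ at least $2k_0+1$ outer–inner blocks inside $[-a,a]$. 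Continuous dependence of the outer/inner flows and of Snell's law on initial conditions then provides $\delta>0$ such that if $|(p_0,w_0)-(p_1,w_1)|<\delta$ the corresponding trajectories cross $S$ in the same sets $V^{(s_k)}_\ve$ for $|k|\le k_0$, killing the inner sum in \eqref{eq:distance} and giving $d(P(p_0,w_0),P(p_1,w_1))<\epsilon$.

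The main obstacle I expect is the limiting step in bijectivity: one must show that the limit concatenation $\tilde z(\cdot;\tilde{\un\xi})$ does not degenerate, in particular that none of its inner arcs becomes tangential to $S$ (which would fail the refraction condition or the critical-angle bound) and that the limit remains in the interior of $\mathcal U_{\un s^{(n)}}$ so that Snell's law holds at every transition. This is handled by the a priori confinement of the critical points $\un\xi^{(n)}$ in the interiors $\mathring{\mathcal U}_{\un s^{(n)}}$ guaranteed by Proposition \ref{prop:periodic_refr} together with the sign-changing estimates \eqref{eq:ch_sign}, which, being independent of the periodicity modulus, survive in the limit.
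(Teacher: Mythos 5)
Your proposal follows essentially the same route as the paper, which itself proves this proposition simply by transporting the argument of Proposition \ref{prop:conj_rifl} to the refractive setting: commutativity by construction of $Y$ and $G$, bijectivity via periodic approximation, compactness and a diagonal extraction combined with Proposition \ref{prop:periodic_refr}, and continuity via uniformly bounded transit times of the outer--inner blocks. Your closing remark on non-degeneracy of the limit concatenation (transversality and the critical angle) is a sensible supplement consistent with how the paper confines the critical points to the interiors $\mathring{\mathcal U}_{\un s^{(n)}}$.
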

	This concludes the proof of Theorem \ref{thm:intro_chaos} in the refractive case as well. \\
	The chaoticity stated in Theorem \ref{thm:intro_chaos} could be related to the results of \cite{Delis20152448}, where the authors conclude, with a mixed numerical and analytical approach, that the three dimensional model where $V_I$ and $V_E$ are simply superimposed admits positive Lyapunov exponents, giving a first hint of chaos.\\
	Let us conclude by noticing that the geometric hypothesis on $D$ identified in Theorem \ref{thm:intro_chaos} is a sufficient condition to have local chaos at large inner energies: it is possible that, considering a different symbolic dynamics, such condition can be extended to other domains' shapes.

		\appendix
	\section{Analytical preliminaries}\label{sec:app}
	This Appendix is intended for the reader's convenience  as a complement to the paper, by giving more details on the analytical tools and results used in Sections \ref{sec:stability} and \ref{sec:chaos}. In particular, Section \ref{ssec:existence_fix_ends} summarises some known facts on the existence of fixed-ends solution for the outer and inner problem; in the case of a Keplerian potential we shall deal with the presence of the singularity at the origin, so it will be necessary to extend our notion of solution and, at the same time, to consider a suitable regularization procedure in three dimensions. In Section \ref{ssec:jacobi_der} we will introduce the concept of \textit{Jacobi distance} (see also \cite{ambrosetticotizelati}), specifying its properties, and of \textit{generating functions} $\S_{E\backslash I}$ that we use in both Section \ref{sec:stability} and \ref{sec:chaos}, including the proofs of Lemmas \ref{lem:der_SI_hom} and \ref{lem:der_SE_hom}. 
	
	\subsection{Existence of fixed-ends solution for the outer and inner problem}\label{ssec:existence_fix_ends}

	In this section we address the problem of finding fixed-ends solution for the outer or inner dynamics at given energies, which stays either outside or inside our domain $D$. By construction of our billiards, it is indeed crucial that, given two points on the domain's boundary $S=\partial D$, a solution of $z''=\nabla V_E(z)$ connecting them stays always outside D; on the other hand, and Keplerian arc connecting them can not exit from $D$. \\
	Let us start by stating the existence and uniqueness of the solution of the outer problem, which is ensured provided that the endpoints are sufficiently close to each others. 	
	 \begin{theorem}\label{thm:outer problem}
	 	For a fixed $\mathcal E>0$, there exists $\delta>0$ such that, for every $p_0, p_1\in S$ such that $|p_0-p_1|<\delta$, there exist a unique $T>0$ and a unique trajectory $z(\cdot; p_0, p_1): [0, T]\to \R^2$that is solution of the problem 
	 	\begin{equation}
	 		\begin{cases}
	 			z''(t)=-\omega^2 z(t), \quad &\forall t\in [0, T]\\
	 			\frac{1}{2}|z'(t)|^2-\frac{\omega^2}{2}|z(t)|^2=\mathcal E \quad &\forall t\in[0,T]\\
	 			z(0)=p_0, \ z(T)=p_1\\
	 			z(t)\notin \bar{D} &\forall t\in(0,T)
	 		\end{cases}
	 	\end{equation}
 	In particular: 
 	\begin{itemize}
 		\item if $p_0\neq p_1$, the arc belongs to the plane generated by $p_0$, $p_1$ and the origin; 
 		\item if $p_0=p_1$, the solution is a radial homothetic arc along the direction of the endpoints, which starts from $p_0$, is reflected back against the boundary of the Hill's region associated to the outer potential (that is, $\{z\in\R^3\ |\ |z|^2=2\mathcal E\}$), and returns on the same point (see Figure \ref{fig:homothetics}).
 	\end{itemize} 	
	 \end{theorem}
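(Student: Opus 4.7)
The plan exploits the closed-form solvability of the harmonic equation. Any solution of $z'' = -\omega^2 z$ reads $z(t) = p_0\cos(\omega t) + (w_0/\omega)\sin(\omega t)$, with $w_0 = z'(0)$ constrained by the zero-energy condition to $|w_0|^2 = 2\mathcal{E} - \omega^2|p_0|^2$ (we assume $S$ lies strictly inside the Hill region, so this quantity is positive). Centrality of $V_E$ makes $z\times z'$ conserved, so every arc is planar; in particular, when $p_0 \neq p_1$ the trajectory must lie in the plane through the origin spanned by $p_0$ and $p_1$.

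I would unify both cases through an implicit function argument based at the homothetic configuration. Fix an orthonormal frame $(\hat{p}_0, \hat{\tau}_1, \hat{\tau}_2)$ of $\mathbb{R}^3$ with $\hat{p}_0 = p_0/|p_0|$, and parameterize $w_0 = \alpha\hat{p}_0 + \beta_1\hat{\tau}_1 + \beta_2\hat{\tau}_2$, where $\alpha = \sqrt{2\mathcal{E} - \omega^2|p_0|^2 - \beta_1^2 - \beta_2^2}$ encodes the energy constraint. Consider the smooth map
\[
\Phi(\beta_1, \beta_2, T) = z(T; p_0, w_0) - p_1.
\]
At the base point $\beta_1 = \beta_2 = 0$, with $\alpha_0 = \sqrt{2\mathcal{E} - \omega^2|p_0|^2}$, the radial solution returns to $p_0$ at the explicit first time $T_0 \in (0, \pi/\omega)$ satisfying $\tan(\omega T_0/2) = \alpha_0/(\omega|p_0|)$. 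Direct computation of the partials at $(0,0,T_0)$ gives
\[
\partial_{\beta_i}\Phi = \tfrac{1}{\omega}\sin(\omega T_0)\,\hat{\tau}_i \quad (i=1,2), \qquad \partial_T \Phi = -\alpha_0\,\hat{p}_0,
\]
which form an orthogonal basis of $\mathbb{R}^3$ since $\sin(\omega T_0) > 0$. The implicit function theorem thus yields unique smooth $T, \beta_1, \beta_2$ solving $\Phi = 0$ for every $p_1$ in a neighborhood of $p_0$; uniformity of the neighborhood in $p_0 \in S$ follows from compactness of $S$ and smoothness of all data. When $p_1 = p_0$, uniqueness forces $\beta_1 = \beta_2 = 0$, recovering the radial homothetic; when $p_1 \neq p_0$, the resulting arc automatically lies in $\mathrm{span}(p_0, p_1, 0)$ by angular momentum conservation.

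The main obstacle is the last condition, $z(t) \notin \bar{D}$ for $t \in (0, T)$. In the base homothetic case this is immediate from the explicit radial form: the particle moves outward from $p_0 \in S$, reaches the Hill boundary $\{|z|^2 = 2\mathcal{E}/\omega^2\}$ with zero velocity at $t = T_0/2$, and returns to $p_0$, remaining at strictly positive distance from $\bar{D}$ on any compact subinterval of $(0, T_0)$ by star-convexity of $D$ with respect to the origin. For perturbed data, the radial velocity component at $t = 0$ and $t = T$ is $\pm\alpha_0 + O(\delta)$, hence transversal to $S$ and outward-pointing on $(0, \varepsilon)$, inward-pointing on $(T - \varepsilon, T)$; the middle portion $[\varepsilon, T - \varepsilon]$ stays uniformly away from $\bar{D}$ by continuous dependence of the harmonic flow on initial data together with the strict positivity of the distance in the base case. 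This yields a uniform $\delta = \delta(\mathcal{E}, \omega, D) > 0$ for which all four conditions of the theorem hold.
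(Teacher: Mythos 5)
Your argument is correct and follows essentially the same route as the paper, which only sketches the proof as ``a simple transversality argument starting from the homothetic arcs'' together with star-convexity of $D$ (deferring details to \cite{deblasiterracinirefraction}); your implicit-function computation at the radial solution, with the nondegenerate frame $\partial_{\beta_i}\Phi=\omega^{-1}\sin(\omega T_0)\,\hat\tau_i$, $\partial_T\Phi=-\alpha_0\hat p_0$, is exactly that transversality argument made explicit, and your use of star-convexity for the exterior condition matches the paper's. The only point left implicit---in your write-up as in the paper's---is that uniqueness is global rather than merely local, i.e.\ that every other harmonic arc joining $p_0$ to $p_1$ at energy $\mathcal E$ (in particular the near-tangential family with small transit time and the ``long'' arcs passing near the antipodal region) violates the condition $z(t)\notin\overline D$ for some $t\in(0,T)$.
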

 	 	The proof of the above Theorem is straightforwardly based on a simple transversality argument starting from the homothetic arcs, which takes into account the star-convexity of $D$ with respect to the origin (for more details, see \cite{deblasiterracinirefraction}). \\
 	 	
 	 	As for the inner dynamics, we have the additional difficulty given by the presence of a singularity at the origin: to treat this case, it is necessary to introduce a more general definition of solution, called of \textit{collision-ejection}. 
 	 \begin{definition}
 	 	An arc $z: \R\to \R^3$ is called a \emph{collision-ejection solution} for the problem 
 	 	\begin{equation}\label{eq:inner_prob}
 	 		\begin{cases}
 	 		z''(t)=-\dfrac{\mu}{|z(t)|^3} z(t)\\
 	 		\dfrac12|z'(t)|^2-\dfrac{\mu}{|z(t)|}=\mathcal H
 	 		\end{cases}
 	 	\end{equation}
  		if there exists $t^*\in \R$ such that $z(t^*)=0$ and $z(t)$ is a $C^2$ solution of \eqref{eq:inner_prob} for any compact subset of $\R\setminus \{t^*\}$. 
 	 \end{definition}
 	 We highlight that collision-ejection solutions are a particular case of \emph{generalised solutions} as described in \cite{ambrosetticotizelati}. \\
 	Let us now fix two generic points $p_0,p_1\in\R^3\setminus \{0\}$, and consider the fixed-end Keplerian problem 
	\begin{equation}\label{eq:fixed_inner}
	\begin{cases}
		z''(t)=-\dfrac{\mu}{|z(t)|^3} z(t) \quad &t\in[0,T]\\
		\dfrac12|z'(t)|^2-\dfrac{\mu}{|z(t)|}=\mathcal H \quad &t\in [0, T]\\
		z(0)=p_0, \ z(T)=p_1
	\end{cases}	
	\end{equation}
 	 for some $T>0$. If the two points are not coincident nor collinear with the origin, it is always possible to find exactly two Keplerian arcs connecting them at a given energy (see also Figure \ref{fig:fix_arcs}).
 	  \begin{figure}
 	  	\centering
 	  	\includegraphics[width=0.35\linewidth]{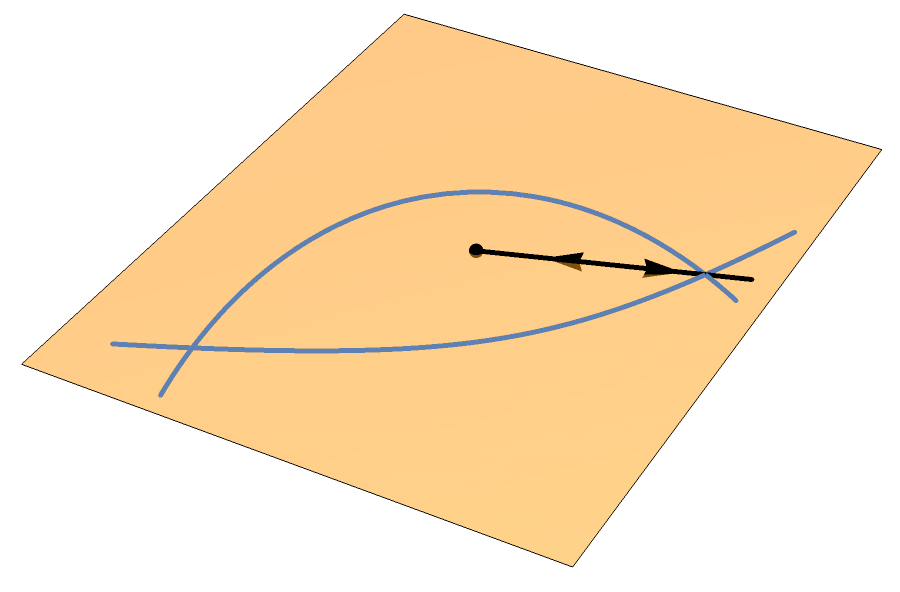}\qquad\qquad\qquad 
 	  	\includegraphics[width=0.35\linewidth]{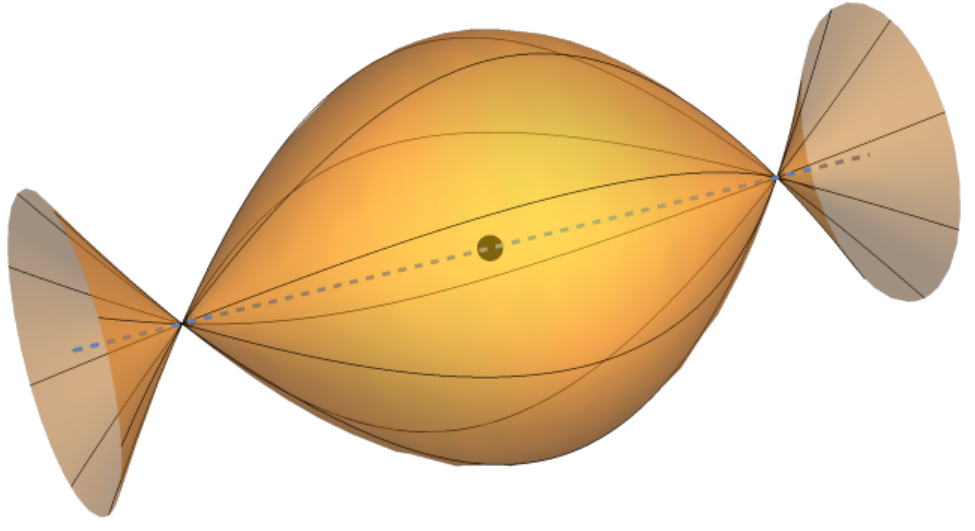}
 	  	\caption{Left: fixed-ends Keplerian outer arcs for the non-collisional and non-antipodal case (blue) and collision-ejection case (black). In the first case, the two arcs belong to the plane containing the origin and the two points. Right: Keplerian arcs connecting two antipodal points: they form a surface obtaining by rotating a Keplerian planar arc. }
 	  	\label{fig:fix_arcs}
 	  \end{figure}
 	\begin{theorem} \label{thm:inner problem 1}
 		Given $\H>0$ and two points $p_0,p_1\in \R^3\setminus\{0\}$, $p_0\neq p_1$ and not collinear with the origin, there are exactly two solutions of \eqref{eq:fixed_inner}, each of them belonging to the plane $\Pi$ generated by $p_0$, $p_1$ and the origin itself. \\
 		Moreover, one of these arcs is always homotopic to the segment connecting $p_0$ to $p_1$ in $\Pi$, while the other is not.  
 	\end{theorem}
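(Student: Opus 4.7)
The plan is to first reduce the three-dimensional problem to a planar one using conservation of angular momentum, then count the solutions via the two-focus characterization of a hyperbolic Keplerian orbit, and finally distinguish the two arcs by a homotopy argument in the punctured plane.

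For the reduction, note that $V_I$ is a central potential, so the angular momentum $L = z \times z'$ is conserved along any solution. If $L = 0$ the motion is radial along a line through the origin, which under the non-collinearity assumption cannot contain both $p_0$ and $p_1$; hence $L \neq 0$ and the orbit lies in a plane through $0$. That plane must contain both endpoints, so it coincides with $\Pi$, and the problem reduces to the planar Kepler problem in $\Pi$.

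In $\Pi$, every positive-energy orbit lies on the branch, near the attracting focus at $0$, of a hyperbola with semi-major axis $a = \mu/(2\mathcal{H})$. Writing $F'$ for the empty focus, this branch is characterized by $|X-F'|-|X| = 2a$, so imposing passage through $p_0$ and $p_1$ gives
\begin{equation*}
|F'-p_0| = |p_0|+2a, \qquad |F'-p_1| = |p_1|+2a.
\end{equation*}
Thus $F'$ lies in the intersection of two circles of the indicated radii centered at $p_0$ and $p_1$. The non-collinearity of $0, p_0, p_1$ is precisely the strict triangle inequality $\bigl||p_0|-|p_1|\bigr| < |p_0-p_1|$, and combined with the trivial upper bound $|p_0-p_1| \leq |p_0|+|p_1| < |p_0|+|p_1|+4a$ this forces the two circles to meet transversally in exactly two distinct points $F'_1, F'_2$. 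A reverse triangle inequality gives $|F'_j| \geq 2a$, with equality only in the degenerate collinear case, so each $F'_j$ determines a genuine hyperbola; the corresponding near-origin branch yields a unique arc from $p_0$ to $p_1$, and hence exactly two solutions of \eqref{eq:fixed_inner}.

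For the homotopy statement, observe that $F'_1$ and $F'_2$ are reflections of each other through the chord line containing $p_0$ and $p_1$, so they lie on opposite sides of it; exactly one, say $F'_2$, is on the same side as the origin. The active branch of a Kepler hyperbola with foci $0$ and $F'$ lies on the origin's side of the perpendicular bisector of $\overline{0F'}$, and its perihelion, at distance $|F'|/2 - a$ from the origin along the focal axis, determines on which side of the chord the arc bends. A direct inspection shows that the arc associated to $F'_1$ (empty focus opposite to $0$) bends away from the origin, so that the region bounded by this arc and the chord does not contain $0$; the arc is therefore homotopic to the chord in $\Pi \setminus \{0\}$. The arc associated to $F'_2$ instead bends on the origin's side of the chord and wraps around $0$, lying in a distinct homotopy class. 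The main technical difficulty is precisely this last step: identifying which arc winds around the origin requires carefully locating each perihelion and ruling out spurious sign configurations, which can be made rigorous either via the explicit vertex computation sketched above or through a continuity argument as $\mathcal{H}$ varies, in which the homotopically trivial arc degenerates to the chord while the other reduces to a near-collision loop about the origin.
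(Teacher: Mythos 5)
Your proposal is correct and follows essentially the paper's route: the reduction to the plane $\Pi$ via conservation of the angular momentum is exactly the paper's argument, and the planar count via the empty focus $F'$ lying on the intersection of the two circles $|F'-p_i|=|p_i|+2a$, $a=\mu/(2\H)$, is the standard two-dimensional argument that the paper outsources to its planar predecessor rather than reproducing. The one step you flag as delicate — deciding which of the two arcs is homotopic to $\overline{p_0p_1}$ — is soundly closed by the continuity-in-$\H$ argument you sketch: as $\H\to\infty$ the empty focus on the origin's side of the chord line collapses to $0$ and its arc degenerates to the broken line $\overline{p_00}\cup\overline{0p_1}$ (precisely the content of Lemma \ref{lem:battin}), while the other empty focus tends to the reflection of $0$ across the chord line and its arc flattens onto the chord; since along the whole deformation the two foci stay on opposite sides of the chord line and neither arc ever meets the origin (perihelion distance $|F'_j|/2-a>0$), the two homotopy classes are constant in $\H$ and are as you claim.
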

 	The proof of the above result is completely analogous to the one in \cite[Lemma B.1]{deblasiterracinibarutelloChaotic}: once the plane $\Pi$ is defined, by the conservation of the angular momentum we already know that the arcs, if existent, belong to it; therefore, the dimensionality of the problem reduces to two and we can argue as in the planar case. \\
 	As a consequence of the above theorem, we have that, whenever the endpoints are not collinear with the origin and different from each other, it is always possible to recover the uniqueness of the Keplerian arc connecting them by simply fixing the homotopy class: for reason that will be clear later, we will always choose the arc belonging to the homotopy class which does not contain $\overline{p_0p_1}$, which we  call $z_I(\cdot; p_0, p_1)$. When the origin belongs to the segment between the endpoints, it is possible to find infinitely-many Keplerian arcs connecting them with the same energy $\H>0$: they are rotations of the same planar arc (see Figure \ref{fig:fix_arcs}). \\	
 	We stress that, in both the inner and outer case, any solution of the problem is planar as a consequence of the conservation of the angular momentum, depending on the radial symmetry of $V_E$ and $V_I$.\\
 	While in the case described in Theorem \ref{thm:inner problem 1} we always find classical $C^2$ solutions of the associated problem, this is not the case when $p_0=p_1$: indeed we can not exclude that the systems enters into collision. For this reason, a \textit{regularization} of three-dimensional Kepler problem is in order: in the present paper, we propose the \textit{Kustanheimo-Stiefel} regularization in the notation of Waldvogel (see \cite{waldvogel2008quaternions}), which can be considered as a three dimensional generalization of the classical Levi-Civita one (see \cite{Levi-Civita}). Waldvogel's construction involves the definition of a suitable four-dimensional space, called \textit{quaternion space}, and of a conformal transformation  that, working like the classical complex square in Levi-Civita case, allows to conjugate Problem \eqref{eq:fixed_inner} to a harmonic non-singular one. 
 	Here we propose a brief summary of Waldvogel's argument: the interested reader can find more details in \cite{waldvogel2008quaternions}.\\
 	Let us start by defining three imaginary units $\hi, \hj, \hk$ satisfying
 	\begin{equation}
 		\hi^2=\hj^2=\hk^2=-1, \quad \hi \hj=-\hj \hi=\hk, \quad \hj \hk=-\hk \hj=\hi, \quad \hk \hi=-\hi \hk =\hj:   
 	\end{equation}
 	the \textit{quaternion space} $\U$ is given by 
 	\begin{equation}
 		\U \eqdef \left\{\un{x}=\x{0}+\x{1}\hi+\x{2}\hj+\x{3}\hk \ |\ \x{n}\in \R\ \forall n=0,\dots,3\right\}\simeq \R^4. 
 	\end{equation}
 	The space $\U$ is a non-commutative algebra on $\R$, and it is immediate to notice that 
 	\begin{equation}
 		\R^3\simeq\left\{\un{x}\in \U\ |	\ \x{3}=0\right\}: 
 	\end{equation}
 	with an abuse of notation, we will identify any vector in $\R^3$ with elements in $\U$ whose last component is null. It is possible to define the usual norm on $\U$, given by
 	\begin{equation}
 		|\un{x}|\eqdef\sum_{i=0}^3\left(\x{i}\right)^2, 
 	\end{equation} 
	the adjoint operator  
	\begin{equation}
		\un{x}^*\eqdef \x{0}+\x{1}\hi+\x{2}\hj-\x{3}\hk, 
	\end{equation}
 and the mapping 
 \begin{equation}
 	KS: \U \to \R^3 \quad \un{x}\mapsto \un{z}=\un{x}\ \un{x}^*. 
 \end{equation}
The mapping $KS$ corresponds to the \textit{Kustanheimo-Stiefel mapping}, and it is easy to verify that, for any $\un{z}\in \R^3$, its preimages through $KS$ are given by 
\begin{equation}\label{eq:KS_preim}
	\begin{aligned}
		\un{x}=\un{x}_0\left(\cos\phi + \hk \sin{\phi}\right), \quad \phi\in\mathbb T,\quad \un{x}_0=\frac{\un{z}+|\un{z}|}{\sqrt{2(z^{(0)})+|\un{z}|}}\in \R^3. 
	\end{aligned}
\end{equation}
 The map $KS$ is a generalization of the classical Levi-Civita transformation in four dimensions: to mimic the regularization algorithm holding in the planar case, it is necessary to understand how it behaves under differentiation. By imposing the bilinear relation 
 \begin{equation}\label{eq:bilinear}
 		2(x^{(3)}dx^{(0)}-x^{(2)}dx^{(1)}+x^{(1)}dx^{(2)}-x^{(0)}dx^{(3)})=0, 
 \end{equation}
one has straightforwardly that $d\left(KS\right)(\underline x)=2\un{x}\ \un{x}^*$: when $\un{x\in\R^3}$, then $\un{x}=\un{x}^*$ and  $d(KS)(\un{x})=2\un{x}d\un{x}$, which is exactly the derivation rule of the complex square. \\
By using the quaternions formalism and the relations described above, it is possible to construct a change in both spatial coordinates (through $KS$) and time parameter, described in full details in \cite{waldvogel2008quaternions}, which leads to the following result. 
\begin{proposition}\label{prop:KS_conj}
		For fixed $p_0, p_1\in\R^3$ and $\H>0$, problem \eqref{eq:fixed_inner} is conjugated, through the transformations
	\begin{equation}
		\frac{d}{dt}=\frac{1}{2 r}\frac{d}{ds}, \quad \underline x\ \underline x^*=\underline z, \quad r=|\underline z|=|\underline x|^2, 
	\end{equation}
	to the regularised system, in  $\mathbb U$, 
	\begin{equation}
		 \begin{cases}
			\ddot{\underline x}(s)-\Omega^2 \underline x(s)=0, \quad &s\in [-\tilde T, \tilde T]\\
			\frac{1}{2}|\dot{\underline{x}}(s)|^2+\frac{\Omega^2}{2}|\underline{x}(s)|^2=\H_{KS}, &s\in[-\tilde T,\tilde T]\\
			\underline x(-\tilde T)=\underline x_0, \ \underline x(\tilde T)=\underline x_1, 
		\end{cases}
	\end{equation}
	with $\underline x(s)\in\mathbb U$ for every $s\in[-\tilde T,\tilde T]$, for a suitable $\tilde T>0$, $\Omega^2=2 \H$, $\H_{KS}=\mu$, and endpoints 
	\begin{equation}
		\underline{x}_0=-\frac{\underline p_0+|\underline{p_0}|}{\sqrt{2\left(p_0^{(0)}+|\underline p_0|\right)}}, \quad 
		\underline{x}_1=\frac{\underline p_1+|\underline{p_1}|}{\sqrt{2\left(p_1^{(0)}+|\underline p_1|\right)}}. 
	\end{equation}
\end{proposition}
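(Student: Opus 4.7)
The plan is to verify the stated conjugation by direct substitution: I would first translate the Kepler equation under the spatial map $\underline z = \underline x\,\underline x^*$, then apply the Sundman-type time rescaling $dt = 2|\underline x|^2\, ds$, and finally compare the resulting system with its first integral, fixing the boundary data through the preimage formula \eqref{eq:KS_preim}.

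First I would set up the kinematics in $\mathbb U$. Because $KS$ is $2$-to-$1$ with an additional $S^1$ gauge freedom recorded in \eqref{eq:KS_preim}, the natural way to invert it is to restrict to the three-dimensional section of $T\mathbb U$ cut out by the bilinear relation \eqref{eq:bilinear}; on this section one has $d(KS)(\underline x) = 2\underline x\, d\underline x$, which is precisely the quaternionic analog of the complex-square formula underlying Levi-Civita regularization. Writing $\underline z = \underline x\, \underline x^*$ and differentiating with respect to the fictitious time $s$ yields $\dot{\underline z} = \dot{\underline x}\, \underline x^* + \underline x\, \dot{\underline x}^*$, which on the constrained section collapses to $2\,\mathrm{Re}(\dot{\underline x}\, \underline x^*)$; differentiating once more and using $r = |\underline x|^2$ produces $\ddot{\underline z}$ as an explicit quadratic expression in $\ddot{\underline x}$, $\dot{\underline x}$, $\underline x$ and $|\underline x|^2$.

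Next I would substitute this expression into the Kepler equation $\underline z'' = -\mu \underline z/|\underline z|^3$ written in physical time $t$. Each differentiation with respect to $t$ brings down a factor $1/(2r) = 1/(2|\underline x|^2)$, so the left-hand side acquires a global factor $|\underline x|^{-4}$; multiplying through by $4|\underline x|^6$ eliminates the Coulomb singularity from the right-hand side. Invoking the Kepler energy integral $\tfrac12 |\dot{\underline z}|^2 - \mu/|\underline z| = \mathcal H$, rewritten in the $\underline x$-variables, delivers a scalar relation that allows me to replace the remaining $\dot{\underline x}$-quadratic term by an expression proportional to $|\underline x|^2(\mu + \mathcal H |\underline x|^2)$; what is left is precisely the linear equation $\ddot{\underline x} - 2\mathcal H\, \underline x = 0$, i.e.\ $\Omega^2 = 2\mathcal H$, and the same energy identity delivers the first integral with $\mathcal H_{KS} = \mu$. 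The boundary data are then fixed by choosing in \eqref{eq:KS_preim} the gauge $\phi = 0$, so that $\underline x_0, \underline x_1 \in \R^3 \subset \mathbb U$ satisfy $KS(\underline x_i) = \underline p_i$; the opposite sign taken on $\underline x_0$ compared with $\underline x_1$ selects the branch of the regularized trajectory which passes through $\underline x = \underline 0$ when $\underline p_0 = \underline p_1$, i.e.\ the collision-ejection case the regularization is designed to handle.

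The main obstacle I anticipate is showing that the bilinear constraint \eqref{eq:bilinear} is preserved along the regularized flow. This relation is not a consequence of the Kepler equation but rather a gauge-fixing condition selecting a section of the $S^1$-fibration of $KS$; for the conjugation to be genuine, its $s$-derivative along the regularized equation must vanish identically. Equivalently, I must exhibit a first integral of the linear system $\ddot{\underline x} = \Omega^2 \underline x$ whose level set coincides with \eqref{eq:bilinear}. Because the regularized force is linear and central in $\underline x$, this integral is a quaternionic analog of the $\hat k$-component of angular momentum in $\mathbb U$, and its conservation follows from a short direct computation; the only delicate point is the non-commutativity of quaternion multiplication, which forces careful book-keeping of the order of $\underline x$, $\dot{\underline x}$ and their adjoints. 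Once the constraint is shown invariant, $KS$ becomes a bona fide diffeomorphism from the constrained region of $\mathbb U \setminus \{\underline 0\}$ onto $\R^3 \setminus \{\underline 0\}$, the fictitious time $s$ extends globally across the collision, and the conjugation asserted in Proposition \ref{prop:KS_conj} is complete.
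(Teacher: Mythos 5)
Your plan is correct and coincides with the direct verification that the paper itself omits, deferring entirely to Waldvogel's reference: the two points you isolate --- linearisation of the equation of motion by eliminating $|\dot{\un x}|^2$ through the energy integral after the rescaling $dt=2|\un x|^2\,ds$, and invariance of the bilinear relation \eqref{eq:bilinear} along the regularised flow, which holds because $\ell(\un x,\dot{\un x})=x^{(3)}\dot x^{(0)}-x^{(2)}\dot x^{(1)}+x^{(1)}\dot x^{(2)}-x^{(0)}\dot x^{(3)}$ is antisymmetric and hence a first integral of $\ddot{\un x}=\Omega^2\un x$ --- are exactly the content of the conjugation. One byproduct of actually carrying out your computation: for $\H>0$ the conserved quantity is $\tfrac12|\dot{\un x}|^2-\tfrac{\Omega^2}{2}|\un x|^2=\mu$, so the $+$ sign in the energy relation of the statement (and the $+\Omega^2\un x$ in the system displayed in the proof of Theorem \ref{thm:inner problem 2}, which is nonetheless solved there by a $\sinh$) is a sign slip that your argument would detect and correct.
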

Let us highlight that, by virtue of Eq. \eqref{eq:KS_preim}, we have one degree of freedom in choosing the preimages $\un x_0, \un x_1$ of $p_0, p_1$. The choice made in Proposition  \ref{prop:KS_conj} has two main reasons: first of all, $\un x_0,\un x_1\in \R^3$; secondly, this is the same choice made in the Levi-Civita case when searching for the collision trajectory (see for example \cite{deblasiterracinibarutelloChaotic}), where one takes respectively the negative and positive determination of the complex square root. \\
Using the above proposition, it is possible to treat the case $p_0=p_1$. 
\begin{theorem}\label{thm:inner problem 2}
Let $\H>0$ and take $p_0\in \R^3\setminus \{0\}$. Then there exists $T>0$ such that system 
\begin{equation}
	\begin{cases}
		z''(t)=-\dfrac{\mu}{|z(t)|^3} z(t) \quad &t\in[0,T]\\
		\dfrac12|z'(t)|^2-\dfrac{\mu}{|z(t)|}=\mathcal H \quad &t\in [0, T]\\
		z(0)=z(T)=p_0
	\end{cases}	
\end{equation}
admits a unique collision-ejection solution,  parallel to the radial direction corresponding to $p_0$, which after collision is reflected back along the same line. 	
\end{theorem}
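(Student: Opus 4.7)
The natural strategy is to lift the singular boundary value problem to the regularised linear system in $\U$ provided by Proposition~\ref{prop:KS_conj}, solve it there, and then push the resulting trajectory forward through the $KS$ map. Since $p_0=p_1$, the two prescribed endpoints in $\U$ become antipodal: $\un x_1=-\un x_0$, with $KS(\un x_0)=p_0$.

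First, I would look for a one-parameter straight-line solution $\un x(s)=\phi(s)\,\un x_0$ with $\phi:[-\tilde T,\tilde T]\to\R$. Since $\un x_0\in\R^3\subset\U$ and $\phi$ is real, the fourth component of $\un x(s)$ vanishes identically and the first three spatial components remain proportional, so the bilinear gauge condition \eqref{eq:bilinear} is automatically satisfied along the curve. The vector ODE in $\U$ then collapses to the scalar equation $\ddot\phi-\Omega^2\phi=0$ with boundary data $\phi(-\tilde T)=1$, $\phi(\tilde T)=-1$, whose unique solution, for each fixed $\tilde T>0$, is $\phi(s)=-\sinh(\Omega s)/\sinh(\Omega\tilde T)$. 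The half-period $\tilde T$ is then determined by the conserved KS energy $\H_{KS}=\mu$, most simply evaluated at $s=0$ where $\un x=0$: using that $|\un x_0|^2=|p_0|$, this yields the explicit relation $\sinh(\Omega\tilde T)=\sqrt{|p_0|\,\H/\mu}$, which admits a unique positive solution $\tilde T$.

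Next, I would push the result back to $\R^3$. Because $\phi$ is real-valued and $\un x_0\in\R^3$, we have $\un z(s)=\un x(s)\,\un x(s)^*=\phi(s)^2\,\un x_0\,\un x_0^*=\phi(s)^2\,p_0$, so the physical trajectory lies entirely on the half-line from the origin through $p_0$. It reaches the origin precisely at $s=0$ (the collision instant), and since $\phi(s)^2>0$ for $s\neq 0$ it approaches $0$ along the ray and is then ejected back along the same ray up to $p_0$, producing the desired collision-ejection arc. The physical time $T$ is recovered by integrating the Sundman reparametrisation $dt=2|\un x(s)|^2\,ds$ over $[-\tilde T,\tilde T]$, which is finite because $|\un x(s)|^2$ is smooth on the whole compact interval.

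For uniqueness, I would argue directly in physical space. Any collision-ejection trajectory reaching the origin has identically vanishing angular momentum on its smooth intervals, so it must lie on a line through $0$, namely the line through $p_0$. Among radial trajectories at positive energy $\H>0$, the one leaving $p_0$ with outward velocity escapes to infinity (hyperbolic motion) and cannot return, so the only radial orbit through $p_0$ that returns to $p_0$ through the origin is the one produced above. The main technical obstacle is the book-keeping inherent in the KS formalism: verifying the bilinear gauge along the ansatz and checking that the specific choice of preimages prescribed by Proposition~\ref{prop:KS_conj} indeed yields $\un x_1=-\un x_0$ when $p_0=p_1$, so that the radial ansatz matches the correct boundary conditions in $\U$.
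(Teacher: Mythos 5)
Your proof takes essentially the same route as the paper's: conjugate the problem through the Kustaanheimo--Stiefel map of Proposition \ref{prop:KS_conj} to the linear equation $\ddot{\un x}-\Omega^2\un x=0$ in $\U$ with antipodal endpoints $\un x_1=-\un x_0$, solve it by the explicit $\sinh$ profile along the fixed real direction $\un x_0$ (your condition $\sinh(\Omega\tilde T)=\sqrt{|p_0|\H/\mu}$ coincides with the paper's $\tilde T=\Omega^{-1}\operatorname{arcsinh}\bigl(\Omega\sqrt{R}/\sqrt{2\H_{KS}}\bigr)$ once $\Omega^2=2\H$ and $\H_{KS}=\mu$ are substituted), and map back to the radial collision--ejection arc $\phi(s)^2p_0$. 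Your additional checks --- the bilinear gauge \eqref{eq:bilinear} along the ansatz and the zero-angular-momentum argument for uniqueness --- are correct and supply details that the paper's proof leaves implicit, since the latter constructs the solution but does not argue uniqueness explicitly.
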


\begin{proof}\label{eq:fixed_KS}
	Without loss of generality, let us suppose that $p_0=(R, 0,0)$, $R>0$: by central symmetry, the following reasoning applies to any $p_0\in \R^3\setminus\{0\}$. According to Proposition \ref{prop:KS_conj}, the problem is conjugated in $\U$ to
	\begin{equation}
		\begin{cases}
			\ddot{\underline x}(s)+\Omega^2\underline x(s)=0\\
			\frac12|\dot{\underline x}(s)|^2+\frac{\Omega^2}{2}|\underline x(s)|^2=\H_{KS}\\
			\underline x(-\tilde T)=\underline x_0=-\underline x_1=\underline x(\tilde T)=-\sqrt{R}\in\mathbb U
		\end{cases}
	\end{equation}
whose solution is trivially given by 
\begin{equation}
	\tilde{\underline x}(s)=\frac{\sqrt{2 \H_{KS}}}{\Omega}\sinh(\Omega s)\in \mathbb U, \quad s\in[-\tilde T,\tilde T], \quad \tilde T=\frac{1}{\Omega}arcsinh\left(\frac{\Omega\sqrt{R}}{\sqrt{2\H_{KS}}}\right). 
\end{equation}
Returning to the physical space, the above solution corresponds to a collision-ejection arc given by 
\begin{equation}
	\tilde z(t)=\tilde{\underline x}(s(t))\ \tilde{\underline x}^*(s(t))=\tilde{\underline x}^2(s(t)), \quad t\in[0, T_I]
\end{equation}
for a suitable $T_I>0$. The solution $\tilde z(t)$ is always parallel to the $x-$axis, starts from $p_0$, bounces at the origin and returns at the initial point along the same direction.
\end{proof}

There is a strong relation between classical solutions of Problem \eqref{eq:fixed_inner} and collision ones: it is in fact possible to prove that, taking the unique solution of the problem which is not homotopic to $\overline{p_0p_1}$ and taking the limit $p_0\to p_1$, the arc $z_I(\cdot; p_0, p_1)$ will tend precisely to the straight collision-ejection solution in the direction of $p_1$. This is the reason for the choice if the homotopy class made before.  \\
Now that the existence of an arc connecting points in $\R^3$ is ensured, let us return to our original problem: take $p_0,p_1\in S$ and search for a Keplerian inner arc connecting them. If the points are not antipodal to each others, we are able to find a solution of the inner problem that goes from $p_0$ to $p_1$; nevertheless, we still have to prove that such solution is always internal to $D$. This is true whenever the inner energy is sufficiently large, as a consequence of the following classical result in Celestial Mechanics (cfr. \cite[pp.273-274]{battin}). 
\begin{lemma}\label{lem:battin}
	Let us take $p_0, p_1\in \R^3\setminus\{0\}$ not aligned with the origin, and consider the arcs $z_I(\cdot; p_0, p_1)$ not homotopic to $\overline{p_0p_1}$ for increasing values of $\H$. Then, it holds that they converge in $C^0-$norm to the broken line composed by the segments $\overline{p_00}$ and $\overline{0p_1}$.  
\end{lemma}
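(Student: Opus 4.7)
The plan is to use the explicit conic-section description of Keplerian arcs and let the semi-major axis shrink to zero with $\H\to\infty$. Recall that any solution of \eqref{eq:fixed_inner} with $\H>0$ traces out a branch of a hyperbola with focus at the origin, described in polar coordinates $(r,\theta)$ by
\begin{equation*}
r(\theta)=\frac{p}{1+e\cos(\theta-\theta_0)},\qquad p=a(e^2-1),\qquad a=\frac{\mu}{2\H},
\end{equation*}
where $e>1$ is the eccentricity and $\theta_0$ is the argument of pericenter. For the two endpoints $p_0,p_1\in\R^3\setminus\{0\}$ not collinear with the origin, the orbit is planar, and there are exactly two such hyperbolas through $p_0,p_1$ at fixed energy $\H$: one whose shorter arc between the two points is homotopic to $\overline{p_0p_1}$ in the plane minus the origin, and the one we are considering, which winds around the origin.

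The first step is to identify how $e$ and $\theta_0$ depend on $\H$. Writing the constraint $r(\theta_i)=|p_i|$ at the two polar angles $\theta_i$ of $p_i$ (with $\theta_1-\theta_0>\pi>\theta_0-\theta_1$ on the non-homotopic branch), and using $a=\mu/(2\H)\to 0$, one obtains two equations of the form $1+e\cos(\theta_i-\theta_0)=p/|p_i|$ with $p=a(e^2-1)\to 0$ unless $e\to\infty$. A direct asymptotic analysis shows that $e\to\infty$ and $\theta_0$ tends to the direction opposite to the bisector of the angle $\widehat{p_0 0 p_1}$, so that the pericenter lies on the ray opposite to the chord and the pericenter distance $q(\H)=a(e-1)$ tends to $0$.

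The second step is the geometric convergence. Parameterize each arc by its polar angle $\theta$, ranging over an interval $[\theta_0^-(\H),\theta_0^+(\H)]$ with the pericenter $\theta=\theta_0(\H)$ in the interior. On the ``outgoing from $p_0$'' portion $[\theta_{p_0},\theta_0]$, as $\H\to\infty$ the radius $r(\theta)$ decreases monotonically from $|p_0|$ to $q(\H)\to 0$ while the direction $\theta$ barely changes (since for large $e$, the polar angle moves swiftly only through a narrow window near $\theta_0$). Hence that half of the arc concentrates on the radial segment $\overline{p_0 0}$. The symmetric argument on $[\theta_0,\theta_{p_1}]$ gives the segment $\overline{0 p_1}$. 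Outside any ball $B_\delta(0)$ this convergence is uniform, because the conic $r(\theta)=p/(1+e\cos(\theta-\theta_0))$ converges pointwise and monotonically to $r=0$ in $\theta$ away from $\theta_{p_0},\theta_{p_1}$.

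To upgrade this to $C^0$ convergence under a uniform parameterization (say arc-length normalized to $[0,1]$), the main technical point is to control the behavior through the near-collision segment where the velocity blows up. Here I would invoke the Kustanheimo--Stiefel regularization from Proposition \ref{prop:KS_conj}: in $\U$ the regularized trajectory is a small harmonic arc of amplitude $\sqrt{q(\H)}\to 0$ connecting the preimages of $p_0$ and $p_1$, and the time-reparametrization $dt=2r\,ds$ is uniformly controlled. Squaring this regularized arc and projecting back to $\R^3$ yields a $C^0$-continuous curve on a compact parameter interval, and uniform convergence of the KS trajectory to a degenerate pair of outgoing/returning harmonic segments through the origin translates, after squaring, into uniform convergence of $z_I(\cdot;p_0,p_1)$ to $\overline{p_0 0}\cup\overline{0 p_1}$.

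The main obstacle is precisely the passage through the collision region: near the origin the physical time parameterization degenerates, so one cannot compare arcs of different energies directly. Handling this via KS regularization (so that the reparametrized problem becomes a regular fixed-endpoint harmonic one whose amplitude tends to $0$) is the cleanest way around the difficulty, and ties in naturally with the regularization framework already in use in the paper.
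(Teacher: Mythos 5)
The paper does not actually prove this lemma: it is quoted as a classical fact with a pointer to Battin's book (and the related length asymptotics are taken from Bolotin--Mackay), so your attempt at a self-contained argument via the explicit conic description plus Kustaanheimo--Stiefel regularization is necessarily a different route. The overall architecture --- degeneration of the conic as $a=\mu/(2\H)\to 0$, then KS regularization to get uniform convergence through the near-collision region after renormalizing the parameter to a fixed interval --- is a natural and workable way to do it.

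However, your first step contains a concrete error: for the arc \emph{not} homotopic to $\overline{p_0p_1}$ the eccentricity does \emph{not} tend to infinity. Let $\alpha\in(0,\pi)$ be the angle $\widehat{p_0 0 p_1}$. The non-homotopic arc sweeps a polar angle of magnitude $2\pi-\alpha>\pi$, whereas a hyperbolic branch only covers true anomalies $|\nu|<\theta_\infty=\arccos(-1/e)$; the requirement $2\theta_\infty>2\pi-\alpha$ forces $e<1/\cos(\alpha/2)$ for every $\H$, and in the limit $e\to 1/\cos(\alpha/2)$, a finite value. What degenerates is the angular momentum, $L\sim \mu\tan(\alpha/2)/\sqrt{2\H}\to 0$, hence the semi-latus rectum $p=a(e^2-1)\to 0$ and the pericenter distance $q=p/(1+e)\to 0$. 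The asymptotics $e\to\infty$ that you describe are those of the \emph{other} arc, the one homotopic to the chord, which flattens onto $\overline{p_0p_1}$. This matters because your justification of the second step (``for large $e$ the polar angle moves swiftly only through a narrow window near $\theta_0$'') rests on the false premise: the angular window is not narrow, it tends to $2\pi-\alpha$. The correct mechanism is that $p\to 0$ forces $r(\nu)=p/(1+e\cos(\nu))\to 0$ uniformly for $\nu$ in compact subsets of $(-\theta_\infty,\theta_\infty)$, so the portion of the arc outside any ball $B_\delta(0)$ is confined to true anomalies $o(1)$-close to $\pm\theta_\infty$, i.e.\ to the two limiting rays through $p_0$ and $p_1$. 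With this correction (and keeping in mind that both the physical travel time and the KS interval $[-\tilde T,\tilde T]$ shrink as $\H\to\infty$, so the $C^0$ statement must indeed be read after reparameterization to a fixed interval, as you propose), the remainder of your argument can be carried through.
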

Putting together all the informations above, we can finally state the final result of this Section, claiming the existence and uniqueness of inner arc provided that the energy is sufficiently large. 
\begin{theorem}\label{thm:existence_inner}
	There exists $\overline\H>0$ such that, for every $\H>\overline \H$, $p_0, p_1\in S$ not antipodal, there is a unique solution of the inner fixed-ends problem \eqref{eq:fixed_inner}, denoted by $z_I(\cdot; p_0, p_1)$, which is not homotopic to $\overline{p_0p_1}$ and stays completely inside $D$. If $p_0=p_1$, $z_I(\cdot; p_0,p_1)$ is a collision-ejection arc.
\end{theorem}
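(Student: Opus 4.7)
The strategy is to combine Theorems~\ref{thm:inner problem 1} and \ref{thm:inner problem 2} with Lemma~\ref{lem:battin} to first obtain, for each pair of endpoints, existence and uniqueness of an inner arc in the prescribed homotopy class, and then to upgrade the $C^0$-convergence of Lemma~\ref{lem:battin} towards the broken line $\overline{p_0 0}\cup\overline{0p_1}$ into a uniform estimate ensuring confinement in $\overline D$ for a single threshold $\overline\H$.

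\textbf{Case-by-case existence.} When $p_0=p_1$, Theorem~\ref{thm:inner problem 2} supplies the unique collision--ejection arc tracing the radial segment $\overline{0p_0}$, which by star-convexity of $D$ lies in $\overline D$ and meets $\partial D$ only at $p_0$; this case works for any $\H>0$. When $p_0\neq p_1$ and the two points are not collinear with the origin---the only remaining sub-case, since star-convexity of $D$ forces two distinct collinear points of $S$ to be antipodal---Theorem~\ref{thm:inner problem 1} yields exactly two planar arcs joining $p_0$ to $p_1$ at energy $\H$ in the plane $\Pi=\mathrm{span}(0,p_0,p_1)$, and we select $z_I(\cdot;p_0,p_1)$ as the one not homotopic to $\overline{p_0p_1}$ in $\Pi$.

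\textbf{Containment via shadowing of the broken line.} Fix a non-antipodal pair. By Lemma~\ref{lem:battin}, $z_I(\cdot;p_0,p_1)$ converges in $C^0$-norm to $\overline{p_0 0}\cup\overline{0p_1}$ as $\H\to\infty$. I would split the arc into short initial/final pieces near each endpoint and a compact central piece. On the central piece the broken line is at uniform positive distance from $\partial D$ (star-convexity plus compactness of $S$ away from the endpoints), so $C^0$-closeness forces the arc into $D$ for $\H$ large. On the endpoint pieces I would exploit that the asymptotic expansion underlying Lemma~\ref{lem:battin} implies $z'_I(0;p_0,p_1)$ tends to an inward-pointing radial vector at $p_0$; by star-convexity this direction has a strictly positive component along the inner normal to $S$ at $p_0$, so by continuity the arc enters $D$ transversely for $\H$ sufficiently large, and similarly at $p_1$.

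\textbf{Uniformity and main obstacle.} To extract a single $\overline\H$ valid for all non-antipodal pairs I would use compactness of $S\times S$ and continuity of the extended family $(p_0,p_1,\H)\mapsto z_I$. The near-antipodal limit $p_0\to -p_1$ poses no problem for containment: the limit broken line degenerates to a diameter through $0$, still entirely in $\overline D$ by star-convexity, and the non-antipodality hypothesis keeps the pair off the antipodal locus. The coincident limit $p_0\to p_1$ is the principal obstacle: here $z_I$ degenerates to the collision--ejection solution of Theorem~\ref{thm:inner problem 2} across the singularity of $V_I$, so plain $C^1$-dependence on endpoints breaks down. The Kustaanheimo--Stiefel conjugation of Proposition~\ref{prop:KS_conj} removes the singularity and provides a smooth extension of the family through coincident endpoints, allowing the transversality and distance estimates above to pass to the limit uniformly in $(p_0,p_1)\in S\times S$. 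Combining the three regimes yields the desired uniform $\overline\H$.
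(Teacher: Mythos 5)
Your proposal is correct and follows essentially the same route as the paper, which proves this theorem in one sentence by citing exactly the ingredients you use: Theorems~\ref{thm:inner problem 1} and \ref{thm:inner problem 2} for existence/uniqueness in the chosen homotopy class, Lemma~\ref{lem:battin} for confinement at large $\H$, and star-convexity of $D$ plus compactness of $S$ for uniformity. Your write-up simply supplies the details (transversality at the endpoints, the KS-regularised extension through coincident endpoints) that the paper leaves implicit.
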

The proof is a simple consequence of Theorems \ref{thm:inner problem 1}, \ref{thm:inner problem 2} and Lemma \ref{lem:battin}, along with the star-convexity of the domain $D$ and the compactness of its boundary $S$. 

\subsection{Jacobi distance, generating function and their derivatives}	\label{ssec:jacobi_der}
 A very useful analytical tool to describe the geometric properties of outer and inner arcs in our billiard models, as well as to construct suitable first return maps, is given by the \emph{Jacobi length}. In the present section we summarise the principal definitions and properties we use in the whole paper, along with the proofs of Lemmas \ref{lem:der_SE_hom} and \ref{lem:der_SI_hom}; for more details, one can see \cite{deblasiterracinirefraction} and the references therein.\\
 	
Let us start by considering a regular potential $V:\R^3\to \R$, and, given $p_0, p_1\in \R^3$, let us consider $z\in H^1([0,T],\R^3)$ such that $z(0)=p_0$ and $z(T)=p_1$: given $E\in \R$, the \textit{Jacobi length} of $z$ is given by 
\begin{equation}
	\mathcal L(z(\cdot))=\int_0^T|z'(t)|\sqrt{E+V(z(t))}\ dt. 
\end{equation}   
It is possible to show (see for example \cite{deblasiterracinirefraction}) that solutions of the fixed-ends problem at energy $E$ defined by
\begin{equation}\label{eq:fix_gen}
	\begin{cases}
		z''(t)=\nabla V(z(t))\\
		\frac{1}{2}|z'(t)|^2-V(z(t))=E\\
		z(0)=p_0, \ z(T)=p_1
	\end{cases}
\end{equation} 
correspond to critical points of the corresponding Jacobi length in the space of the $H^1$ functions which connect $p_0$ to $p_1$; moreover, the functional $\mathcal L$ is invariant under reparametrizations. \\
Let us now suppose that, given $p_0,p_1$ as endpoints, the problem \eqref{eq:fix_gen} admits a unique solution, denoted by $z(\cdot; p_0, p_1)$: it is then possible to use the functional $\mathcal L$ to define the \emph{Jacobi distance} between $p_0$ and $p_1$ as $d(p_0, p_1)=\mathcal L\left(z(\cdot; p_0, p_1)\right)$. The function $d(\cdot, \cdot)$ is well defined for every pair of endpoints for which the solution $z$ is unique, and depends on the potential $V$, as well as on the energy $E$. Whenever well defined, $d(\cdot, \cdot)$ is of class $C^\infty$ in both variables, and it holds 
\begin{equation}\label{eq:derL}
	\nabla_1d(p_0, p_1)=-\sqrt{V(p_0)}\dfrac{z'(0)}{|z'(0)|}, \quad \nabla_2d(p_0, p_1)=-\sqrt{V(p_1)}\dfrac{z'(T)}{|z'(T)|}, 
\end{equation}
where $\nabla_{1\backslash 2}$ refer to the gradient respectively with respect to the first and second point. \\
Returning to the billiard model, we can apply the above definitions to the inner and outer dynamics to characterise arcs which start and end on the domain's boundary $S$. \\
Let us start with the reflective case, where the potential involved is the sole inner one: in principle, to define properly the Jacobi distance of two points on $S$ connected by an inner arc we have two problems: the lack of uniqueness of Keplerian hyperbol\ae \  connecting two non-antipodal points and the singularity of $V_I$ at the origin. The first problem is solved by choosing suitably the homotopy class where searching for our arcs, as it is done in Theorem \ref{thm:existence_inner}. As for the singularity, we already explained in \ref{prop:KS_conj} how it is possible to conjugate the Kepler problem to a  regular fixed ends problem in the quaternion space $\U$: such conjugation has consequences on the definition of the Jacobi length in the inner case, which we will denote with $\mathcal L_I$. In particular, by direct computations one can prove the following lemma. 
\begin{lemma}\label{lem:L_I_to_KS}
	Let $p_0, p_1\in \R^3$ and let $z_I(\cdot; p_0, p_1)$ be the solution (possibly of collision-ejection) of \eqref{eq:fixed_inner} of Theorem \ref{thm:existence_inner}, while $\un x(\cdot; \un{x}_0, \un{x}_1)$ is the corresponding conjugated solution in \eqref{eq:fixed_KS}. Then it holds 
	\begin{equation}
		\mathcal L_I(z_I(\cdot; p_0, p_1))=2\mathcal L_{KS}\left(\un x(\cdot; \un{x}_0, \un{x}_1)\right)\eqdef \int_{-\tilde T}^{\tilde T}|\dot{ \un{x}}(\tau)|\sqrt{\dfrac{\Omega^2}{2}|\un{x}(\tau)|^2+\H_{KS}}\ d\tau. 
	\end{equation}
In particular, the formula holds when $p_0=p_1$, then in the case of collision-ejection arcs. 
\end{lemma}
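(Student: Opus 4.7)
The plan is to reduce the claim to two elementary pointwise identities --- one coming from the energy laws on each side, the other being the key Kustaanheimo--Stiefel velocity identity --- so that the reparametrization $dt=2r\,ds$ of Proposition \ref{prop:KS_conj} then converts the first integral into twice the second.

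First I would exploit the zero--energy constraint of Eq. \eqref{eq:inner_prob} to eliminate the square root in $\mathcal L_I$: since $|z'(t)|^2=2V_I(z(t))$ along any solution, one has $|z'|\sqrt{V_I(z)}=|z'|^2/\sqrt 2$, and therefore
\begin{equation*}
\mathcal L_I(z_I(\cdot;p_0,p_1))=\frac{1}{\sqrt 2}\int_0^T |z'(t)|^2\,dt.
\end{equation*}
The analogous simplification on the other side uses the energy conservation of the regularized system ($\tfrac12|\dot{\underline x}|^2=\tfrac{\Omega^2}{2}|\underline x|^2+\H_{KS}$), which gives
\begin{equation*}
\mathcal L_{KS}(\underline x(\cdot;\underline x_0,\underline x_1))=\frac{1}{\sqrt 2}\int_{-\tilde T}^{\tilde T}|\dot{\underline x}(s)|^2\,ds.
\end{equation*}
Thus the whole statement reduces to the identity $\int_0^T|z'|^2\,dt=2\int_{-\tilde T}^{\tilde T}|\dot{\underline x}|^2\,ds$.

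Next I would apply the time change $dt=2r\,ds$ with $r=|\underline z|=|\underline x|^2$ from Proposition \ref{prop:KS_conj} to write $\int_0^T|z'|^2\,dt=2\int_{-\tilde T}^{\tilde T}r|z'|^2\,ds$, reducing everything to the pointwise equality $r|z'(t)|^2=|\dot{\underline x}(s)|^2$. This is the heart of the matter. The cleanest way to obtain it is to compare the two energy relations: in the physical frame $|z'|^2=2\H+2\mu/r$, so $r|z'|^2=2\H r+2\mu$; in the regularized frame $|\dot{\underline x}|^2=\Omega^2|\underline x|^2+2\H_{KS}=2\H r+2\mu$ (using $\Omega^2=2\H$, $\H_{KS}=\mu$). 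The two right-hand sides agree, and the identity follows. Alternatively, one can derive it directly from the bilinear relation \eqref{eq:bilinear} and $d(KS)(\underline x)=2\underline x\,\underline x^*$, expanding $|d\underline z/dt|^2=|2\underline x^*\dot{\underline x}/(2r)|^2$ in the quaternion norm; I would favor the energy argument because it avoids the quaternion bookkeeping.

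Finally, I would address the collision--ejection case $p_0=p_1$, which is the only instance where $z_I$ fails to be $C^2$. Here the KS--regularized trajectory $\underline x(\cdot;\underline x_0,-\underline x_0)$ is a smooth harmonic--type curve on the entire compact interval $[-\tilde T,\tilde T]$ (explicitly $\frac{\sqrt{2\H_{KS}}}{\Omega}\sinh(\Omega s)$ in the radial example of Theorem \ref{thm:inner problem 2}), so the right-hand integrand is bounded and continuous throughout. One then either \emph{defines} $\mathcal L_I$ for collision arcs through this regularized integral, or obtains the identity as the $p_1\to p_0$ limit of the non-collisional case (using the $C^0$--convergence stated in Lemma \ref{lem:battin} together with uniform bounds on $r|z'|^2$ that remain bounded thanks to the regularization). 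The main obstacle is really just this last point --- ensuring that the formal chain of equalities survives the singular limit --- but the boundedness of the KS integrand makes this routine.
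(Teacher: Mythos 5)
Your proof is correct and is precisely the ``direct computation'' the paper alludes to: the two energy relations linearize the Jacobi integrands into $\tfrac{1}{\sqrt2}\int|z'|^2\,dt$ and $\tfrac{1}{\sqrt2}\int|\dot{\underline x}|^2\,ds$, and the time change $dt=2r\,ds$ together with the pointwise identity $r|z'|^2=|\dot{\underline x}|^2$ produces exactly the factor $2$, with the KS side staying bounded through collision. Note only that you (correctly) use the energy relation $\tfrac12|\dot{\underline x}|^2-\tfrac{\Omega^2}{2}|\underline x|^2=\H_{KS}$ for the repulsive regularized oscillator, whereas Proposition \ref{prop:KS_conj} prints the middle sign reversed --- a typo in the paper, as one sees from the $\sinh$ solution in Theorem \ref{thm:inner problem 2} and from the square root appearing in the definition of $\mathcal L_{KS}$.
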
 
Through the above lemma, we are able to define the \textit{inner Jacobi distance} $d_I(p_0, p_1)\eqdef\mathcal L_I(z_I(\cdot; p_0, p_1))$ for every pair of non antipodal\footnote{Let us notice that the non antipodality is not a restrictive hypothesis for our analysis, since we will never work with antipodal endpoints.  } points, even coincident.\\
Let us now describe how to use the Jacobi distance to give a variational interpretation of the reflection law. To this aim, let us take two points $p_0, p_1\in S$, and suppose that, taking $\H$ sufficiently large, they are connected by a concatenation of inner arcs passing through $\tilde p\in S$: considering $d_I(p_0, \tilde p)$ and $d_I(\tilde p, p_1)$, it is possible to define the total Jacobi length of the concatenation through $\tilde p$ as   
\begin{equation}
	f: \tilde p \mapsto d_I(p_0, \tilde p)  + d_I(\tilde p, p_1). 
\end{equation}
The point $\tilde p$ is critical for $f$ constrained to $S$ if and only if for every $e\in T_{\tilde p}S$ it holds $\scp{\nabla f(\tilde p), e }=0$, namely, recalling \eqref{eq:derL}\footnote{With an abuse of notation, here and in the following we will denote with $T>0$ the final time necessary to arrive to the second endpoint, which clearly depends on the points and the potential. }, 
\begin{equation}
	\scp{z_I'(T; p_0, \tilde p), e}=\scp{z'_I(0; \tilde p, p_1), e}, 
\end{equation}
which defines precisely the reflection law. In other words, \emph{asking that a concatenation is connected by a reflection is equivalent to ask that the transition point is a critical point for the total Jacobi length of two consecutive arcs.}\\
Let us now pass in local coordinates, and, given $p_0, p_1\in S$ not antipodal and $\H$ sufficiently large, let us consider two local charts $\Gamma^{(i)}: U_i\to S$, $i=0,1$, around the two points. We can then rewrite locally the Jacobi distance through the function 
\begin{equation}
	\S_I: U_0\times U_1\to \R, \quad (u_0, v_0, u_1, v_1)\mapsto d_I\left(\Gamma^{(0)}(u_0,v_0), \Gamma^{(1)}(u_1,v_1)\right). 
\end{equation}
The function $\S_I$ is at least $C^2$, and by the chain rule one finds the derivatives' formulas listed in Eqs. \eqref{eq:der1S}. Such derivatives depend on the parameterisation and find a closed explicit formula only knowing the initial or final velocity vector of the inner arc. In our computations, we will need to find explicit formulas only in the case of homothetic arcs, in the proof of 	Lemma \ref{lem:der_SI_hom}. \\

As for the refractive case, once one has $p_0,p_1\in S$ sufficiently close, by theorem \ref{thm:outer problem} there exists a unique outer arc $z_E(\cdot; p_0, p_1)$ connecting them, and using the same reasoning as before one can define a \emph{Jacobi distance} $d_E(p_0, p_1)\eqdef \mathcal L_E(z_E(\cdot; p_0, p_1))$, where $\mathcal L_E$ is the Jacobi length relative to $V_E$, and the corresponding representation in local coordinates $\S_E(u_0, v_0, u_1, v_1)$. Here, the outer potential $V_E$ is smooth, and then no regularisation procedures are necessary. Of course, the inner dynamics is described by the same equations as in the reflective model, and by analogy with the previous case, whenever $p_0, p_1\in S$ are connected by a concatenation of refracted arcs passing through a point $\tilde p\in S$, it results that the latter is a critical point for the function 
\begin{equation}
	p\mapsto d_E(p_0, p)+d_I(p, p_1): 
\end{equation}
in local coordinates, one retrieves Eqs. \eqref{eq:refrS}. Let us now take into account Lemmas \ref{lem:der_SE_hom}, \ref{lem:der_SI_hom}, used in Section \ref{sec:stability} to compute the linear stability of homothetic trajectories. Taking $\bar p\in S$ central configuration according to Definition \ref{def:cc}, consider a local chart $\Gamma$ as in Remark \ref{rem:par}.
Lemmas \ref{lem:der_SE_hom} and \ref{lem:der_SI_hom} provide respectively the expressions of the second derivatives of $\S_E$ and $\S_I$ when computed in correspondence of a homothetic arc, namely, in the case $u_0=u_1=\bu$ and $v_0=v_1=\bv$, as functions of the geometric parameters and the geometric properties of $S$ in $\bp$. \\
As for the outer dynamics, the proof is completely analogous to the one presented, in the case of a two-dimensional dynamics, in \cite[Section 5.2]{deblasiterracinirefraction}: we will focus then on the proof the Lemma in the inner case, where the use of a different regularization technique requires some adjustments to what has been already done in two dimensions.  
\begin{proof}[Proof of Lemma \ref{lem:der_SI_hom}]
	For the sake of simplicity, let us suppose that $\bp$ belongs to the $x-$axis, and in particular $\bp=(R, 0,0)$: in any other case, a simple rotation of the reference frame will lead to the same conclusion. Let us then consider the collision-ejection inner solution $z_I(t; \bp, \bp)$, and the corresponding solution $\overline{\un x}(\tau)$ in the regularised system, which, according to Proposition \ref{prop:KS_conj}, has endpoints $\un x_0=-\sqrt{R}\in \U$ and $\un x_1=\sqrt{R}\in \U$: from Lemma \ref{lem:L_I_to_KS}, it is sufficient to find the second derivatives of $\mathcal L_{KS}$ to find the ones of $\mathcal L_I$.
	By the choice of the parameterisation $\Gamma,$ we have that $\bp = |\bp|\hat x$, $\Gamma_u(\bu, \bv)=\hat z$, $\Gamma_v(\bu, \bv)=\hat y$,  $ N(\bu, \bv)=-\hat x$, where $\hat x, \hat y$ and $\hat z$ are the canonical basis of $\R^3\subset \U$.\\
	Since we need to consider infinitesimal variations on $S$ from $\bar p$, it is necessary to compute the local preimages of $S$ through the transformation $KS$ locally around $\un x_0$ and $\un x_1$. Identifying, with an abuse of notation, $\Gamma(u,v)\in \U$, let us consider the subsets of $\U$ parameterised by
	\begin{equation}
		\Gamma^{\pm}: U\to \mathbb U, \ \Gamma^{-}(u,v)=-\frac{ \Gamma (u,v)+|{\Gamma(u,v)}|}{\sqrt{2\left(\Gamma^{(0)}(u,v)+| \Gamma(u,v)|\right)}}, \ \Gamma^{+}(u,v)=\frac{ \Gamma (u,v)+|{\Gamma(u,v)}|}{\sqrt{2\left(\Gamma^{(0)}(u,v)+| \Gamma(u,v)|\right)}}. 
	\end{equation}
	By construction, for every $(u, v)\in U$ it holds $\Gamma^{\pm}(u,v)\in \R^3$; moreover, $\Gamma^{-}(\bu, \bv)=\un x_0$ and $\Gamma^{+}(\bu, \bv)=\un x_1$. 
	Let us now compute the tangent vectors of $\Gamma^{\pm}$, by taking into account the bilinear relation \eqref{eq:bilinear}:
	\begin{equation}
		\left(\Gamma^-(u,v)\right)^2= \Gamma(u,v)\ \Rightarrow\  2\Gamma^-(\bu,\bv)\Gamma^-_u(\bu,\bv)=\Gamma_u(\bu,\bv):
	\end{equation}
	considering $\Gamma^-_u(\bu, \bv)=x^{(0)}+x^{(1)}\hi+x^{(2)}\hj+x^{(3)}\hk$ as an unknown element of $\U$, one has
	\begin{equation}\label{eq:gammaminus}
		-2\sqrt{R}\left(x^{(0)}+x^{(1)}\hi+x^{(2)}\hj+x^{(3)}\hk\right)=\hj \Rightarrow \Gamma_u^-(\bu, \bv)=-\frac{1}{2\sqrt{R}}\hj; 
	\end{equation}
	with the same reasoning, one can find $\Gamma^+_u(\bu,\bv)=(2\sqrt{R})^{-1}\hj$, $\Gamma^-_v(\bu,\bv)=-(2\sqrt{R})^{-1}\hi$ and $\Gamma^+_u(\bu,\bv)=(2\sqrt{R})^{-1}\hi$. \\
	Let us now pass to the second derivatives of $\Gamma^\pm$: differentiating again Eq. \eqref{eq:gammaminus}, one has
	\begin{equation}
		2\left(\Gamma_u^-(u,v)\right)^2+2\Gamma^-(u,v)\Gamma_{uu}^-(u,v)=\Gamma_{uu}(u,v).  
	\end{equation}
    Projecting along the first component in $\U$, and again treating $\Gamma^-_{uu}(\bu,\bv)=x^{(0)}+x^{(1)}\hi+x^{(2)}\hj+x^{(3)}\hk$ as an unknown vector, one can take into account Eqs. \eqref{eq:par_giusta}, finding
	\begin{equation}
		\begin{aligned}
			&\left(2\left(\Gamma_u^-(\bu,\bv)\right)^2+2\Gamma^-(\bu,\bv)\Gamma_{uu}^-(\bu,\bv)\right)^{(0)}=\Gamma_{uu}^{(0)}(\bu,\bv) \\
			&\Rightarrow \left(2\left(-\frac{1}{2\sqrt{R}\hj}^2-2\sqrt{R}\left(x^{(0)}+x^{(1)}\hi+x^{(2)}\hj+x^{(3)}\hk\right)\right)\right)^{(0)}=-\kappa_1\\
			&\Rightarrow \left(\Gamma_{uu}^-(\bu,\bv)\right)^{(0)}=-\frac{1}{2\sqrt{R}}\left(\frac{1}{2R}-\kappa_1\right), 
		\end{aligned}
	\end{equation}
	and, with the same reasoning, 
	\begin{equation}
		\begin{aligned}
			&\left(\Gamma_{uu}^+(\bu,\bv)\right)^{(0)}=\frac{1}{2\sqrt{R}}\left(\frac{1}{2R}-\kappa_1\right), \ \left(\Gamma_{vv}^-(\bu,\bv)\right)^{(0)}=-\frac{1}{2\sqrt{R}}\left(\frac{1}{2R}-\kappa_2\right), \\ &\left(\Gamma_{vv}^+(\bu,\bv)\right)^{(0)}=\frac{1}{2\sqrt{R}}\left(\frac{1}{2R}-\kappa_2\right)\\
			& \left(\Gamma_{uv}^-(\bu,\bv)\right)^{(0)}=\left(\Gamma_{vu}^-(\bu,\bv)\right)^{(0)}= \left(\Gamma_{uv}^+(\bu,\bv)\right)^{(0)}=\left(\Gamma_{vu}^+(\bu,\bv)\right)^{(0)}=0. 
		\end{aligned}
	\end{equation}
Let us now define a new Jacobi distance in $\U$, given by 
	\begin{equation}
		\S_{KS}(u_0,v_0,u_1,v_1)=d_{KS}(\Gamma^-(u_0, v_0), \Gamma^+(u_0,v_0))=\mathcal L_{KS}(\underline x(\cdot; \Gamma^-(u_0, v_0), \Gamma^+(u_0,v_0)): 
	\end{equation}
	from Lemma \ref{lem:L_I_to_KS}, it holds $\S_I(u_0,v_0,u_1,v_1)=2 \S_{KS}(u_0,v_0, u_1, v_1)$, and the derivatives of $\S_{KS}$ are in the same form already found for $\S_{E\backslash I}$ in Eqs. \eqref{eq:der1S}. \\
	Let us now compute the homogeneous derivative $\partial_{1,u}^2S_{KS}(\bu,\bv, \bu, \bv)$; all the other derivatives can be found following the same reasoning: 
	\begin{equation}
		\partial_{1,u}\S_{KS}(u_0, v_0, u_1, v_1)=-\frac{1}{\sqrt{2}}\scp{\dot{\un x}\left(-\tilde T; \Gamma^-(u_0, v_0), \Gamma^+(u_1, v_1)\right), \Gamma^-_u(u_0, u_1)}.
	\end{equation}
	Let us denote with $e_u^-\eqdef \Gamma^-(u_0, v_0)/|\Gamma^-(u_0, v_0)|$: deriving the above equation in $u_0$, one finds 
	\begin{equation}\label{eq:der2uu}
		\begin{aligned}
		\partial_{1,u}^2\S_{KS}(u_0, v_0, u_1, v_1)=&-\frac{1}{\sqrt{2}}|\Gamma^-(u_0, v_0)|^2\scp{\frac{d}{d\tau}\left(\partial_{e_u^-}\un x \left(\tau; \Gamma^-(u_0, v_0), \Gamma^+(u_1, v_1) \right)\right)_{|\tau=-\tilde T}, \Gamma^-_u(u_0, u_1)}\\
		& -\frac{1}{\sqrt{2}} \scp{\dot{\un x}(-\tilde T; \Gamma^-(u_0, v_0), \Gamma^+(u_1, v_1)), \Gamma^-_{uu}(u_0, u_1)}, 
		\end{aligned}
	\end{equation}
	where $\partial_{e_u^-}\un x \left(\tau; ; \Gamma^-(u_0, v_0), \Gamma^+(u_1, v_1) \right)$ is the variation of the solution of the fixed-ends regularised problem moving the first point in the direction of $e_u^-$.  By standard variational arguments and direct computations (see \cite{deblasiterracinirefraction} for more details), one can find, in the homothetic, 
	\begin{equation}\label{eq:variazioni}
		\begin{aligned}
			& \underline{\bar x}(-\tilde T)=-2\sqrt{R}\in \mathbb U, \quad \underline{\bar x}(\tilde T)=2\sqrt{R}\in \mathbb U,
			\  \underline{\dot{\bar x}}(-\tilde T)=\underline{\dot{\bar x}}(\tilde T)= \sqrt{2 \H_{KS}+\Omega^2R}\in \mathbb U\\
			&\frac{d}{d\tau}\left(\partial_{e_u^-}\underline{{\bar x}}(\tau)\right)_{|_{\tau=-\tilde T}}=-\frac{\H_{KS}+\Omega^2R}{\sqrt{R}\sqrt{2 \H_{KS}+\Omega^2R}}\Gamma_u^-(\bu,\bv), \\ 
			&\frac{d}{d\tau}\left(\partial_{e_v^-}\underline{{\bar x}}(\tau)\right)_{|_{\tau=-\tilde T}}=-\frac{\H_{KS}+\Omega^2R}{\sqrt{R}\sqrt{2 \H_{KS}+\Omega^2R}}\Gamma_v^-(\bu,\bv)\\
			&\frac{d}{d\tau}\left(\partial_{e_u^+}\underline{{\bar x}}(\tau)\right)_{|_{\tau=-\tilde T}}=-\frac{\H_{KS}}{\sqrt{R}\sqrt{2 \H_{KS}+\Omega^2R}}\Gamma_u^+(\bu,\bv)\\
			&\frac{d}{d\tau}\left(\partial_{e_v^+}\underline{{\bar x}}(\tau)\right)_{|_{\tau=-\tilde T}}=\frac{\H_{KS}}{\sqrt{R}\sqrt{2 \H_{KS}+\Omega^2R}}\Gamma_v^+(\bu,\bv)\\
			&\frac{d}{d\tau}\left(\partial_{e_u^-}\underline{{\bar x}}(\tau)\right)_{|_{\tau =\tilde T}}=-\frac{\H_{KS}}{\sqrt{R}\sqrt{2 \H_{KS}+\Omega^2R}}\Gamma_u^-(\bu,\bv)\\
			&\frac{d}{d\tau}\left(\partial_{e_v^-}\underline{{\bar x}}(\tau)\right)_{|_{\tau=\tilde T}}=-\frac{\H_{KS}}{\sqrt{R}\sqrt{2 \H_{KS}+\Omega^2R}}\Gamma_v^-(\bu,\bv)\\
			&\frac{d}{d\tau}\left(\partial_{e_u^+}\underline{{\bar x}}(\tau)\right)_{|_{\tau=\tilde T}}=\frac{\H_{KS}+\Omega^2R}{\sqrt{R}\sqrt{2 \H_{KS}+\Omega^2R}}\Gamma_u^+(\bu,\bv)\\
			&\frac{d}{d\tau}\left(\partial_{e_v^+}\underline{{\bar x}}(\tau)\right)_{|_{\tau=\tilde T}}=\frac{\H_{KS}+\Omega^2R}{\sqrt{R}\sqrt{2 \H_{KS}+\Omega^2R}}\Gamma_v^+(\bu,\bv), 
		\end{aligned}
	\end{equation}  
	where $e_v^\pm$, $e_u^+$ are defined analogously to $e_u^-$. Inserting \eqref{eq:variazioni} into \eqref{eq:der2uu}, and computing the other second derivatives, one finally obtains Eqs. \eqref{eq:der2S}. 
\end{proof}


\end{document}